\documentclass[pagebackref]{article}
\usepackage[a4paper]{geometry}

\usepackage[utf8]{inputenc}
\usepackage[T1]{fontenc}
\usepackage{xcolor}
\usepackage{amssymb,amsmath}

  \usepackage{amsthm}

\usepackage{dsfont} %
\usepackage{graphicx}
\usepackage{booktabs}
\usepackage{tabularx}
\usepackage{multicol}
\usepackage[ruled,vlined,linesnumbered]{algorithm2e}

\bibliographystyle{plainnat}%

\usepackage{tikz}
\usetikzlibrary{calc,shapes,positioning}

\usepackage{mathtools}

\usepackage{xargs} 
\newcommandx{\set}[2][1=1]{\ensuremath{\{#1,\ldots,#2\}}}
\newcommandx{\tlog}[3][1=,3=]{\log_{#1}^{#3}(#2)}

\usepackage[numbers,sort&compress]{natbib} 
\usepackage[linkcolor=red!50!black,citecolor=green!40!black,colorlinks]{hyperref}
\usepackage[capitalize,nameinlink]{cleveref}

  \newtheorem{theorem}{Theorem}
  \newtheorem{lemma}{Lemma}
  \newtheorem{proposition}{Proposition}
  \newtheorem{corollary}{Corollary}
  \newtheorem{observation}{Observation}
  \newtheorem{fact}{Fact}
  \theoremstyle{definition}
  \newtheorem{definition}{Definition}
  \newtheorem{problem}{Problem}

  \newtheorem{construction}{Construction}
\crefname{observation}{Observation}{Observations}
\crefname{rrule}{Reduction Rule}{Reduction Rules}
\crefname{construction}{Construction}{Constructions}
\Crefname{proposition}{Prop.}{Props.}
\crefname{proposition}{Proposition}{Propositions}
\Crefname{theorem}{Thm.}{Thm.}
\crefname{theorem}{Theorem}{Theorems}
\Crefname{corollary}{Cor.}{Cors.}
\crefname{corollary}{Corollary}{Corollaries}
\Crefname{fact}{Fact}{Facts}
\crefname{fact}{Fact}{Facts}
\crefname{figure}{Figure}{Figures}

\newcommand{\I}{\mathcal{I}}
\newcommand{\yes}{\texttt{yes}}

\newcommand{\RD}{$(\Rightarrow)\quad$}
\newcommand{\LD}{$(\Leftarrow)\quad$}
\newcommand{\tss}[1]{\textsuperscript{#1}}

\crefname{problem}{Problem}{Problems}
\Crefname{problem}{Prob.}{Probs.}
\newcommandx{\decprob}[6][3=Input,5=Question]{
  \begingroup
  \par\noindent\nopagebreak[4]
  \begin{problem}\label{prob:#2}\vspace{-0.75em}\colorbox{gray!17!white}{\textsc{#1}\index{problem!#1}}\nopagebreak[4]\end{problem}\nopagebreak[4]\vspace{-0.7em}
  \par\noindent\hangindent=\parindent\textbf{#3}:  #4\nopagebreak[4]
  \par\noindent\hangindent=\parindent\textbf{#5}:  #6
  \par\bigskip
  \endgroup
}

\newcommand{\N}{\mathbb{N}}
\newcommand{\Nzero}{\mathbb{N}_0}

\renewcommand{\O}{\mathcal{O}}

\newcommand{\prob}[1]{\textnormal{\textsc{#1}}}
\newcommand{\fvsTsc}{\prob{Feedback Vertex Set}}
\newcommand{\fvsAcr}{\prob{FVS}}
\newcommand{\tcTsc}{\prob{3-Coloring}}

\newcommand{\isTsc}{\prob{Independent Set}}

\newcommand{\cocl}[1]{\ensuremath{\operatorname{#1}}}

\newcommand{\NP}{\cocl{NP}}
\newcommand{\cqed}{\hfill$\diamond$}

\newcommand{\tref}[1]{\tss{(\Cref{#1})}}

\newcommand{\setto}{\ensuremath{\leftarrow}}

\newcommand{\eps}{\ensuremath{\varepsilon}}

\newcommand{\ceq}{\ensuremath{\coloneqq}}

\newcommand{\lqed}{}

\newcommand{\ceil}[1]{\lceil#1\rceil}

\definecolor{lilla}{HTML}{750787}

\usepackage{etoolbox}

\newcommand{\ExternalLink}{%
  \tikz[x=1.2ex, y=1.2ex, baseline=-0.05ex]{%
    \begin{scope}[x=1ex, y=1ex]
        \clip (-0.1,-0.1) 
            --++ (-0, 1.2) 
            --++ (0.6, 0) 
            --++ (0, -0.6) 
            --++ (0.6, 0) 
            --++ (0, -1);
        \path[draw, 
            line width = 0.5, 
            rounded corners=0.5] 
            (0,0) rectangle (1,1);
    \end{scope}
    \path[draw, line width = 0.5] (0.5, 0.5) 
        -- (1, 1);
    \path[draw, line width = 0.5] (0.6, 1) 
        -- (1, 1) -- (1, 0.6);
  }
}

\newcommand{\cfko}[3]{
    \node (c#1x) at (0+#2*\xr,0+#3*\yr)[xnode]{};
    \node (a#1x1) at (-0.5*\xr+#2*\xr,0.5*\yr+#3*\yr)[xnode]{};
    \node (a#1x2) at (0.5*\xr+#2*\xr,0.5*\yr+#3*\yr)[xnode]{};
    \node (a#1x3) at (0.5*\xr+#2*\xr,-0.5*\yr+#3*\yr)[xnode]{};
    \node (a#1x4) at (-0.5*\xr+#2*\xr,-0.5*\yr+#3*\yr)[xnode]{};
    \foreach \x in {1,...,4}{\draw[xedge] (c#1x) -- (a#1x\x);}
    }
\newcommand{\cfkoE}[1]{
\draw[xedge] (a#1x1) -- (a#1x2) -- (a#1x3) -- (a#1x4) -- (a#1x1);
}
\newcommand{\theR}[3]{%
  \cfko{#1}{#2}{#3};
  \cfkoE{#1};
  \node (xp#1x) at (-1*\xr+#2*\xr,0+#3*\yr)[xnode]{};
  \node (yp#1x) at (1*\xr+#2*\xr,0+#3*\yr)[xnode]{};
  \draw[xedge] (a#1x1) -- (xp#1x) -- (a#1x4);
  \draw[xedge] (a#1x2) -- (yp#1x) -- (a#1x3);
  \draw[xedge] (xp#1x) to [out=90,in=90,looseness=1.4](yp#1x);
  \node (x#1x) at (-1.5*\xr+#2*\xr,0+#3*\yr)[xnode]{};
  \node (y#1x) at (1.5*\xr+#2*\xr,0+#3*\yr)[xnode]{};
  \draw[xedge] (x#1x) -- (xp#1x);
  \draw[xedge] (y#1x) -- (yp#1x);
}

\newcommand{\theRcut}[3]{%
  \cfko{#1}{#2}{#3};
  \cfkoE{#1};
  \node (xp#1x) at (-1*\xr+#2*\xr,0+#3*\yr)[xnode]{};
  \node (yp#1x) at (1*\xr+#2*\xr,0+#3*\yr)[xnode]{};
  \draw[xedge] (a#1x1) -- (xp#1x) -- (a#1x4);
  \draw[xedge] (a#1x2) -- (yp#1x) -- (a#1x3);
  \draw[xedge] (xp#1x) to [out=90,in=90,looseness=1.4](yp#1x);
}

\newcommand{\theL}[4]{%
    \cfko{#1}{#3}{#4};
    \cfkoE{#1};

    \cfko{#2}{1.5+#3}{#4};
    \cfkoE{#2};

    \draw[xedge] (a#2x1) -- (a#1x2);
    \draw[xedge] (a#2x4) -- (a#1x3);
    \node (x#1x) at (-1*\xr+#3*\xr,0+#4*\yr)[xnode]{};
    \node (y#2x) at (2.5*\xr+#3*\xr,0+#4*\yr)[xnode]{};
    \draw[xedge] (a#1x1) -- (x#1x) -- (a#1x4);
    \draw[xedge] (a#2x2) -- (y#2x) -- (a#2x3);
}

\newcommand{\theLpath}[2]{%
  \draw[xpath] (x#1x) to (a#1x1) to (a#1x4) to (c#1x) to (a#1x2) to (a#1x3) to (a#2x4) to (a#2x1) to (c#2x) to (a#2x3) to (a#2x2) to (y#2x);
}

\newcommand{\tikzpramble}{%
  \def\teps{0.33}
  \tikzstyle{xnode}=[circle,scale=1/2,draw,fill=white];
  \tikzstyle{xnodex}=[circle,fill,scale=1/2,draw];
  \tikzstyle{xnodey}=[diamond,fill,scale=1/2,draw];
  \tikzstyle{xedge}=[thick,-];
  \tikzstyle{xedgedot}=[thick,-,dotted];
  \tikzstyle{xpath}=[color=blue,opacity=0.2,line cap=round,line width=5pt];
  \tikzstyle{xpathx}=[color=magenta,opacity=0.2,line cap=round,line width=5pt];
  \tikzstyle{xpathy}=[color=cyan,opacity=0.2,line cap=round,line width=5pt];
  \tikzstyle{xhili}=[circle,scale=1.25,opacity=0.25,fill,color=orange,draw];
  \tikzstyle{xhiliIS}=[circle,scale=1.25,opacity=0.25,fill,color=magenta,draw];
  \tikzstyle{xxhili}=[scale=1.25,opacity=0.25,fill,color=cyan,draw];
}

\newcommand{\theD}[3]{
      \node (x#1x1) at (-1.75*\xr+#2*\xr,-0.5*\yr+#3*\yr)[xnode]{};
      \node (y#1x1) at (1.75*\xr+#2*\xr,-0.5*\yr+#3*\yr)[xnode]{};
      \node (a#1x1) at (-1.25*\xr+#2*\xr,0+#3*\yr)[xnode]{};
      \node (a#1x2) at (0*\xr+#2*\xr,2*\yr+#3*\yr)[xnode]{};
      \node (a#1x3) at (1.25*\xr+#2*\xr,0*\yr+#3*\yr)[xnode]{};
      \draw[xedge] (x#1x1) -- (a#1x1);
      \draw[xedge] (a#1x1) to [out=90,in=180](a#1x2);
      \draw[xedge] (a#1x2) to [out=0,in=90](a#1x3);
      \draw[xedge] (a#1x3) -- (y#1x1);
      
      \node (c#1x1) at (0*\xr+#2*\xr,0*\yr+#3*\yr)[xnode]{};
      \node (c#1x2) at (-0.75*\xr+#2*\xr,0.5*\yr+#3*\yr)[xnode]{};
      \node (c#1x3) at (-0.75*\xr+#2*\xr,1*\yr+#3*\yr)[xnode]{};
      \node (c#1x4) at (-0.0*\xr+#2*\xr,1.5*\yr+#3*\yr)[xnode]{};
      \node (c#1x5) at (0.75*\xr+#2*\xr,1*\yr+#3*\yr)[xnode]{};
      \node (c#1x6) at (0.75*\xr+#2*\xr,0.5*\yr+#3*\yr)[xnode]{};
      
      \draw[xedge] (a#1x1) -- (c#1x1) -- (c#1x2) -- (c#1x3)-- (c#1x4)-- (c#1x5)-- (c#1x6)-- (c#1x1)-- (a#1x3);
      
      \node (b#1x1) at (0+#2*\xr,0.5*\yr+#3*\yr)[xnode]{};
      \node (b#1x2) at (-0.25+#2*\xr,1*\yr+#3*\yr)[xnode]{};
      \node (b#1x3) at (0.25+#2*\xr,1*\yr+#3*\yr)[xnode]{};
      \draw[xedge] (b#1x1) -- (b#1x2) -- (b#1x3) -- (b#1x1);
      
      \foreach\x in {c#1x2,c#1x3}{\draw[xedge] (a#1x1) -- (\x);}
      \foreach\x in {c#1x3,c#1x4,c#1x5}{\draw[xedge] (a#1x2) -- (\x);}
      \foreach\x in {c#1x6,c#1x5}{\draw[xedge] (a#1x3) -- (\x);}
      
      \foreach\x in {c#1x1,c#1x2,c#1x6}{\draw[xedge] (b#1x1) -- (\x);}
      \foreach\x in {c#1x2,c#1x3,c#1x4}{\draw[xedge] (b#1x2) -- (\x);}
      \foreach\x in {c#1x4,c#1x5,c#1x6}{\draw[xedge] (b#1x3) -- (\x);}
}

\newcommand{\theDpath}[1]{%
  \draw[xpath] (x#1x1) to (a#1x1) to [out=90,in=180](a#1x2) to (c#1x4) to (c#1x3) to (b#1x2) to (b#1x3) to (c#1x5) to (c#1x6) to (b#1x1) to (c#1x2) to (c#1x1) to (a#1x3) to (y#1x1);
}

\newcommand{\Kk}[5]{%
  \def\xsc{1.25}
  \node (a#1x1) at (0+#2*\xr*\xsc+#4*\xr,-0*\yr*\xsc+#3*\yr*\xsc+#5*\yr)[xnode]{};
  \node (a#1x2) at (-0.66*\xr*\xsc+#2*\xr*\xsc+#4*\xr,-0.4*\yr*\xsc+#3*\yr*\xsc+#5*\yr)[xnode]{};
  \node (a#1x3) at (0.66*\xr*\xsc+#2*\xr*\xsc+#4*\xr,-0.6*\yr*\xsc+#3*\yr*\xsc+#5*\yr)[xnode]{};
  \node (a#1x4) at (-0.3*\xr*\xsc+#2*\xr*\xsc+#4*\xr,-0.8*\yr*\xsc+#3*\yr*\xsc+#5*\yr)[xnode]{};
  \node (a#1x5) at (0.2*\xr*\xsc+#2*\xr*\xsc+#4*\xr,-1*\yr*\xsc+#3*\yr*\xsc+#5*\yr)[xnode]{};
  \foreach\x in{1,...,4}
  \foreach\y in{2,...,5}\draw[xedge,gray] (a#1x\x) -- (a#1x\y);
  \node at (0*\xr+#2*\xr*\xsc+#4*\xr,-0.5*\yr*\xsc+#3*\yr*\xsc+#5*\yr)[fill=white]{$K_p$};
}
    
\newcommand{\theYk}[3]{
      \Kk{#1A}{-0.9}{0}{#2}{#3};
      \Kk{#1B}{1}{0}{#2}{#3};
      \foreach\x in{1,...,5}\draw[xedge] (a#1Ax\x) -- (a#1Bx\x);
      
      \node (x#1xp) at (-1*\xr+#2*\xr,-1.75*\yr+#3*\xr)[xnode]{};
      \foreach\x in{1,...,5}\draw[xedge] (x#1xp) -- (a#1Ax\x);
      \node (x#1x) at (-1*\xr+#2*\xr,-2.25*\yr+#3*\xr)[xnode]{};
      \node (y#1xp) at (1*\xr+#2*\xr,-1.75*\yr+#3*\xr)[xnode]{};
      \foreach\x in{1,...,5}\draw[xedge] (y#1xp) -- (a#1Bx\x);
      \node (y#1x) at (1*\xr+#2*\xr,-2.25*\yr+#3*\xr)[xnode]{};
      \draw[xedge] (x#1xp) -- (x#1x);
      \draw[xedge] (y#1xp) -- (y#1x);
}

\newcommand{\theYkpath}[1]{
  \draw[xpath] (x#1x) to (x#1xp) to (a#1Ax3) to (a#1Ax5) to (a#1Ax4) to (a#1Ax2) to (a#1Ax1) to (a#1Bx1) to (a#1Bx3) to (a#1Bx5) to (a#1Bx4) to (a#1Bx2) to (y#1xp) to (y#1x);
}

\newcommand{\Grid}[7]{
  \foreach\x in {0,...,#2}{
    \foreach \y in {0,...,#3}{
      \node (#1\x\y) at (\x*\xr+#4*\xr,\y*\yr+#5*\yr)[xnode,#6]{};
    }
  }
  \pgfmathsetmacro\yx{int(#3 - 1)}
  \foreach \x in {0,...,#2}
    \foreach \y [count=\yi] in {0,...,\yx}  
      \draw[#7] (#1\x\y)--(#1\x\yi) ;
  \pgfmathsetmacro\yx{int(#2 - 1)}
  \foreach \x in {0,...,#3}
    \foreach \y [count=\yi] in {0,...,\yx}  
      \draw[#7] (#1\y\x)--(#1\yi\x) ;
}

\newcommand{\mytitle}{Feedback Vertex Set on Hamiltonian Graphs}

\makeatletter
  \def\abstractname{Abstract.}
  \renewenvironment{abstract}{%
      \if@twocolumn
        \section*{\abstractname}%
      \else
        \small
        \quotation
	\noindent{\bfseries\abstractname}%
      \fi}
      {\if@twocolumn\else\endquotation\fi}
\makeatother
\setcounter{secnumdepth}{2}

\title{\Large \bf \mytitle}
\author{Dario~Cavallaro \and Till Fluschnik\footnote{Supported by DFG, project TORE (NI/369-18).}}
\date{\small Technische Universität Berlin, Faculty~IV,\\ Algorithmics and Computational Complexity, Germany.\\\texttt{cavallaro@campus.tu-berlin.de,till.fluschnik@tu-berlin.de}}

\begin{document}

\maketitle

\begin{abstract}
We study the computational complexity of 
\fvsTsc{} on subclasses of Hamiltonian graphs.
In particular,
we consider Hamiltonian graphs that are regular or
are planar and regular.
Moreover,
we study the less known class of~$p$-Hamiltonian-ordered graphs,
which are graphs that admit for any $p$-tuple of vertices
a Hamiltonian cycle visiting them in the order given by the tuple.
We prove that \fvsTsc{} remains \NP-hard in these restricted cases,
even if a Hamiltonian cycle is additionally given as part of the input.

\medskip
\noindent
\emph{Keywords.}
planar graphs,
  regular graphs,
  ordered graphs,
  Hamiltonian-ordered graphs,
  connected graphs.
\end{abstract}

\section{Introduction}

Hamiltonian graphs are graphs
admitting a cycle that visits every vertex
(exactly once).
We study the computational complexity 
of the following 
classic \NP-complete~\cite{Karp72}
problem on subclasses of Hamiltonian graphs.

\decprob{\fvsTsc{} (\fvsAcr{})}{fvs}
{An undirected graph~$G=(V,E)$ and an integer~$k\in\Nzero$.}
{Is there~$U\subseteq V$ with~$|U|\leq k$ such that~$G-U$ is acyclic?}

\noindent
We additionally restrict Hamiltonian graphs 
to be
planar 
(can be drawn on the two-dimensional plane 
with no two edges crossing except at their endpoints)
or
regular 
(every vertex has the same degree).
In particular,
we study the classes of 4-regular planar Hamiltonian graphs
and of 5-regular planar Hamiltonian graphs
(recall that there is no 6-regular planar graph).
Moreover,
we consider the class of~$p$-Hamiltonian-ordered graphs~\cite{NgS97}.
These Hamiltonian graphs admit for each $p$-tuple~$(x_1,\dots,x_p)$ of vertices
a Hamiltonian cycle that visits the vertices~$x_1,\dots,x_p$ in this order.
The class of $p$-Hamiltonian-ordered graphs form a subclass of $p$-ordered Hamiltonian graphs,
the latter being Hamiltonian graphs that for any~$p$-tuple~$(x_1,\dots,x_p)$
admit a cycle that visits~$x_1,\dots,x_p$ in this order.
The class of $p$-ordered Hamiltonian graphs form a subclass of~$(p-1)$-connected Hamiltonian graphs
(graphs in which every pair of vertices is connected via~$(p-1)$ internally vertex-disjoint paths).
Finally,
for \fvsTsc{} on these subclasses of Hamiltonian graphs,
we also study the more restricted case when a Hamiltonian cycle 
is additionally provided in the input
(recall that computing a Hamiltonian cycle is \NP-complete in general~\cite{Karp72}).

\paragraph*{Related Work.}

\isTsc{} 
remains \NP-complete on
3- and 4-regular Hamiltonian graphs~\cite{FleischnerSS10},
which enabled to prove \NP-hardness for a temporal graph problem with two layers~\cite{FluschnikNRZ19}.
\tcTsc{} remains \NP-complete on 4- and 5-regular planar graphs~\cite{Dailey80},
and on 4-regular Hamiltonian graphs~\cite{FleischnerS03}.
\fvsTsc{} remains \NP-complete on planar graphs of maximum degree four~\cite{Speckenmeyer88}
and
is polynomial-time solvable on 
maximum degree-three~\cite{UenoKG88} and
3-regular graphs~\cite{LiL99},
chordal graphs,
permutation graphs,
split graphs~\cite{FestaPR99}.

\paragraph*{Our Contributions.}
\cref{fig:results} gives an overview of our results.
\begin{figure}[t!]
 \centering
  \begin{tikzpicture}
 
    \usetikzlibrary{patterns,backgrounds,shapes}
    \usetikzlibrary{calc}

      \def\yr{1.1}
      \def\xr{1.05}

      \def\boxw{2.9}
    \def\boxh{1.75}

    \def\colNP{orange!50!red!18!white}
    \def\colP{green!20!white}
    \def\colOpen{blue!20!white}
    
      \def\fsres{\footnotesize}
      \def\fsresx{\scriptsize}
    
    \def\bwA{1}

    \tikzstyle{xarc}=[->,gray,thick,>=latex,rounded corners]
    
    \newcommand{\gbox}[6]{
      \node (a#1) at (#2)[rectangle, rounded corners, minimum width=\xr*\boxw cm, minimum height=\yr*\boxh cm,fill=lightgray,very thick,draw]{};
      \node (axx#1) at (a#1.north)[anchor=north, rectangle, rounded corners, minimum width=\xr*\boxw cm, minimum height=0.33*\yr*\boxh cm,fill=white!95!black,draw]{{\small Hamiltonian}};
      \node (ax#1) at (axx#1.south)[anchor=north, rectangle, rounded corners, minimum width=\xr*\boxw cm, minimum height=0.33*\yr*\boxh cm,fill=white,very thick,draw]{{\small#3}};
      \node (bn#1) at (a#1.south west)[anchor=south west,rectangle, rounded corners, minimum width=\xr*\boxw*\bwA cm, minimum height=0.33*\yr*\boxh cm,align=center,font=\fsresx,fill=#5,text width= \xr*\boxw*\bwA*0.9 cm,draw]{#4};
      \draw[rounded corners,thick] (bn#1.south west) rectangle (a#1.north east);
    }

    \gbox{h}{0.0*\xr,0*\yr}{}{\NP-hard}{\colNP};

    \def\ysh{2.25}
    \def\xshA{1.675}
    \def\xshB{5}
    \gbox{conn}{-\xshB*\xr,-0.5*\ysh*\yr}{$(p-1)$-connected}{\NP-hard\tref{cor:ordconhard}}{\colNP};
    \gbox{3reg}{-\xshA*\xr,-\ysh*\yr}{$\leq3$-regular}{Poly-time\tss{\cite{UenoKG88,LiL99}}}{\colP};
    \gbox{reg}{\xshA*\xr,-\ysh*\yr}{$\geq 4$-regular}{\NP-hard\tref{thm:fvs:kreg}}{\colNP};
    \gbox{ord}{-\xshB*\xr,-1.5*\ysh*\yr}{$p$-ordered}{\NP-hard\tref{cor:ordconhard}}{\colNP};
    \gbox{plan}{\xshB*\xr,-0.5*\ysh*\yr}{planar}{\NP-hard}{\colNP};
    \gbox{ordham}{-\xshB*\xr,-2.5*\ysh*\yr}{$p$-Hamilt.-ordered}{\NP-hard\tref{thm:fvs:pho}}{\colNP};
    \gbox{plan3reg}{-\xshA*\xr,-2.5*\ysh*\yr}{planar $\leq3$-regular}{Poly-time\tss{\cite{UenoKG88,LiL99}}}{\colP};
    \gbox{plan5reg}{\xshB*\xr,-2.5*\ysh*\yr}{planar 5-regular}{\NP-hard\tref{thm:fvs:5regplanham}}{\colNP};
    \gbox{plan4reg}{\xshA*\xr,-2.5*\ysh*\yr}{planar 4-regular}{\NP-hard\tref{thm:fvs:4regplanaHam}}{\colNP};
    \draw[xarc] (aord) to node[midway,right]{\cite{NgS97}}(aconn);
    \draw[xarc] (aordham) to node[midway,right]{\Cref{fact:phopoh}}(aord);
    \draw[xarc] (aplan3reg) to [out=90,in=-90](aplan);
    \draw[xarc] (aplan3reg) to (a3reg);
    \draw[xarc] (aplan4reg) to (aplan);
    \draw[xarc] (aplan4reg) to (areg);
    \draw[xarc] (aplan5reg) to (aplan);%
    \draw[xarc] (aplan5reg) to (areg);%
    
    \draw[xarc] (a3reg) to (ah);
    \draw[xarc] (areg) to (ah);
    \draw[xarc] (aconn) to (ah);
    \draw[xarc] (aplan) to
    (ah);

  \end{tikzpicture}
  \caption{Overview of our results.
  In each box,
  the lowest level describes the computational complexity 
  (\NP-hard versus polynomial-time) 
  of \fvsTsc{} on the graph class described by the two upper layers.
  An arrow from a box~$A$ to a box~$B$ describes that~$A$'s graph class is included in $B$'s graph class.
  All shown \NP-hardness results hold true
  even if a Hamiltonian cycle is provided as part of the input.
  }
  \label{fig:results}
\end{figure}
We prove that \fvsTsc{} is
\NP-hard on 4- and 5-regular planar Hamiltonian graphs
as well as on~$p$-regular Hamiltonian graphs for every~$p\geq 4$.
Moreover,
we prove that \fvsTsc{} is \NP-hard on $p$-Hamiltonian-ordered graphs for every~$p\geq 3$,
which implies \NP-hardness on~$p$-ordered Hamiltonian graphs
and further \NP-hardness on~$(p-1)$-connected Hamiltonian graphs.
Finally,
all our \NP-hardness results still hold true
if a Hamiltonian cycle is additionally provided as part of the input.

\section{Preliminaries}
\label{sec:prelims}

We denote by~$\N$ and~$\Nzero$ the natural numbers excluding and including zero,
respectively.
We use basic notations from graph theory~\cite{Diestel10,balakrishnan2012textbook}.

  \paragraph*{Graph Theory.}

For two graphs~$G,H$,
we denote by~$G*H$ the graph with vertex set~$V(G)\cup V(H)$
and edge set~$E(G)\cup E(H)\cup \{\{v,w\}\mid v\in V(G),w\in V(G)\}$.
We denote by~$K_n$ the complete graph on~$n\in\N$ vertices.
We denote by~$C_n$ the cycle on~$n\in\N$ vertices.
The neighborhood~$N_G(v)$ of a vertex~$v\in V$ in~$G$
is the vertex set~$\{w\in V\mid \{v,w\}\in E\}$.
Let~$v,w$ be two distinct vertices in~$G=(V,E)$.
The graph obtained by \emph{identifying~$v$ with~$w$}
has vertex set~$(V\setminus\{v,w\})\cup\{vw\}$,
where~$vw$ is a new vertex,
and
edge set~$(E\setminus\{e\in E\mid \{v,w\}\cap e\neq\emptyset\})\cup\{\{vw,x\}\mid x\in (N_G(v)\cup N_G(w))\setminus\{v,w\}\}$.

  \paragraph*{Hamiltonian Graphs and Subclasses.}

A graph~$G$ is~$p$-ordered
if for every sequence~$v_1,\dots,v_p$ of distinct vertices of~$G$ 
there exists a cycle~$C$ in~$G$ 
that encounters the vertices~$v_1,\dots,v_p$ in this order.
A graph~$G$ is called~$p$-Hamiltonian-ordered
if for every sequence~$v_1,\dots,v_p$ of distinct vertices of~$G$ there
exists an Hamiltonian cycle~$C$ 
that encounters the vertices~$v_1,\dots,v_p$ in this order.
Clearly:

\begin{fact}
 \label{fact:phopoh}
 Every $p$-Hamiltonian-ordered graph is~$p$-ordered Hamiltonian.
\end{fact}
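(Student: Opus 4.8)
The plan is to simply unfold the two definitions and observe that a Hamiltonian cycle is, in particular, a cycle. Concretely, let $G$ be a $p$-Hamiltonian-ordered graph, and let $v_1,\dots,v_p$ be an arbitrary sequence of distinct vertices of $G$. By definition there is a Hamiltonian cycle $C$ of $G$ that encounters $v_1,\dots,v_p$ in this order. Since $C$ is a cycle in $G$ with the required ordering property, this already witnesses that $G$ is $p$-ordered. Moreover, the mere existence of such a $C$ (for, say, any fixed choice of $p$ distinct vertices, assuming $|V(G)|\ge p$) shows that $G$ is Hamiltonian. Hence $G$ is $p$-ordered and Hamiltonian, i.e.\ $p$-ordered Hamiltonian.

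The only point that requires a word is the degenerate case $|V(G)|<p$: then there is no sequence of $p$ distinct vertices at all, so the $p$-Hamiltonian-ordered condition is vacuous and carries no Hamiltonicity information; as is standard, one restricts attention to graphs with at least $p$ vertices, where the argument above goes through verbatim. I do not expect any genuine obstacle here — the statement is purely definitional, which is exactly why it is flagged as ``Clearly'' in the text; the write-up should therefore be a one- or two-sentence unfolding of the definitions rather than anything substantive.
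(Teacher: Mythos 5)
Your proposal is correct and matches the paper, which gives no explicit argument beyond ``Clearly'': the fact is an immediate unfolding of the definitions, since a Hamiltonian cycle witnessing the $p$-Hamiltonian-ordered property is in particular a cycle (giving $p$-orderedness) and its existence gives Hamiltonicity. Your remark on the degenerate case $|V(G)|<p$ is a reasonable extra precaution but not something the paper addresses.
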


\paragraph*{Graph tool box.}
We use several graphs as gadgets 
for our \NP-hardness reductions
that we collect in this ``graph tool box'' 
(see~\cref{fig:gtb}).
\begin{figure}[t]
 \centering
  \begin{tikzpicture}

    \def\xr{1}
    \def\yr{1}
    \tikzpramble{};
    
    \begin{scope}
      \theR{1}{0}{0};
      \node at (x1x)[xnode,label=90:{$x$}]{};
      \node at (y1x)[xnode,label=90:{$y$}]{};
      \node at (xp1x)[xnode,label=-90:{$x'$}]{};
      \node at (yp1x)[xnode,label=-90:{$y'$}]{};
      
      \node at (-1.75*\xr,1*\yr)[]{(a)}; 
      \draw[xpath] (x1x) to (xp1x) to (a1x1) to (a1x4) to (c1x) to (a1x2) to (a1x3) to (yp1x) to (y1x);
      \node at (xp1x)[xhili]{};
      \node at (c1x)[xhili]{};
      \node at (a1x3)[xhili]{};
      \node at (0,-1*\yr)[]{Graph~$R$};
    \end{scope}

    \begin{scope}[xshift=5*\xr cm]
    
    \theL{1}{2}{0}{0}
    \node at (x1x)[label=90:{$x$}]{};
    \node at (y2x)[label=90:{$y$}]{};
    \node at (-1.25*\xr,1*\yr)[]{(b)}; 
    \theLpath{1}{2}
    \node at (a1x1)[xhili]{};
    \node at (a1x3)[xhili]{};
    \node at (a2x1)[xhili]{};
    \node at (a2x3)[xhili]{};
    \node at (0.75*\xr,-1*\yr)[]{Graph~$L$};
    \end{scope}

    \begin{scope}[xshift=6.0*\xr cm,yshift=-1.75*\yr cm]
      \theYk{a}{0}{0}
      \node at (xaxp)[label=180:{$x'$}]{};
      \node at (xax)[label=180:{$x$}]{};
      \node at (yaxp)[label=180:{$y'$}]{};
      \node at (yax)[label=0:{$y$}]{};
      
      \node at (-2.25*\xr,0.25*\yr)[]{(d)}; 
      \node at (0.0*\xr,-2.75*\yr)[]{Graph~$Y_p$};
      \theYkpath{a}
      \node at (xaxp)[xhili]{};
      \node at (aaAx1)[xhili]{};
      \node at (aaAx2)[xhili]{};
      \node at (aaAx3)[xhili]{};
      \node at (yaxp)[xhili]{};
      \node at (aaBx4)[xhili]{};
      \node at (aaBx5)[xhili]{};
      \node at (aaBx1)[xhili]{};
    \end{scope}
    
    \begin{scope}[xshift=0.25*\xr cm,yshift=-3.75*\yr cm]

      \theD{1}{0}{0}
      \node at (x1x1)[label=90:{$x$}]{};
      \node at (y1x1)[label=90:{$y$}]{};
      \node at (a1x1)[label=135:{$x'$}]{};
      \node at (a1x3)[label=45:{$y'$}]{};
      \node at (a1x1)[xhili]{};
      \node at (a1x3)[xhili]{};
      \node at (b1x1)[xhili]{};
      \node at (c1x4)[xhili]{};
      \node at (b1x2)[xhili]{};
      \node at (c1x5)[xhili]{};
      \node at (-2*\xr,2.25*\yr)[]{(c)}; 
      \theDpath{1}
    \node at (0*\xr,-0.75*\yr)[]{Graph~$D$};
    \end{scope}
    \end{tikzpicture}
    \caption{Our graph tool box with (a) the graph~$R$,
    (b) the graph~$L$,
    (c) the graph~$D$,
    and~(d) the graph~$Y_p$.
    For each graph, 
    a Hamiltonian path (blue) 
    as well as a minimum feedback vertex set (orange) are depicted.}
    \label{fig:gtb}
\end{figure}
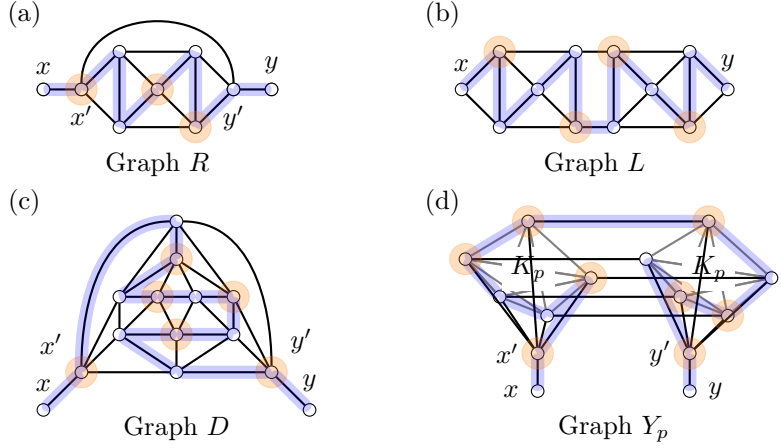

\subparagraph*{The Graph~$R$ (see~\cref{fig:gtb}(a)):}
 Let~$R$ denote the graph obtained from a~$C_4*K_1$ by adding two
 vertices~$x'$ and~$y'$ and making each adjacent with exactly two different vertices of degree three such that all vertices
 except for~$x'$ and~$y'$ have degree four.
 Add vertex~$x$ and make it adjacent with~$x'$,
 and add vertex~$y$ and make it adjacent with~$y'$.
 Finally, make~$x'$ adjacent with~$y'$.
We have the following simple yet useful observation on a~$C_4*K_1$:

\begin{observation}
 \label{obs:c4k1}
 The graph~$C_4*K_1$ admits no feedback vertex set of size one yet one of size two.
\end{observation}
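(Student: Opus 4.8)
Write $C_4*K_1$ as the wheel on five vertices: a rim $4$-cycle $v_1v_2v_3v_4v_1$ together with a hub vertex $c$ joined to all of $v_1,\dots,v_4$. I will argue the two halves of the statement separately.

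For the upper bound I would simply exhibit a feedback vertex set of size two, e.g.\ $\{v_1,v_3\}$: deleting two opposite rim vertices leaves only the hub $c$ and the two vertices $v_2,v_4$, with edges $\{c,v_2\}$ and $\{c,v_4\}$, which is a path and hence acyclic. (The set $\{c,v_1\}$ works just as well, leaving the path $v_2v_3v_4$.)

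For the lower bound I would do the obvious two-case analysis over which single vertex $u$ is deleted, using the vertex-transitivity of the rim to reduce to representatives. If $u=c$, then $C_4*K_1-u=C_4$ still contains a cycle. If $u$ is a rim vertex, say $u=v_1$, then the remaining graph has vertices $c,v_2,v_3,v_4$ and in particular contains the triangle $c\,v_2\,v_3\,c$ (the hub edges to $v_2,v_3$ plus the rim edge $\{v_2,v_3\}$), hence is not acyclic. So no single vertex hits all cycles.

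This is a finite check rather than a real argument, so there is no genuine obstacle; the only thing to be slightly careful about is to phrase the lower bound so that \emph{every} choice of deleted vertex is covered, which the two cases (hub versus rim, the latter handled up to the symmetry of $C_4$) do.
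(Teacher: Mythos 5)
Your proof is correct and complete: the explicit feedback vertex set $\{v_1,v_3\}$ and the two-case analysis (hub deleted leaves $C_4$; rim vertex deleted leaves the triangle on the hub and two consecutive surviving rim vertices) together establish exactly the claim. The paper states this observation without any proof, so your routine finite check is precisely the verification it leaves implicit, and there is nothing to compare beyond that.
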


\subparagraph*{The Graph~$L$ (see~\cref{fig:gtb}(b)):}
 Let~$L$ denote the graph obtained as follows.
 Take two disjoint~$C_4*K_1$s.
 Let~$\{v,w\}$ be an edge of one~$C_4*K_1$ with both~$v,w$ being of degree three,
 and~$\{v',w'\}$ analogously from the other~$C_4*K_1$.
 Make~$v$ adjacent with~$v'$ and~$w$ adjacent with~$w'$.
 Let~$\{x',x''\}$ and~$\{y',y''\}$ denote the two edges with vertices of degree three.
 Add a vertex~$x$ and make it adjacent with~$x',x''$,
 and add a vertex~$y$ and make it adjacent with~$y',y''$.

\noindent
\subparagraph*{The Graph~$D$ (see~\cref{fig:gtb}(c)):}
Let~$D$ denote the graph obtained as follows.
Take a~$C_6$, 
say with vertex set~$\{c_0,\dots,c_5\}$ and edge set~$\{\{c_i,c_{i+1\bmod 6}\}\mid i\in\set[0]{5}\}$.
Add a~$K_3$, 
say with vertex set~$\{v_1,v_2,v_3\}$.
Make~$v_1$ adjacent with~$c_0,c_1,c_5$,
$v_2$ adjacent with~$c_1,c_2,c_3$,
and~$v_3$ adjacent with~$c_3,c_4,c_5$.
Add a vertex~$z$ and make it adjacent with~$c_2,c_3,c_4$.
Add vertices~$x$ and~$x'$,
and make~$x'$ adjacent with~$x,z,c_0,c_1,c_2$.
Finally,
add vertices~$y$ and~$y'$,
and make~$y'$ adjacent with~$y,z,c_0,c_4,c_5$.

\subparagraph*{The Graphs~$Y_p$ (see~\cref{fig:gtb}(d)):}
 Let~$p\in\N$ with~$p\geq 3$.
 Let~$Y_p$ denote the graph obtained as follows.
 Take two disjoint~$A\ceq K_p$ and~$B\ceq K_p$.
 Add a perfect matching between the vertices of~$A$ and~$B$.
 Next, 
 add two vertices~$x,x'$ and make~$x'$ adjacent to all vertices in~$V(A)\cup\{x\}$.
 Finally,
 add two vertices~$y,y'$ and make~$y'$ adjacent to all vertices in~$V(B)\cup\{y\}$.

\begin{definition}[Insertion]
 Let~$G$ be a graph and
 $u,v\in V(G)$.
 An~$H$-insertion at~$u,v$ with~$H\in\{R,L,D\}\cup\bigcup_{p\geq 3}\{Y_p\}$ results in the graph obtained from~$G$
 by adding a copy of~$H$ to~$G$ 
 and identifying~$x$ with~$u$ 
 and~$y$ with~$v$.
\end{definition}

\section{Planar Regular Hamiltonian Graphs}
\label{sec:planreg}

In this section,
we prove that
\fvsTsc{} remains \NP-hard 
on 4-regular planar Hamiltonian graphs 
and 
on 5-regular planar Hamiltonian graphs,
in both cases
even if a Hamiltonian cycle is provided.
We first prove that \fvsAcr{} is \NP-hard on 4-regular planar graphs (\cref{ssec:4regplanar}),
then make the graph Hamiltonian (\cref{ssec:4regplanarham}),
and finally make the graph 5-regular (\cref{ssec:5regplanarham}).

\subsection{4-regular planar}
\label{ssec:4regplanar}

\fvsTsc{} is \NP-hard even on connected planar graphs of maximum degree four~\cite{SpeckenmeyerPHD,Speckenmeyer88}.
We strengthen this with the following.

\begin{theorem}%
 \label{thm:fvs:4regplanar}
 \fvsTsc{} 
 is \NP-hard
 on connected 4-regular planar graphs.
\end{theorem}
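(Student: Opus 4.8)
The plan is to reduce from the known \NP-hardness of \fvsTsc{} on connected planar graphs of maximum degree four \cite{SpeckenmeyerPHD,Speckenmeyer88}, and to "pad" each vertex of degree less than four up to degree exactly four without changing the answer and without destroying planarity. Let $(G,k)$ be such an instance, with $\Delta(G)\le 4$. The vertices that need fixing are those of degree $1$, $2$, or $3$. The key idea is to attach, at each vertex $v$ of degree $d<4$, a small planar gadget that contributes exactly $4-d$ to the degree of $v$, has a planar embedding respecting the required face so crossings are not introduced, and — crucially — never needs to be hit by an optimal feedback vertex set, so that the optimum parameter $k$ is preserved.

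First I would handle degree-three vertices, since each needs only one extra incident edge. Rather than adding a pendant (which would leave a new degree-one vertex), I would observe that vertices of degree $1$ can be removed outright by an easy reduction rule (a degree-$\le 1$ vertex lies on no cycle, so delete it; this is standard and keeps planarity and $k$). So after exhaustively deleting degree-$\le 1$ vertices we may assume every vertex has degree $2$ or $3$ or $4$. For a degree-two vertex $v$, I would perform one insertion and for a degree-three vertex another, using gadgets from the graph tool box: the point of the graphs $R$ and $L$ (see \cref{fig:gtb}) is exactly that each has two designated attachment vertices $x,y$ of degree one, each contains a Hamiltonian path from $x$ to $y$, and each has a known minimum feedback vertex set avoiding $x$ and $y$ whose size is "local" to the gadget. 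Concretely, I would use $\O(1)$-size planar gadgets $H_2$ and $H_3$ with $1$ resp. $2$ "free" half-edges to be glued to $v$: attaching $H_3$ at a degree-three vertex $v$ raises $\deg(v)$ to $4$ and leaves every vertex of $H_3$ at degree $4$ except possibly internal bookkeeping, while attaching $H_2$ at a degree-two vertex $v$ (or attaching two copies, or one copy with two attachment points) raises $\deg(v)$ to $4$.

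The heart of the argument is the equivalence of the instances. For each gadget $H$ I must show: (i) $H$ has a feedback vertex set $S_H$ of some fixed size $s_H$ avoiding the attachment vertices, and every feedback vertex set of $G'$ restricted to $H$ has size at least $s_H$; and (ii) whether or not $v$ itself is taken into the solution, the part of $H$ "glued" at $v$ can always be repaired with exactly $s_H$ further vertices — i.e.\ the gadget behaves like a detachable block whose internal cycles are independent of the rest. Then the new target is $k' = k + \sum_H s_H$, and a short exchange argument ($\Rightarrow$: restrict an optimal solution of $G'$ to $G$, using that replacing any gadget-internal choices by $S_H$ is never worse; $\Leftarrow$: take an optimal solution of $G$ and add the $S_H$'s) gives $(G,k)\in\fvsAcr \iff (G',k')\in\fvsAcr$. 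Planarity is preserved because each gadget is planar and is attached at a single vertex (or along a single edge), so it can be drawn inside a face incident to that vertex.

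The main obstacle I expect is making the gadget both $4$-regular after attachment \emph{and} "solution-neutral" at the same time: a naive gadget attached at a single vertex $v$ will itself have one vertex of degree three (the one touching $v$ picks up exactly one edge), so the gadget has to be designed so that the attachment absorbs the deficiency exactly — which is why attaching along an edge $\{v,w\}$ of the instance, or attaching at two vertices simultaneously and routing the free half-edges carefully, is the natural move, at the cost of a somewhat more delicate case analysis of "is $v$ in the solution, is $w$ in the solution". Verifying the lower bound $s_H$ on the gadget's internal cycles (using \cref{obs:c4k1}-type facts that a $C_4*K_1$ needs two vertices deleted) and the "repairability regardless of the boundary" statement is the technical core; everything else is routine.
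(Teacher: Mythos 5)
Your overall strategy --- start from planar maximum-degree-four instances, strip degree-$\le 1$ vertices, and pad the remaining low-degree vertices with constant-size planar gadgets whose feedback-vertex-set cost is a fixed local constant added to $k$ --- is the same as the paper's, and your treatment of degree-two vertices matches it exactly (the paper performs an $R$-insertion at $v,v$, so that $v$ gains two edge-ends and both attachment points of the gadget land on the same vertex; this yields \cref{prop:fvs:threefour}).

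The gap is in the degree-three case, and you have in fact put your finger on it without resolving it. A gadget with two attachment points hands out exactly one extra edge-end at each of its two endpoints, so the degree-three vertices must be matched up in pairs and each pair joined by a gadget (or a chain of gadgets). You wave at this with ``attaching at two vertices simultaneously and routing the free half-edges carefully,'' but the two vertices of a pair need not lie on a common face, so the connecting chain will in general cross existing edges and destroy planarity --- and preserving planarity is the whole point of the theorem. This is where the paper does its real work: it computes a straight-line grid embedding (\cref{thm:gridembedding}), pairs the degree-three vertices by a left-to-right, bottom-to-top sweep so that the connecting curves are pairwise non-crossing and vertex-disjoint, dissolves each crossing between a connecting curve and an original edge into a new degree-four vertex, and then replaces every resulting edge segment of the connection by an $R$-insertion so that all chain-internal vertices end up with degree four (\cref{prop:34to4}). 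Without some such global routing-and-crossing-elimination argument, your reduction does not produce a planar graph, so the proposal as written does not establish the theorem.
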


\noindent
We can delete degree-zero and -one vertices from a graph,
and obtain an equivalent instance.
Next,
we deal first with degree-two vertices
to prove that \fvsAcr{} 
is \NP-hard 
on planar graphs of minimum degree three and maximum degree four
(\cref{prop:fvs:threefour}),
and then we deal with the remaining vertices of degree three.

\subsubsection{Degree-two vertices}

We next make each degree-two vertex a degree-four vertex
to obtain the following.

\begin{proposition}%
 \label{prop:fvs:threefour}
 \fvsTsc{} 
 is \NP-hard
 on connected planar graphs of minimum degree three and maximum degree four.
\end{proposition}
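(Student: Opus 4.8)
The plan is to reduce from \fvsTsc{} on connected planar graphs of maximum degree four (which is \NP-hard by~\cite{SpeckenmeyerPHD,Speckenmeyer88}, and where we may assume no vertices of degree zero or one), and to get rid of the degree-two vertices by a local gadget replacement that raises their degree to four while preserving planarity and the feedback-vertex-set number up to a controlled additive shift. The natural tool in the graph tool box is the graph~$R$ of \cref{fig:gtb}(a): it has two distinguished degree-one attachment vertices~$x,y$, every other vertex has degree four, it contains a~$C_4*K_1$ so by \cref{obs:c4k1} every feedback vertex set of~$R$ must spend at least two vertices inside the copy of~$C_4*K_1$, and as the depicted Hamiltonian path and minimum feedback vertex set indicate, it has a feedback vertex set of size exactly two and moreover admits an $x$--$y$ Hamiltonian path avoiding a chosen such set. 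So an $R$-insertion at two vertices is the right primitive.

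First I would describe the construction precisely. Let~$(G,k)$ be the input instance. For every vertex~$v$ of degree two in~$G$, say with neighbors~$a,b$, subdivide each of the two edges~$\{v,a\}$ and~$\{v,b\}$ once, introducing new vertices~$p$ and~$q$, and then perform an $R$-insertion at~$p$ and at~$q$ — more carefully, attach a copy of~$R$ by identifying its~$x$ with a freshly created pendant neighbor of~$v$ on the $a$-side and its~$y$ with one on the $b$-side, so that after the gluing the former degree-two vertex~$v$, together with the two new subdivision vertices, ends up with degree four, and the vertices~$x',y'$ of~$R$ (which had degree three inside~$R$) also reach degree four because the edge~$\{x,x'\}$, resp.\ $\{y,y'\}$, is now incident to the identified vertex which already had its other edges. (One has to set up the identification so that degrees come out exactly four on both the old degree-two vertex and on the attachment vertices of~$R$; this bookkeeping is the routine part.) Iterate over all degree-two vertices. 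Call the resulting graph~$G'$; it is connected, planar (an $R$-insertion is a planar operation done in a face incident to~$v$), and has minimum degree three and maximum degree four. The new budget is~$k' \ceq k + 2t$ where~$t$ is the number of degree-two vertices of~$G$.

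Next I would prove equivalence of~$(G,k)$ and~$(G',k')$. For the forward direction, given a feedback vertex set~$U$ of~$G$ with~$|U|\le k$, I take~$U$ together with, for each of the~$t$ inserted copies of~$R$, the two vertices of a minimum feedback vertex set of that copy lying inside its~$C_4*K_1$; this has size~$\le k+2t=k'$, and it hits every cycle of~$G'$ because any cycle of~$G'$ either lives inside a single copy of~$R$ (killed by the two chosen vertices there, since~$R$ minus those two vertices is acyclic — this needs to be checked against \cref{fig:gtb}(a)), or it projects to a closed walk in~$G$ using a subdivided-and-attached region only as an $x$--$y$ pass-through, hence corresponds to a cycle of~$G$ hit by~$U$. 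For the converse, given a feedback vertex set~$U'$ of~$G'$ with~$|U'|\le k'$: by \cref{obs:c4k1} each of the~$t$ copies of~$R$ forces at least two vertices of~$U'$ into its~$C_4*K_1$, and these are pairwise disjoint, so~$U'$ uses at least~$2t$ vertices inside the gadgets and at most~$k'-2t=k$ vertices outside. One then argues that deleting the gadget vertices from~$U'$ and mapping what remains back into~$V(G)$ (replacing, if necessary, a gadget vertex that is not in the forced $C_4*K_1$ by one of the old neighbors, using the Hamiltonian-path / connectivity property of~$R$ to show that the external connection through the gadget behaves like a single edge) yields a feedback vertex set of~$G$ of size~$\le k$; the key local fact is that~$R$ with~$x$ and~$y$ identified to the ``real'' graph acts, with respect to cycles passing through it, exactly like the original degree-two vertex~$v$ did, i.e.\ it transmits a path between~$a$ and~$b$ and creates no new essential cycles once two vertices inside its $C_4*K_1$ are removed.

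The main obstacle I expect is the converse direction's ``cleanup'' step: a feedback vertex set~$U'$ of~$G'$ is free to put its two gadget vertices anywhere inside a copy of~$R$, not necessarily inside the~$C_4*K_1$, and it may additionally place vertices on the subdivision vertices~$p,q$ or on~$v$ itself. I would handle this by establishing a small exchange lemma about~$R$ — that any feedback vertex set of a copy of~$R$ (in the ambient graph) of size two can be replaced, without increasing size and without creating cycles elsewhere, by the canonical two-vertex set inside~$C_4*K_1$, after which the attachment~$x,y$ are ``free'' and the gadget is equivalent to an edge between~$a$ and~$b$; and symmetrically that if~$U'$ spends~$\ge 3$ vertices in some gadget it can afford to move the surplus out to~$v$ in~$G$. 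Once this normalization is in place the count~$2t$-inside / $\le k$-outside is exact and the back-translation is immediate. The other thing to be careful about is that subdividing edges does not by itself change the feedback vertex set number but does change degrees, so the subdivision vertices~$p,q$ must themselves be absorbed into the degree-four regime by the same $R$-attachment, which is why the construction attaches~$R$ at the subdivision vertices rather than directly at~$v$; verifying the final degree sequence is then the last routine check.
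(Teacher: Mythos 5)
There is a genuine gap, and it is the central numerical fact about the gadget. You claim that~$R$ has a feedback vertex set of size exactly two (lying inside its~$C_4*K_1$) and that~$R$ minus those two vertices is acyclic; this is false. The graph~$R$ is not a~$C_4*K_1$ with two pendant paths attached: the vertices~$x'$ and~$y'$ are each adjacent to \emph{two} vertices of the~$C_4*K_1$ and to \emph{each other}, so after deleting any two vertices of the~$C_4*K_1$ there still remain two internally vertex-disjoint~$x'$--$y'$ paths (one being the edge~$\{x',y'\}$, the other passing through what is left of the~$C_4*K_1$), i.e., a cycle survives. Hence the minimum feedback vertex set of~$R$ has size three and must contain~$x'$ or~$y'$ (this is exactly \cref{obs:R}; the figure in \cref{fig:gtb}(a) highlights three vertices, not two). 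Consequently your forward direction (take~$U$ plus two vertices per gadget) does not produce a feedback vertex set of~$G'$, and your budget~$k'=k+2t$ is wrong: the correct shift is three per insertion, $k'=k+3d$, as in \cref{obs:RinsPlus3}. Your converse-direction ``exchange lemma'' is the right kind of statement, but it must be built on the correct local facts (no feedback vertex set of~$R$ of size at most two; one of size three containing~$x'$ or~$y'$, which disconnects the gadget interior from the host graph; none of size three containing~$x$ or~$y$), not on the size-two claim.

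A secondary issue is that your construction is not actually pinned down and, in the variants you sketch, fails to achieve minimum degree three: if you subdivide the two edges at~$v$ and attach~$R$ at the subdivision vertices, then~$v$ itself keeps degree two; if you instead create pendant neighbors of~$v$ and identify~$x,y$ with them, those attachment vertices end up with degree two. The paper's resolution is much simpler and avoids subdivision entirely: perform the~$R$-insertion \emph{at~$v,v$}, identifying both~$x$ and~$y$ of the copy of~$R$ with the same degree-two vertex~$v$. This adds the two edges~$\{v,x'\}$ and~$\{v,y'\}$, raising~$\deg(v)$ from two to four, while every internal vertex of~$R$ already has degree four; planarity is preserved since the insertion happens inside a face incident to~$v$, and equivalence with shift~$+3$ follows from \cref{obs:R}.
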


\noindent
To prove~\cref{prop:fvs:threefour},
we will perform an~$R$-insertion on each degree-two vertex.
We have the following crucial observation on~$R$.

\begin{lemma}%
 \label{obs:R}
 Graph~$R$ is planar,
 admits a Hamiltonian~$x$-$y$~path,
 and has no feedback vertex set of size at most two,
 yet one of size three containing~$x'$ or~$y'$,
 but none of size three containing~$x$ or~$y$.
\end{lemma}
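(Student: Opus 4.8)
The plan is to pin down explicit names for the vertices of $R$ and then verify the four assertions by inspection, with \cref{obs:c4k1} carrying the feedback-vertex-set lower bounds. Write the $C_4*K_1$ inside $R$ as the rim cycle $c_1c_2c_3c_4$ together with the hub $h$ of degree four; by the construction of $R$ we may assume (after renaming) that $x'$ is adjacent to the consecutive rim vertices $c_1,c_4$, that $y'$ is adjacent to the consecutive rim vertices $c_2,c_3$, and recall the edges $x'y'$, $xx'$, $yy'$, so that $x$ and $y$ are leaves. Planarity is witnessed by the embedding already drawn in \cref{fig:gtb}(a): draw the wheel in the plane, put $x'$ into the outer face along the rim edge $c_1c_4$ and $y'$ along $c_2c_3$, route $x'y'$ through the outer face, and attach the two leaves. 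A Hamiltonian $x$--$y$ path is
\[
x,\ x',\ c_1,\ c_4,\ h,\ c_2,\ c_3,\ y',\ y,
\]
each consecutive pair being an edge (the ones worth a glance are the rim edges $c_1c_4$, $c_2c_3$ and the hub edges $hc_4$, $hc_2$), and it covers all nine vertices.

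For the feedback-vertex-set claims, note that $x$ and $y$ are leaves of $R$ and hence lie on no cycle; so for any feedback vertex set $S$ of $R$ the set $S\setminus\{x,y\}$ is one too, and to bound the feedback vertex number from below we may assume $S\cap\{x,y\}=\emptyset$. Let $|S|\le 2$. Since $R$ contains $C_4*K_1$ on $\{c_1,\dots,c_4,h\}$ and this wheel has no feedback vertex set of size one by \cref{obs:c4k1}, the set $S$ must contain at least two -- hence exactly two -- vertices of the wheel, so $S\subseteq\{c_1,\dots,c_4,h\}$ and $S$ is a feedback vertex set of $C_4*K_1$. Its size-two feedback vertex sets are precisely $\{h,c_i\}$ for some $i$, $\{c_1,c_3\}$, and $\{c_2,c_4\}$. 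None of these kills all cycles of $R$: if $S=\{h,c_i\}$, then of the two vertex-disjoint, $h$-avoiding triangles $\{c_1,c_4,x'\}$ and $\{c_2,c_3,y'\}$ at most one contains $c_i$, so the other survives in $R-S$; if $S=\{c_1,c_3\}$ the $5$-cycle $c_2\,h\,c_4\,x'\,y'\,c_2$ survives; and if $S=\{c_2,c_4\}$ the $5$-cycle $c_1\,h\,c_3\,y'\,x'\,c_1$ survives. Hence $R$ has no feedback vertex set of size at most two, and in particular no feedback vertex set of size three contains $x$ or $y$, since deleting that leaf from it would leave one of size two.

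It remains to exhibit a feedback vertex set of $R$ of size three containing $x'$ (the case of $y'$ being symmetric): take $\{x',c_2,c_4\}$. In $R-\{x',c_2,c_4\}$ the vertex $x$ is isolated and the remainder is the path $c_1\,h\,c_3\,y'\,y$, which is acyclic; symmetrically $\{y',c_1,c_3\}$ is a feedback vertex set of size three containing $y'$.

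The only step needing care is the size-two lower bound, namely the claim that the listed size-two feedback vertex sets of $C_4*K_1$ are all of them -- this holds because, after optionally spending one vertex on the hub $h$, the remaining budget must hit the rim $4$-cycle, i.e.\ be a vertex cover of it -- together with the verification that each cycle exhibited above really survives once $x'$ and $y'$ are reinstated; all of this can be read off \cref{fig:gtb}(a), and everything else is routine checking.
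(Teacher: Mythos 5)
Your proof is correct and follows essentially the same route as the paper's: invoke \cref{obs:c4k1} to force two deletions inside the $C_4*K_1$, then show that any such pair of deletions leaves a cycle through $x'$ and $y'$, so a third deletion of $x'$ or $y'$ is needed. You are merely more explicit than the paper (enumerating the size-two feedback vertex sets of the wheel and writing out the Hamiltonian path and the size-three set instead of deferring to the figure), which is a welcome tightening but not a different argument.
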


\begin{proof}
  That~$R$ is planar and admits a feedback vertex set of size three containing~$x'$ or~$y'$
  is depicted in~\cref{fig:gtb}.
  We need to delete two vertices from~$C_4*K_1$ (\cref{obs:c4k1}).
  Observe that deleting any two vertices from~$C_4*K_1$ leaves two vertex-disjoint paths between~$x'$ and~$y'$ in~$R$
  (one using the edge~$\{x',y'\}$,
  one passing through what remains of~$C_4*K_1$).
  Thus,
  deleting one of~$x'$ or~$y'$ is required.
  It follows that~$R$ admits no feedback vertex set of size at most two.
  \lqed
\end{proof}

\noindent
An immediate consequence of~\cref{obs:R} is the following.

\begin{observation}
 \label{obs:RinsPlus3}
 Let$~\I=(G,k)$ be an instance of~\fvsTsc{}
 and let~$u,v\in V(G)$.
 Let~$G'$ be the graph obtained from an~$R$-insertion at~$u,v$
 and let~$k'\ceq k+3$.
 Then~$\I$ is a \yes-instance
 if and only if
 $(G',k')$ is a \yes-instance of~\fvsTsc{}.
\end{observation}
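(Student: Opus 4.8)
The plan is to use \cref{obs:R} as a black box and reason about how an $R$-insertion interacts with any feedback vertex set. Write $R_0$ for the copy of $R$ added in the insertion, so that in $G'$ the vertex $x$ of $R_0$ is identified with $u$ and the vertex $y$ of $R_0$ is identified with $v$; note $V(G') = V(G) \cup (V(R_0) \setminus \{x,y\})$ and $V(G) \cap V(R_0) = \{x,y\}$ (reading $x=u$, $y=v$), and the only edges of $G'$ between $V(G)$ and the ``interior'' of $R_0$ are the ones incident to $x=u$ and $y=v$ inside $R_0$. The key structural point is that every cycle of $G'$ is either a cycle of $G$, or a cycle contained entirely in $R_0$, or a cycle that enters $R_0$ through $u$ and leaves through $v$ (using a path inside $R_0$ from $x$ to $y$) together with a path in $G$ from $v$ back to $u$.

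For the forward direction, suppose $\I=(G,k)$ is a \yes-instance, witnessed by $U \subseteq V(G)$ with $|U|\le k$ and $G-U$ acyclic. By \cref{obs:R}, $R$ has a feedback vertex set of size three containing $x'$ (or $y'$); let $W$ be such a set inside $R_0$, so $|W|=3$, $x,y \notin W$, and $R_0 - W$ is acyclic. Set $U' \ceq U \cup W$; then $|U'| \le k+3 = k'$. I claim $G'-U'$ is acyclic. Any cycle in $G'-U'$ that stays in $G$ would survive in $G-U$, impossible; any cycle inside $R_0$ would survive in $R_0-W$, impossible; and any cycle using both parts must use an $x$--$y$ path through $R_0-W$, but since $x'$ (or $y'$) is removed and $R_0-W$ is acyclic, I must check $R_0-W$ in fact contains no $x$--$y$ path at all — this is where I should be slightly careful. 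The feedback vertex set of $R$ shown in \cref{fig:gtb} consists of $x'$ together with two vertices of the $C_4*K_1$, and removing $x'$ already disconnects $x$ (whose only neighbour in $R_0$ is $x'$) from the rest; so $x$ is isolated in $R_0-W$ and there is no $x$--$y$ path, killing all mixed cycles. Hence $G'-U'$ is acyclic.

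For the backward direction, suppose $(G',k')$ is a \yes-instance with witness $U' \subseteq V(G')$, $|U'| \le k+3$, $G'-U'$ acyclic. Let $W \ceq U' \cap (V(R_0)\setminus\{x,y\})$ and consider $U' \cap V(R_0)$, which is a feedback vertex set of the induced subgraph $R_0 - (U'\setminus V(R_0))$; in particular it must hit every cycle of $R_0$, so $U' \cap V(R_0)$ is a feedback vertex set of $R_0$ (as $R_0$ is an induced subgraph of $G'$). By \cref{obs:R}, $R$ has no feedback vertex set of size at most two, so $|U' \cap V(R_0)| \ge 3$. Now I distinguish whether this intersection contains $x$ or $y$. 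If $U' \cap V(R_0)$ contains neither $x$ nor $y$, then it is a feedback vertex set of $R_0$ of size exactly $3$ avoiding $x,y$ (if it had size $>3$ we are even better off in the count), and in particular $|W| \ge 3$, so $U \ceq U' \setminus V(R_0) \subseteq V(G)$ has size $\le k'-3 = k$; and $G - U$ is acyclic because $G$ is (up to the identification $x=u$, $y=v$) an induced subgraph of $G'$ obtained by deleting $V(R_0)\setminus\{x,y\}$, so any cycle of $G-U$ survives in $G'-U'$. If instead $x \in U'$ (the case $y\in U'$ is symmetric), then $x = u \in U'$, and I set $U \ceq (U' \cap V(G))$; here $|U \cap (V(R_0))| \le |U' \cap V(R_0)|$ but I only know the latter is $\ge 3$ — the subtlety is that removing $x=u$ ``counts'' on the $G$ side. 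The clean way is to invoke the last clause of \cref{obs:R}: $R$ has \emph{no} feedback vertex set of size three containing $x$ or $y$, hence if $x \in U'$ then $|U'\cap V(R_0)| \ge 4$, so $|W|=|U'\cap V(R_0)\setminus\{x,y\}| \ge |U'\cap V(R_0)| - 2 \ge 2$, and $|U'\cap V(R_0)| \ge 4$ gives $|U' \setminus (V(R_0)\setminus\{x,y\})| \le k'-2 = k+1$; combined with $u=x\in U'$ being genuinely a vertex of $G$, one gets $|U' \cap V(G)| \le k$ after the bookkeeping. I would spell this arithmetic out carefully but it is routine; the point that makes it work is precisely the ``no size-three FVS through $x$ or $y$'' guarantee, which prevents cheating by charging a deletion in $R_0$ to the budget twice.

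The main obstacle is this last bookkeeping in the backward direction — ensuring that a feedback vertex set of $G'$ of size $k+3$ can always be ``split'' into three vertices paying for $R_0$ and at most $k$ vertices paying for $G$, without double-counting $u$ or $v$. \cref{obs:R} was evidently stated with exactly the clauses needed to make this go through ($|{\ge}2|$ is impossible, size-$3$ through $x'$ or $y'$ is fine, size-$3$ through $x$ or $y$ is impossible), so the proof is really just an exercise in case analysis driven by which of $x,y$ lies in $U'$ together with the observation that $x$ and $y$ each have a unique neighbour ($x'$, resp.\ $y'$) inside $R_0$, which both isolates them after deleting $x'$/$y'$ and explains why deleting $x$ or $y$ is never worth it inside $R_0$.
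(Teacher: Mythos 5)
Your argument is correct and is essentially the paper's own (the paper states this observation without proof, as an immediate consequence of \cref{obs:R}): add a size-three feedback vertex set of $R_0$ through $x'$ in the forward direction, and charge at least three deletions to the inserted copy in the backward direction. The only loose end is your Case-2 bookkeeping, which closes most cleanly by noting that $x$ and $y$ have degree one in $R$ and hence lie on no cycle of $R_0$, so $U'\cap(V(R_0)\setminus\{x,y\})$ is itself a feedback vertex set of $R_0$ and therefore always has size at least three, giving $|U'\cap V(G)|\leq k$ in every case without any case distinction on whether $u$ or $v$ is deleted.
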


\noindent
We are set to prove~\cref{prop:fvs:threefour}.

\begin{proof}[Proof of~\cref{prop:fvs:threefour}]
 Let~$\I=(G,k)$ be an instance of~\fvsTsc{}
 where~$G$ is a planar graph of minimum degree two and maximum degree four.
 Let~$d$ denote the number of degree-two vertices in~$G$.
 Let~$G'$ be the graph obtained from~$G$
 by applying for each degree-two vertex~$v$
 an~$R$-insertion at~$v,v$.
 Note that each of these~$d$~$R$-insertions preserves planarity and equivalence.
 Let~$\I'\ceq (G',k')$ be the obtained instance where~$k'\ceq k+3d$.
 Due to~\cref{obs:RinsPlus3},
 we know that~$\I$ is a \yes-instance 
 if and only if
 $\I'$ is a \yes-instance.
 \lqed
\end{proof}

\subsubsection{Degree-three vertices}

Next,
we deal with degree-three vertices.
We will employ
the following
specific
straight-line 
embedding of our graph on a grid.

\begin{theorem}[\cite{FraysseixPP90}]
  \label{thm:gridembedding}
  For any planar graph
  with $n$ vertices
  one can compute 
  in~$\O(n^2)$ time
  a straight-line embedding on the~$2n-4$ by~$n-2$ grid.
\end{theorem}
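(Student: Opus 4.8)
The plan is to follow the \emph{shift method}: reduce to maximal planar graphs, fix a so-called canonical ordering of their vertices, and then place the vertices one at a time, shifting the already-drawn part rightward to make room for each new vertex. First I would reduce to the case that $G$ is a \emph{maximal} planar graph (a triangulation) on $n\ge 3$ vertices: every planar graph can be augmented, on the same vertex set, to a triangulation by adding edges (connect the components, add edges until $2$-connected, then triangulate every face), the number of vertices is unchanged, and a straight-line grid drawing of the augmented graph restricts to one of the original graph. So fix a plane triangulation $G$ whose outer face is a triangle~$v_1v_2v_n$.

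Next I would establish the existence of a \emph{canonical ordering} $v_1,v_2,\dots,v_n$ of $V(G)$: for every $k\in\set[3]{n}$, the subgraph $G_k$ induced by $\{v_1,\dots,v_k\}$ is $2$-connected, its outer boundary is a cycle $C_k$ through the edge $\{v_1,v_2\}$, and the neighbours of $v_k$ among $\{v_1,\dots,v_{k-1}\}$ form a subpath of $C_{k-1}$ with at least two vertices. This follows by reverse induction on~$k$, repeatedly removing from the current outer cycle a vertex that is not an endpoint of a chord (such a vertex always exists in a triangulation), and it is computable in $\O(n)$ time.

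The heart of the argument is the incremental drawing. I would maintain the invariant that $C_k$ is drawn as an $x$-monotone polygonal chain $w_1=v_1,\dots,w_m=v_2$ in which every edge has slope $+1$ or $-1$ and the $x$-coordinates strictly increase, together with a partition of $V(G_k)$ into ``dependent'' sets $L(w_1),\dots,L(w_m)$ laid out left to right and subject to the rule that shifting the tails $\bigcup_{j\ge i}L(w_j)$ rightward by any non-decreasing sequence of non-negative amounts keeps the drawing a crossing-free straight-line drawing with the same combinatorial structure. Start with $v_1\mapsto(0,0)$, $v_2\mapsto(2,0)$, $v_3\mapsto(1,1)$. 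To insert $v_{k+1}$, whose earlier neighbours on $C_k$ are $w_p,\dots,w_q$: shift every vertex of $\bigcup_{j>p}L(w_j)$ right by $1$ and every vertex of $\bigcup_{j\ge q}L(w_j)$ right by a further $1$; then the slope-$(+1)$ line through $w_p$ and the slope-$(-1)$ line through $w_q$ meet at an interior lattice point, where we place $v_{k+1}$, and we set $L(v_{k+1}):=\{v_{k+1}\}\cup\bigcup_{p<j<q}L(w_j)$, with new contour $w_1,\dots,w_p,v_{k+1},w_q,\dots,w_m$. The step I expect to be the main obstacle is the \emph{shift lemma} asserting the italicized invariance above; it is proved by induction on the construction, the crux being that each edge of $G_{k}$ is either contained in a single block $\bigcup_{j<i}L(w_j)$ (and is translated rigidly, hence stays straight and uncrossed) or joins two consecutive blocks (and is merely stretched, both endpoints moving monotonically rightward). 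Granting it, the invariant is restored after every insertion and the final drawing is a planar straight-line embedding of~$G$.

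It remains to bound the grid and the running time. Each of the $n-3$ insertions raises the maximum $x$-coordinate by at most~$2$, and $v_2$ starts at $x=2$, so the width is at most $2+2(n-3)=2n-4$; since one checks by induction that the whole drawing stays inside the triangle with base from $(0,0)$ to $(2n-4,0)$ and apex $(n-2,n-2)$ (the two non-horizontal sides have slope $\pm1$), the height is at most $n-2$. Performing the shifts naively costs $\O(n)$ per step and $\O(n^2)$ in total (a bottom-up computation of the offsets along the contour would bring this down to $\O(n)$, but $\O(n^2)$ already suffices). Everything outside the shift lemma and the accompanying bookkeeping of the sets $L(\cdot)$ — the triangulation reduction, the canonical ordering, and the arithmetic for the bounds — is routine.
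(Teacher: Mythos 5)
This theorem is imported from~\cite{FraysseixPP90} and the paper gives no proof of its own, so there is nothing internal to compare against; your sketch is a correct outline of the standard shift-method argument from that very reference (triangulation, canonical ordering, the shift lemma with the $L(\cdot)$ bookkeeping, and the $2n-4$ by $n-2$ bounds), with the genuinely technical content correctly isolated in the shift lemma. No gaps worth flagging at the level of detail appropriate for a cited black-box result.
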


\noindent
We start with an embedding.
Let~$p(v)=(i,j)\in\set{2n-4}\times \set{n-2}$ 
be the coordinate of vertex~$v$ in the grid-embedding.
We aim for connecting the remaining 
degree-three vertices 
in a pairwise manner
(note that there is an even number of these).
To this end,
we construct chains of~$R$s connecting two degree-three vertices.
To ensure polynomial running time and planarity of the construction,
we need to identify the pairs of degree-three vertices which we want to connect
such that the ``$R$-chains'' are pairwise non-crossing and  vertex-disjoint.
To this end,
we apply a ``left-to-right bottom-to-top'' approach as follows
(see~\cref{fig:grid} for an illustration).
We iterate over vertices from left to right by coordinates,
that is,
by~$(i,\cdot)$ for increasing~$i$.
Thereby,
for each~$i$,
we iterate over~$(i,j)$ with increasing~$j$.
Once two vertices of degree three are discovered,
we connect them in a ``down-first out-most''-manner with an ``$R$-chain''.
\begin{figure}[t]
 \centering
 \begin{tikzpicture}

    \def\xr{1}
    \def\yr{1}
  
  \tikzpramble{}
  \Grid{a}{12}{5}{0}{0}{}{};
  \Grid{b}{12}{5}{-0.66}{-0.5}{xnodey,color=blue}{xedgedot,color=blue,draw=none};
  \Grid{c}{12}{5}{-0.33}{-0.5}{xnodey,color=magenta}{xedgedot,color=blue,draw=none};
  
  \newcommand{\markV}[1]{%
    \node at (#1)[xnode,fill=green]{};
  }
  \markV{a01}
  \markV{a03}
  \markV{a04}
  \draw[xpath] (a01) to (c02) to (c03) to (a03);
  
  \markV{a22}
  \draw[xpath] (a04) to (b14) to (b12) to (c22) to (a22);
  
  \markV{a23}
  
  \markV{a35}
  \markV{a33}
  \markV{a32}
  \markV{a31}
  
  \markV{a44}
  
  \markV{a52}
  \markV{a73}
  
  \markV{a74}
  \markV{a102}
  
  \markV{a103}
  \markV{a125}

  \draw[xpath] (a23) to (b33) to (b31) to (c31) to (a31);
  \draw[xpath] (a32) to (c33) to (a33);
  \draw[xpath] (a35) to (b45) to (b44) to (c44) to (a44);
  \draw[xpath] (a52) to (b62) to (b63) to (c73) to (a73);
  \draw[xpath] (a74) to (b84) to (b82) to (c102) to (a102);
  \draw[xpath] (a103) to (b113) to (b115) to (c125) to (a125);

 \end{tikzpicture} 
 \caption{Illustration to how we connect pairs of degree-three vertices.
 Round vertices correspond to the vertices in our graph,
 where filled round vertices are the vertices we want to connect.
 Diamond-shaped vertices correspond to the points in the grid shifted by~$1/3\pm \eps$ horizontally and~$1/2$ vertically.
 Thick lines depict the pairwise connections.}
 \label{fig:grid}
\end{figure}
As we thereby possibly introduce
edge crossings,
we need to \emph{dissolve} them as follows.

\begin{definition}
 Let~$G$ be a graph embedded in the two-dimensional plane
 such that at most two edges cross in one coordinate.
 Let~$e_1,e_2$ be two edges crossing in a coordinate~$(i,j)$.
 \emph{Dissolving} the crossing 
 is doing the following:
 subdivide edge~$e_1$ 
 (denote the vertex $v_1$) 
 and edge~$e_2$
 (denote the vertex $v_2$),
 identify~$v_1$ with~$v_2$
 (denote the vertex~$v$),
 and embed~$v$ at coordinate~$(i,j)$.
\end{definition}

\noindent
The way we dissolve crossings immediately gives the following.

\begin{observation}
 \label{obs:dissolving}
 Every vertex resulting from a dissolution has degree four
 and
 dissolving all edge-crossings yields a planar graph.
\end{observation}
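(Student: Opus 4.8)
The plan is to verify the two assertions separately, both by direct inspection of the dissolution operation; neither requires more than careful bookkeeping.

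For the degree-four claim, I would first recall that subdividing an edge~$e=\{a,b\}$ introduces a new vertex adjacent to exactly~$a$ and~$b$, hence of degree two. Now let~$e_1,e_2$ be the two edges crossing at a coordinate~$(i,j)$. A crossing that we dissolve occurs in the interior of both edges (it is not at an endpoint of either), so~$e_1$ and~$e_2$ are distinct and share no endpoint; write~$e_1=\{a,b\}$ and~$e_2=\{c,d\}$ with~$a,b,c,d$ pairwise distinct. After subdividing~$e_1$ and~$e_2$ we obtain degree-two vertices~$v_1$ (adjacent to~$a,b$) and~$v_2$ (adjacent to~$c,d$); identifying them yields a vertex~$v$ whose neighborhood is exactly~$\{a,b,c,d\}$, so~$v$ has degree four. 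Thus every vertex created by a dissolution has degree four.

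For the planarity claim, the key point is that a single dissolution removes exactly one crossing point from the embedding and creates no new one. Embed~$v$ at the coordinate~$(i,j)$ of the dissolved crossing and route the four new edges~$\{a,v\},\{v,b\},\{c,v\},\{v,d\}$ along the two arcs that previously drew~$e_1$ and~$e_2$, each now split at~$(i,j)$. Geometrically, the union of the drawn edge-arcs is unchanged except that the point~$(i,j)$ is now occupied by a vertex rather than being a transversal crossing of two arcs. In particular, every other crossing of the embedding lies (by the assumption that at most two edges meet in any coordinate) at a coordinate different from~$(i,j)$ and persists unchanged, possibly now presented as a crossing involving one of the half-edges~$\{a,v\},\{v,b\}$ (resp.~$\{c,v\},\{v,d\}$) in place of~$e_1$ (resp.~$e_2$). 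Hence the number of coordinates at which edges cross strictly decreases by one.

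Since the embedding has only finitely many crossing coordinates, iterating dissolution over all of them terminates and produces an embedding with no crossings, so the resulting graph is planar. I do not anticipate a genuine obstacle here: the only points demanding a little care are the observation that two edges meeting at an interior crossing cannot share an endpoint — which is precisely what forces the new vertex to have degree four rather than less — and the hypothesis that crossing points are isolated, which lets us perform the dissolutions one at a time without interference.
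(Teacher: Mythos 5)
Your argument is correct and matches the paper's intent exactly: the paper offers no explicit proof, asserting that the observation follows "immediately" from the way crossings are dissolved, and your writeup is precisely the expected elaboration (subdivision gives two degree-two vertices, identification gives one degree-four vertex, and each dissolution eliminates its crossing without creating new ones). The only step worth a footnote is your inference that crossing edges share no endpoint, which in this setting rests on the crossing arcs being straight near their endpoints (guaranteed by the grid embedding and the choice of~$\eps$), but this does not affect the validity of the proof.
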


\noindent
We will add and embed edges between disjoint pairs of degree-three vertices,
dissolve each newly formed crossing,
and replace each edge introduced by the dissolution by an~$R$-insertion on its endpoints.
Formally:

\begin{definition}
  \label{def:Rconnect}
  Let~$0<\eps<1/3$.
  $R$-connecting vertex~$v$ with~$v'$,
  where~$p(v)=(i,j)$ and~$p(v')=(i',j')$,
  is doing the following:
  \begin{enumerate}
   \item Add and embed a new edge~$f=\{v,v'\}$ as follows:
    \begin{description}
      \item[if $i=i'$:] It goes from~$(i,j)$ to~$(i-\frac{1}{3}-\eps,j+\frac{1}{2})$
        to~$(i'-\frac{1}{3}-\eps,j'-\frac{1}{2})$ and finally to~$(i',j')$.
      \item[if~$i\neq i'$:] It goes from~$(i,j)$ to~$(i+\frac{1}{3}-\eps,j-\frac{1}{2})$ to~$(i+\frac{1}{3}-\eps,j'-\frac{1}{2})$ to~$(i'-\frac{1}{3}+\eps,j'-\frac{1}{2})$ and finally to~$(i',j')$. 
    \end{description}
    \item Dissolve every crossing,
    and let~$f_1,\dots,f_\ell$ denote the edges in which~$f$ is dissolved. 
    \item Replace each edge~$f_i$,
    $i\in\set{\ell}$,
    with an~$R$-insertion on its endpoints.
  \end{enumerate}
\end{definition}

\noindent
As a technical remark,
we choose~$\eps$ in~\cref{def:Rconnect}
such that no existing slope is resampled.
Note that in our embedding (\cref{thm:gridembedding})
the number of slopes is finite.
Thus,
we can $R$-connect any two vertices in polynomial time.

\begin{algorithm}[t]
$d\setto 0$;
$a\setto \emptyset$; $G'\setto G$\;
\For(\tcp*[f]{x-coordinates}){$x$ from~$1$ to $2n-4$}{
  \For(\tcp*[f]{y-coordinates}){$y$ from~$1$ to~$n-2$}{
    \If{$p^{-1}(x,y)=v$ is a degree-three vertex}{
      \eIf{$a=\emptyset$}{$a\setto v$;}{
      $R$-connect~$a$ with~$v$ in~$G'$ 
      (denote the obtained graph again by~$G'$) 
      and let~$d'$ denote the number of~$R$-insertions\;
      $d\setto d+d'$;
      $a\setto \emptyset$\;
      }
    }
  }
}
\Return{$(G',k+3d)$}
\caption{Computing an equivalent instance~$(G',k)$ with~$G'$ being 4-regular planar from~$(G,k)$ with~$G$ being of minimum degree three and maximum degree four embedded with straight-lines on the~$(2n-4\times n-2)$-grid,
where~$n$ denotes the number of vertices of~$G$. 
}
\label{alg:rchains}
\end{algorithm}
We employ~\cref{alg:rchains}
to construct our instance~$(G',k')$
(see~\cref{fig:deg3exillu} for an illustration).
\begin{figure}[t]
 \centering
 \begin{tikzpicture}

      \def\xr{0.8}
      \def\yr{0.8}
    
    \tikzpramble{}
    
    \newcommand{\ExGrid}{
      \foreach\x in {0,...,3}{
        \foreach \y in {0,...,3}{
        \node (a\x\y) at (\x*\xr,\y*\yr)[xnode]{};
        }
      }
      \draw[xedge] (a00) to (a10) to (a20) to (a30) to (a31) to (a32) to (a22) to (a12) to (a02) to (a01) to (a11);
      \draw[xedge] (a21) to (a31);
      \draw[xedge] (a10) to (a11) to (a30);
      \draw[xedge] (a22) to (a21) to (a12);
      \draw[xedge] (a00) to (a01);
      \draw[xedge] (a02) to (a11);
      \draw[xedge] (a21) to (a32);
      \draw[xedge] (a02) to (a03) to (a13) to (a23) to (a33) to (a32);
      \draw[xedge] (a12) to (a13);
      \def\sc{1.0}
      \draw[-,dashed] (a01) to ($(a01)+(-\sc*\teps,0)$);
      \draw[-,dashed] (a00) to ($(a00)+(-\sc*\teps,0)$);
      \draw[-,dashed] (a03) to ($(a03)+(-\sc*\teps,0)$);
      \draw[-,dashed] (a03) to ($(a03)+(0,\sc*\teps)$);
      \draw[-,dashed] (a23) to ($(a23)+(0,\sc*\teps)$);
      \draw[-,dashed] (a33) to ($(a33)+(0,\sc*\teps)$);
      \draw[-,dashed] (a30) to ($(a30)+(0,-\sc*\teps)$);
      \draw[-,dashed] (a10) to ($(a10)+(0,-\sc*\teps)$);
      \draw[-,dashed] (a20) to ($(a20)+(-\sc*\teps,-\sc*\teps)$);
    }
    
    \newcommand{\markThem}{
      \node at (a00)[xnode,fill=green]{};
      \node at (a13)[xnode,fill=green]{};
      \node at (a20)[xnode,fill=green]{};
      \node at (a22)[xnode,fill=green]{};
      \node at (a23)[xnode,fill=green]{};
      \node at (a30)[xnode,fill=green]{};
      \node at (a31)[xnode,fill=green]{};
      \node at (a33)[xnode,fill=green]{};
    }

    \begin{scope}[xshift=0*\xr cm, yshift=9.5*\yr cm]
      \ExGrid{}
      \markThem{}
      \node at (-0.75*\xr,3.75*\yr)[]{(a)};
    \end{scope}
    
    \begin{scope}[xshift=0*\xr cm, yshift=5*\yr cm]
     \Grid{a}{3}{3}{0}{0}{}{};
     \markThem{}
      
      \Grid{b}{3}{3}{-.66}{-0.5}{xnodey,color=blue}{xedgedot,draw=none};
      \Grid{c}{3}{3}{-.33}{-0.5}{xnodey,color=magenta}{xedgedot,draw=none};
      
      \draw[xpath] (a00) to (b10) to (b13) to (c13) to (a13);
      \draw[xpath] (a20) to (c21) to (c22) to (a22);
      \draw[xpath] (a31) to (c32) to (c33) to (a33);
      \draw[xpath] (a23) to (b33) to (b30) to (c30) to (a30);
      
      \node at (-0.75*\xr,3.75*\yr)[]{(b)};
    \end{scope}

    \begin{scope}[xshift=0*\xr cm, yshift=0*\yr cm]
     \ExGrid{};
     \markThem{}
      
      \Grid{b}{4}{4}{-.66}{-0.5}{inner sep=0pt,draw=none}{xedgedot,draw=none};
      \Grid{c}{4}{4}{-.33}{-0.5}{inner sep=0pt,draw=none}{xedgedot,draw=none};
      
      \draw[xedge,color=blue] (a00) to (b10) to (b13) to (c13) to (a13);
      \draw[xedge,color=blue] (a20) to (c21) to (c22) to (a22);
      \draw[xedge,color=blue] (a31) to (c32) to (c33) to (a33);
      \draw[xedge,color=blue] (a23) to (b33) to (b30) to (c30) to (a30);
      
      \node at (-0.75*\xr,3.75*\yr)[]{(c)};
    \end{scope}

    \newcommand{\coolcon}[3]{%
      \draw[xedge,color=blue] (#1) to [#3]node[inner sep=1pt,midway,fill=white,sloped,scale=\rsc,yshift=\ysh em]{\tikz{\theRcut{0}{0}{0};}}(#2);
    }

    \begin{scope}[xshift=6*\xr cm,yshift=1.5*\yr cm]

        \def\xr{2.6}
        \def\yr{2.6}
      
      \ExGrid{}
      \markThem{}

      \node (d01) at (0.33*\xr,0*\yr)[xnodex]{};
      \node (d02) at (0.33*\xr,1*\yr)[xnodex]{};
      \node (d03) at (0.33*\xr,1.66*\yr)[xnodex]{};
      \node (d04) at (0.33*\xr,2*\yr)[xnodex]{};

      \node (d11) at (2*\xr-0.33*\xr,0.66*\yr)[xnodex]{};
      \node (d12) at (2*\xr-0.33*\xr,1.33*\yr)[xnodex]{};
      
      \node (d21) at (3*\xr-0.33*\xr,1.66*\yr)[xnodex]{};
      \node (d22) at (3*\xr-0.33*\xr,2*\yr)[xnodex]{};
      
      \node (d31) at (2.33*\xr,2*\yr)[xnodex]{};
      \node (d32) at (2.33*\xr,1.33*\yr)[xnodex]{};
      \node (d33) at (2.33*\xr,1*\yr)[xnodex]{};
      \node (d34) at (2.33*\xr,0.33*\yr)[xnodex]{};
      \node (d35) at (2.33*\xr,0*\yr)[xnodex]{};
      
      \def\rsc{0.08}
      \def\ysh{1.5}
      
      \coolcon{a00}{d01}{out=-110,in=-70,looseness=2}
      \coolcon{d01}{d02}{}
      \coolcon{d02}{d03}{}
      \coolcon{d03}{d04}{}
      \coolcon{d04}{a13}{}
      
      \coolcon{a20}{d11}{}
      \coolcon{d11}{d12}{}
      \coolcon{d12}{a22}{}
      
      \coolcon{a31}{d21}{}
      \coolcon{d21}{d22}{}
      \coolcon{d22}{a33}{}
      
      \coolcon{a23}{d31}{}
      \coolcon{d31}{d32}{}
      \coolcon{d32}{d33}{}
      \coolcon{d33}{d34}{}
      \coolcon{d34}{d35}{}
      \coolcon{d35}{a30}{out=-45,in=-135,looseness=1.5}

      \node at (-0.25*\xr,3.175*\yr)[]{(d)};
    \end{scope}
  
 \end{tikzpicture}
 \caption{Illustration to the proof of~\cref{prop:34to4}.
 (a) An example graph~$G$ (excerpt) embedded on a grid where filled vertices correspond to vertices to connect (vertices of degree three).
 (b) The connecting paths constructed in the grid.
 (c) Embedding the edges in~$G$.
 (d) Dissolving the edge-crossings and replacing edges by~$R$-insertions.
 }
 \label{fig:deg3exillu}
\end{figure}
The following invariant immediately holds for~\cref{alg:rchains} 
by our ``left-to-right bottom-to-top'' approach
in an embedding given by~\cref{thm:gridembedding}.

\begin{observation}
 When~\cref{alg:rchains} detects 
 two degree-three vertices~$v$ (first) 
 and~$v'$ (second)
 with~$p(v)=(i,j)$ and~$p(v')=(i',j')$ to connect, 
 then,
 \begin{description}
  \item[if~$i=i'$:]
  there is and was no degree-three vertex~$w$, $p(w)=(x,y)$, 
  with~$x=i$ and~$j<y<j'$;
  \item[if~$i\neq i'$:]
  there is and was no degree-three vertex~$w$, $p(w)=(x,y)$, 
  with
  (i)~$i<x<i'$,
  (ii)~$x=i$ and~$y>j$,
  or
  (iii)~$x=i'$ and~$y<j'$.
 \end{description}
\end{observation}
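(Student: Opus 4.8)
The plan is to read the invariant directly off the control flow of \cref{alg:rchains}, after recording two small facts about what an $R$-connection does to the graph. The first fact: the two nested loops visit every integer coordinate of $\set{2n-4}\times\set{n-2}$ exactly once and in lexicographic order, and whenever the test ``$p^{-1}(x,y)=v$ is a degree-three vertex'' is evaluated, the vertex at that coordinate, if any, is an \emph{original} vertex of $G$. This holds because every vertex created during the run has degree four --- a vertex obtained by dissolving a crossing by \cref{obs:dissolving}, and every vertex of an inserted copy of $R$ other than its $x$ and $y$ (which get identified with pre-existing vertices) by the construction of $R$ --- and no operation performed after the $R$-connection creating it has finished lowers that degree. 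Hence the test can succeed only on a degree-three vertex of the input graph.

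The second fact is a monotonicity property: over the run, the only way an original vertex of $G$ ceases to be a degree-three vertex is by being $R$-connected. Indeed, an application of $R$-connect (\cref{def:Rconnect}) adds one edge between its two chosen endpoints, subdivides the edges it crosses, and replaces each dissolved piece of the added edge by an $R$-insertion; none of these steps lowers the degree of a pre-existing vertex or turns a vertex whose degree is not three into one of degree three, and the net effect on a pre-existing vertex is zero except at the two connected endpoints, whose degree rises from three to four. Consequently, at every point of the run an original vertex of $G$ is a degree-three vertex precisely if it has not yet been $R$-connected; in particular, for any integer coordinate the loop has not yet examined, ``there is a degree-three vertex there now'' and ``there was a degree-three vertex there at some earlier point'' are equivalent, which is exactly the content of the ``is and was'' wording in the statement.

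Now I would track the variable $a$: it starts as $\emptyset$, is assigned the current vertex exactly when a degree-three vertex is detected while $a=\emptyset$, and is reset to $\emptyset$ exactly when a degree-three vertex is detected while $a\neq\emptyset$, which is also precisely when an $R$-connection is performed. Fix the iteration that assigns $a\setto v$, with $p(v)=(i,j)$, and the later iteration that $R$-connects $a=v$ with the freshly detected vertex $v'$, with $p(v')=(i',j')$. At no iteration strictly between these two did the test succeed: otherwise an $R$-connection would have been performed there and $a$ would have been reset, contradicting $a=v$ at the later iteration. Together with the two facts above, this shows that no integer coordinate lexicographically strictly between $(i,j)$ and $(i',j')$ carries a degree-three vertex at the moment it is examined, hence none carries a degree-three vertex at any moment. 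Since $v$ is detected before $v'$ we have $(i,j)<_{\mathrm{lex}}(i',j')$, and spelling out ``lexicographically strictly between $(i,j)$ and $(i',j')$'' separately for $i=i'$ and for $i\neq i'$ gives exactly the two displayed conditions. The main obstacle is the monotonicity property of the second paragraph: one must check carefully that the internal edits of a single $R$-connection never move an original degree-three vertex to a different degree prematurely and never create a fresh degree-three vertex; the rest is a routine unwinding of the loop.
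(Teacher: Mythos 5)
Your proof is correct and follows the same route the paper takes: the paper states this observation as an immediate consequence of the ``left-to-right bottom-to-top'' loop order of \cref{alg:rchains} and gives no further argument. You simply make explicit the two supporting facts the paper leaves implicit (all vertices created by $R$-connections have degree four, and original degree-three vertices only lose that status upon being $R$-connected), which is a faithful elaboration rather than a different approach.
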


\noindent
It follows that every two~$R$-connections will be non-crossing and vertex-disjoint.

\begin{proposition}%
 \label{prop:34to4}
 Let~$\I=(G,k)$ be an instance of~\fvsTsc{} with
 $G$ being planar and of minimum degree three and maximum degree four.
 Then one can compute an equivalent instance~$\I'=(G',k')$
 with~$k'\in \O(|V(G)|^2)$ and~$G'$ having~$\O(|V(G)|^2)$ vertices and edges,
 and being 4-regular planar.
\end{proposition}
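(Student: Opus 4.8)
The plan is to compute, via \cref{thm:gridembedding}, a straight-line embedding of $G$ on the $(2n-4)\times(n-2)$ grid (in $\O(n^2)$ time, where $n\ceq|V(G)|$), run \cref{alg:rchains} on it, and verify that its output $\I'=(G',k+3d)$ is equivalent to $\I$, has $G'$ being $4$-regular planar, and meets the size bounds. For \emph{equivalence} it suffices to show that one $R$-connection of a pair $v,v'$ of degree-three vertices turns the current instance $(H,k)$ into an equivalent one with budget increased by $3$ per $R$-insertion performed; the claim then follows by induction over the at most $n/2$ pairs processed by the algorithm. Let $P_1,\dots,P_c$ be the crossings of the polyline $f$ added in step~1 with the edges of $H$, ordered along $f$ from $v$ to $v'$. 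Dissolving them turns $f$ into a path $v-a_1-\dots-a_c-v'$, where $a_t$ is the vertex created at $P_t$ and simultaneously subdivides the edge of $H$ it lies on; step~3 then deletes each of the $c+1$ path-edges and performs an $R$-insertion at its two distinct endpoints. Hence the graph produced by the $R$-connection is obtained from $H$ by first subdividing $c$ edges and then doing $c+1$ many $R$-insertions, at $(v,a_1),(a_1,a_2),\dots,(a_c,v')$. Subdividing an edge preserves membership in \fvsAcr{} with the same budget (a degree-two vertex may be taken outside a minimum feedback vertex set, after which the feedback vertex sets of the two graphs correspond), and $c+1$ applications of \cref{obs:RinsPlus3} add $3(c+1)$ to the budget; thus the $R$-connection preserves equivalence while adding $3$ to the budget per $R$-insertion, and summing over all pairs gives exactly $(G',k+3d)$.

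For \emph{$4$-regularity} I would track degrees. Every degree-three vertex of $G$ is paired exactly once and thereby becomes an endpoint of exactly one $R$-insertion; since $x$ and $y$ have degree one in $R$, identifying one of them with such a vertex raises its degree from three to four and never changes it afterwards. Degree-four vertices of $G$ are never modified (each added polyline has both endpoints of degree three, and every crossing subdivides an edge in its interior, not at an endpoint). Each dissolution vertex $a_t$ has degree two after subdividing its host edge and is an endpoint of exactly two $R$-insertions, each of which removes one incident path-edge and adds one edge to a fresh copy of $R$, for net degree four. Finally, inside each inserted copy of $R$ every vertex other than $x,y$ has degree four by the definition of $R$, and $x,y$ are identified with the (degree-four) endpoints; here one uses that the $R$-chains are pairwise vertex-disjoint, which is guaranteed by the ``left-to-right bottom-to-top'' invariant recorded before the statement.

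For \emph{planarity and the bounds}, start from the straight-line embedding: each polyline $f$ is a simple curve, distinct polylines are non-crossing and vertex-disjoint (same invariant), and for generic $\eps$ at most two edges meet at any point, so \cref{obs:dissolving} applies and dissolving all crossings leaves a plane graph. Each $R$-insertion in step~3 replaces one edge $f_i$ by a copy of $R$; since $R$ is planar and, as \cref{fig:gtb}(a) shows, admits a plane drawing with $x$ and $y$ on a common face, this copy can be drawn inside a thin tube around $f_i$ without new crossings, so $G'$ is planar. Since $G$ is planar it has $\O(n)$ edges, each polyline consists of $\O(1)$ straight segments and hence has $\O(n)$ crossings with $G$-edges, and there are $\O(n)$ polylines, so there are $\O(n^2)$ dissolutions and $R$-insertions in total; as each copy of $R$ has constant size, $G'$ has $\O(n^2)$ vertices and edges and $k'=k+3d=\O(n^2)$ (assuming, as we may, $k\le n$), and every step runs in polynomial time.

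The main obstacle is the planar bookkeeping: one must be certain that the ``left-to-right bottom-to-top'' pairing really does make all $R$-chains pairwise non-crossing and vertex-disjoint — which is exactly what forces every dissolution vertex into the interior of an original edge of $G$ and thus makes the degree count come out to four — and that each copy of $R$ can be inserted along its edge without creating crossings. Given \cref{obs:RinsPlus3} and \cref{obs:dissolving}, the equivalence and the degree computation are then routine.
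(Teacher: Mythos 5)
Your proposal is correct and follows essentially the same route as the paper: grid embedding via \cref{thm:gridembedding}, running \cref{alg:rchains}, equivalence via subdivisions plus \cref{obs:RinsPlus3}, planarity and 4-regularity via the dissolution step and the non-crossing/vertex-disjointness invariant of the $R$-chains, and the $\O(n^2)$ bound from $\O(n)$ polylines each crossing the $\O(n)$ straight-line edges a constant number of times. You merely spell out the degree bookkeeping and the per-insertion budget accounting in more detail than the paper does.
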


\begin{proof}
 Let~$\I=(G,k)$ be an instance of \fvsAcr{} where~$G$ is planar and of minimum degree three and maximum degree four.
 Let~$n\ceq |V(G)|$.
 Compute an embedding on the~$(2n-4)\times (n-2)$-grid in polynomial time
 (\cref{thm:gridembedding}).
 Employ \cref{alg:rchains}
 to obtain instance~$\I'=(G',k')$ in polynomial time.
 Since~$G'$ is obtained by a series of subdivisions and~$R$-insertions,
 where we count additional three vertices to delete for every~$R$-insertion,
 $\I'$ is equivalent to~$\I$ (\cref{obs:RinsPlus3}).
 Moreover,
 since we dissolve every crossing,
 $G'$~is planar and 4-regular~(\cref{obs:dissolving}).
 
 Each pair 
 of the at most~$\ceil{n/2}$ pairs 
 crosses at most each of the at most~$3n-6$ edges
 twice.
 Since each two of the $R$-chains are disjoint,
 there are at most~$\O(n^2)$ subdivisions and~$R$-insertions.
 \lqed
\end{proof}

\noindent
We are set to prove 
\cref{thm:fvs:4regplanar}.

\begin{proof}[Proof of~\cref{thm:fvs:4regplanar}]
  Let~$(G,k)$ be an instance of the 
  \NP-hard
  \fvsTsc{} on planar graphs of minimum degree three and maximum degree four
  (\cref{prop:fvs:threefour}).
  Due to~\cref{prop:34to4},
  we can obtain an equivalent instance~$(G',k')$ 
  in polynomial~time where
  $G'$ is planar and 4-regular.
  \lqed
\end{proof}
\subsection{4-regular planar Hamiltonian}
\label{ssec:4regplanarham}

In~\cref{ssec:4regplanar},
we proved that
\fvsTsc{} is \NP-hard on 4-regular planar graphs.
We next give a polynomial-time many-one reduction 
to an equivalent instance with a 4-regular planar Hamiltonian graph.

\begin{theorem}%
 \label{thm:fvs:4regplanaHam}
 \fvsTsc{} on 4-regular planar Hamiltonian graphs
 is \NP-hard,
 even if a Hamiltonian cycle is provided.
\end{theorem}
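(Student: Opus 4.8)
The plan is to reduce from \fvsTsc{} on 4-regular planar graphs (\Cref{thm:fvs:4regplanar}) to the same problem on 4-regular planar \emph{Hamiltonian} graphs, where additionally a Hamiltonian cycle is produced alongside the output instance. The natural tool is the graph~$L$ from the tool box, which by inspection of \Cref{fig:gtb}(b) is planar, has an~$x$-$y$ Hamiltonian path, is 4-regular except that~$x$ and~$y$ have degree two, and (analogously to \Cref{obs:R} for~$R$) has a predictable feedback-vertex-set behaviour: one should first establish a lemma, parallel to \Cref{obs:R}, stating that~$L$ has a minimum feedback vertex set of size exactly four (two in each~$C_4*K_1$, by \Cref{obs:c4k1}), that some such set avoids both~$x$ and~$y$, and — crucially for correctness of the reduction — that deleting any four vertices that does hit~$x$ or~$y$ still forces the rest of an~$L$-copy to lose a cycle, so replacing an edge by an $L$-insertion changes the optimum by exactly~$4$ regardless of the rest of the graph. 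This gives an analogue of \Cref{obs:RinsPlus3}: an $L$-insertion at the endpoints of an edge, together with $k' \ceq k+4$, preserves equivalence.

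First I would fix a planar embedding of the input 4-regular planar graph~$G$ and compute an arbitrary spanning tree or, more directly, exploit that a connected 4-regular planar graph is Eulerian; but the cleaner route is the standard ``Hamiltonian-ization via an auxiliary Hamiltonian structure'' trick: take a closed walk or a spanning connected subgraph through which one can thread a single cycle. Concretely, I would take a depth-first traversal of~$G$ and select a set~$F$ of edges so that the edges of~$F$, once each is subdivided by an $L$-insertion, together with one Hamiltonian-path segment per inserted~$L$-copy, link up into one global Hamiltonian cycle of the new graph~$G'$. The point of using~$L$ rather than~$R$ is degree bookkeeping: each $L$-insertion at~$u,v$ attaches~$u$ and~$v$ each to a degree-two terminal of~$L$, so it \emph{raises} the degree of~$u$ and of~$v$ by one; hence every vertex must be chosen as an endpoint of exactly the right number of inserted gadgets to land back at degree four. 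This forces us to pick, for each vertex~$v$, as many incident $L$-insertions as needed to pad~$\deg_G(v)$ up to~$4$ after whatever edges incident to~$v$ were deleted/rerouted — since~$G$ is already 4-regular, the natural choice is: delete a carefully chosen subset of edges of~$G$ so that the remaining graph plus the inserted $L$-gadgets (each contributing two degree-one increments) is again 4-regular, and meanwhile the remaining edges plus one Hamiltonian path through each~$L$ form a single spanning cycle.

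Making the two requirements — global Hamiltonicity and 4-regularity — compatible simultaneously is the main obstacle, and it is essentially a routing/parity argument on the planar embedding: I would want a spanning Eulerian-type substructure whose edges we subdivide by $L$-gadgets, choosing which original edges to keep so that at every vertex the count works out. A clean way to organize this is to start from a Hamiltonian-ish backbone: pick an arbitrary spanning tree~$T$ of~$G$, replace every \emph{non-tree} edge by an $L$-insertion (so those edges are removed and re-added as gadget-attachments), and then observe that the tree edges together with the Hamiltonian paths inside the inserted~$L$'s can be linked into a single Hamiltonian cycle provided we traverse~$T$ in an Euler-tour fashion — each vertex of degree~$d$ in~$T$ is visited the right number of times, and the number of $L$-insertions incident to a vertex~$v$ equals~$\deg_G(v)-\deg_T(v)+$ (an adjustment from the Euler tour), which one checks sums correctly to restore degree four everywhere because~$G$ was 4-regular and~$T$ has the right edge count. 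Planarity is preserved throughout since each~$L$ is planar with~$x,y$ on its outer face and is glued at a single edge slot (as in \Cref{fig:gtb}), so the gadgets can be drawn in disjoint faces. The correctness of the $k$-shift is immediate from the $L$-analogue of \Cref{obs:RinsPlus3} summed over all insertions. The delicate point to write out carefully is precisely the degree/Euler-tour arithmetic that guarantees 4-regularity of~$G'$ and that the assembled cycle is genuinely spanning and simple; I expect this bookkeeping, rather than any conceptual difficulty, to be where the real work lies.
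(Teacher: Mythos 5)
You have the right gadget ($L$) and the right accounting lemma (your analogue of \Cref{obs:RinsPlus3} is exactly \Cref{lem:Linsertion}), but the construction you sketch around it does not work, and the idea you are missing is the one the paper's proof is built on: a \emph{2-factor}. Two concrete failures. First, the degree arithmetic: in $L$ the terminals $x$ and $y$ each have degree \emph{two} ($x$ is adjacent to both $x'$ and $x''$), so an $L$-insertion at $u,v$ raises $\deg(u)$ and $\deg(v)$ by two, not by one as you assume. Even granting that you also delete the replaced non-tree edge, a vertex $v$ of a 4-regular graph with spanning tree $T$ would end up with degree $\deg_T(v)+2\bigl(4-\deg_T(v)\bigr)=8-\deg_T(v)$, which equals four only if every vertex has tree-degree four --- impossible for a tree. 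The paper's fix is to \emph{subdivide} edges first, creating degree-two vertices, and to insert $L$ only at such subdivision vertices, which then land exactly at degree four. Second, and more fundamentally, the Euler-tour-of-a-spanning-tree idea cannot yield a Hamiltonian cycle: an Euler tour of $T$ visits a vertex of tree-degree $d$ exactly $d$ times and uses every tree edge twice, so ``tree edges plus one Hamiltonian path per inserted gadget'' is not a simple spanning cycle unless $T$ is already a Hamiltonian path. No amount of bookkeeping repairs this; the backbone you need is a spanning subgraph in which every vertex already has degree exactly two.

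That backbone is a 2-factor, which every 4-regular graph admits and which is computable in polynomial time. The paper (\Cref{constr:2factormerging}) then merges the cycles of the 2-factor one pair at a time: take adjacent vertices $u\in V(Q_i)$, $v\in V(Q_j)$ in distinct components, subdivide an edge of $Q_i$ at $u$ and an edge of $Q_j$ at $v$ that share a face (or, if no cofacial pair exists, route through a third edge incident to $u$ using two $L$-insertions), and thread the Hamiltonian $x$-$y$ path of each inserted $L$ into the merged cycle. Each step preserves planarity, 4-regularity, and equivalence (adding $4$ or $8$ to $k$), and decreases the number of 2-factor components by one, so after at most $n/3$ rounds the 2-factor is a single Hamiltonian cycle, which is output with the instance. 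Your $+4$ shift per insertion and your reading of \Cref{obs:L} are fine; the gap is that without the 2-factor your construction produces neither a 4-regular graph nor a Hamiltonian cycle.
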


\noindent
We will follow the idea of Fleischner and Sabidussi~\cite{FleischnerS03}:
We first compute a 2-factor in polynomial time,
and then iteratively connect cycles from the 2-factor by $L$-insertions
to obtain a Hamiltonian cycle.
Note that~$L$ contains 
two vertex-disjoint~$C_4*K_1$s,
hence admits no feedback vertex set of size three
and no feedback vertex set of size four containing~$x$ or~$y$
(\cref{obs:c4k1}).
Yet,
$L$~admits feedback vertex sets each of size four disconnecting~$x$ and~$y$
(see~\cref{fig:gtb}).

\begin{lemma}
 \label{obs:L}
 Graph~$L$ is planar,
 admits a Hamiltonian~$x$-$y$~path,
 has no feedback vertex set of size three,
 but one of size four disconnecting~$x$ and~$y$,
 and none of size four containing one of~$x$ or~$y$.
\end{lemma}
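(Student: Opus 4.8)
The plan is to follow the template of the proof of \cref{obs:R}: \cref{obs:c4k1} does all the work for the lower bounds, and \cref{fig:gtb}(b) witnesses the constructive parts. Planarity is clear, since $C_4*K_1$ is planar (it is the wheel on five vertices) and gluing two copies along the two edges $\{v,v'\},\{w,w'\}$ and attaching the degree-two vertices $x,y$ can be done without crossings, exactly as drawn in \cref{fig:gtb}(b). The Hamiltonian $x$-$y$ path can be read off the same figure: a $C_4*K_1$ admits a Hamiltonian path between any pair of non-adjacent rim vertices, so one goes from $x$ into the first copy at $x'$, through all five of its vertices, out at the endpoint $v$ of a connecting edge, across to $v'$, through all five vertices of the second copy, out at $y'$ or $y''$, and down to $y$.

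For the feedback vertex set of size four that separates $x$ from $y$, I would exhibit the set highlighted in \cref{fig:gtb}(b) and check it directly. A concrete choice: from the first $C_4*K_1$ delete $v$ together with the vertex of $\{x',x''\}$ non-adjacent to $v$; from the second $C_4*K_1$ delete $w'$ together with the vertex of $\{y',y''\}$ non-adjacent to $w'$. In each copy this removes a pair of non-adjacent rim vertices, leaving a path on the remaining three vertices; deleting a vertex of $\{x',x''\}$ and of $\{y',y''\}$ destroys the triangles $xx'x''$ and $yy'y''$; and deleting $v$ and $w'$ destroys both connecting edges. Hence $L$ minus these four vertices is a union of two disjoint paths, one containing $x$ and the other containing $y$, which simultaneously establishes that the set is a feedback vertex set and that it disconnects $x$ from $y$.

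The two lower bounds both come from the fact that $L$ contains two \emph{vertex-disjoint} copies of $C_4*K_1$ as subgraphs (the construction adds edges only between the copies and at $x,y$, never inside a copy). By \cref{obs:c4k1} every feedback vertex set must contain at least two vertices of each copy, so, by disjointness, at least four vertices overall; in particular none of size three exists. For the last claim, suppose $S$ is a feedback vertex set with $|S|=4$ and $x\in S$ (the case $y\in S$ is symmetric). Then $S\setminus\{x\}$ is a feedback vertex set of $L-x$; but $x$ lies in neither copy of $C_4*K_1$, so both copies survive vertex-disjointly in $L-x$, and \cref{obs:c4k1} forces $|S\setminus\{x\}|\geq 4>3=|S\setminus\{x\}|$, a contradiction.

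I do not expect a real obstacle here. The two places that need a moment's care are purely bookkeeping: verifying that the displayed four-vertex set cuts \emph{both} of the edges $\{v,v'\}$ and $\{w,w'\}$ (this ``disconnecting'' property is what is later used to splice two cycles of a 2-factor together by an $L$-insertion), and noting that $x$ and $y$ are outside both $C_4*K_1$s, so deleting either of them buys nothing toward hitting the two copies.
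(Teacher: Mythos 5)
Your proposal is correct and matches the paper's argument, which likewise derives the two lower bounds from the fact that $L$ contains two vertex-disjoint copies of $C_4*K_1$ (so \cref{obs:c4k1} forces at least two deletions in each copy, and since $x,y$ lie in neither copy a size-four set cannot afford to contain them) and reads the planarity, the Hamiltonian $x$-$y$ path, and the disconnecting size-four feedback vertex set off \cref{fig:gtb}(b). The only nit is your parenthetical that the wheel has a Hamiltonian path between any pair of \emph{non-adjacent} rim vertices — the entry vertex $x'$ and the exit vertex $v$ may in fact be adjacent on the rim, but the wheel $C_4*K_1$ has a Hamiltonian path between \emph{any} two rim vertices, so nothing breaks.
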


\noindent
\cref{obs:L} immediately implies the following.

\begin{observation}
 \label{lem:Linsertion}
 Let~$\I=(G,k)$ be an instance of~\fvsTsc{}
 and let~$u,v\in V(G)$.
 Let~$G'$ be the graph obtained from~$G$ by an~$L$-insertion at~$u,v$
 and let~$k'\ceq k+4$.
 Then~$\I$ is a \yes-instance
 if and only if
 $(G',k')$  is a \yes-instance of~\fvsTsc{}.
\end{observation}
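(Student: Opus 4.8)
The plan is a standard two-way gadget argument, built on one structural observation: $G'$ is obtained by gluing $G$ and a copy of $L$ along the two identified vertices $u=x$ and $v=y$, and $\{u,v\}$ are the only vertices the two parts share. I would first record the consequence that every cycle of $G'$ either lies entirely in $G$, or lies entirely in the copy of $L$, or uses edges of both parts — and in the last case it passes through both $u$ and $v$, so it splits into a $u$-$v$ path inside $G$ and a $u$-$v$ path inside $L$ all of whose internal vertices lie in $V(L)\setminus\{x,y\}$. I would also use that $V(L)\setminus\{x,y\}$ consists of fresh vertices, so $V(G')=V(G)\cup(V(L)\setminus\{x,y\})$ with $V(G)\cap(V(L)\setminus\{x,y\})=\emptyset$.

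For the forward direction, given a feedback vertex set $U$ of $G$ with $|U|\le k$, I would take $U'\ceq U\cup S^*$, where $S^*$ is a feedback vertex set of $L$ of size four that disconnects $x$ and $y$; such an $S^*$ exists by \cref{obs:L}, and since \cref{obs:L} also says that no size-four feedback vertex set of $L$ contains $x$ or $y$, we have $S^*\subseteq V(L)\setminus\{x,y\}$, so $S^*$ is disjoint from $V(G)$ and $|U'|\le k+4=k'$. To check that $G'-U'$ is acyclic I would classify a cycle $C$ of $G'$ using the observation above: if $C$ lies in $G$ it meets $U$; if $C$ lies in $L$ it meets the feedback vertex set $S^*$; and if $C$ uses both parts then it contains a $u$-$v$ path of $L$ with internal vertices in $V(L)\setminus\{x,y\}$, that is, an $x$-$y$ path in $L-S^*$, which is impossible because $S^*$ separates $x$ and $y$ in $L$.

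For the backward direction, given $U'\subseteq V(G')$ with $|U'|\le k'$ and $G'-U'$ acyclic, I would set $U\ceq U'\cap V(G)$; then $G-U$ is a subgraph of $G'-U'$, hence acyclic, so it only remains to bound $|U|$. Let $W\ceq U'\cap(V(L)\setminus\{x,y\})$. Deleting $W$ from the subgraph $L-\{x,y\}$ of $G'$ leaves a subgraph of $G'-U'$, so $W$ is a feedback vertex set of $L-\{x,y\}$. Now $L$ is built from two vertex-disjoint copies of $C_4*K_1$ together with the vertices $x$, $y$ and a few extra edges, so both copies are still present in $L-\{x,y\}$; for each copy, its intersection with $W$ is a feedback vertex set of that copy, hence has size at least two by \cref{obs:c4k1}, and since the two copies are vertex-disjoint we get $|W|\ge 4$. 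Hence $|U|=|U'|-|W|\le k'-4=k$, and $U$ witnesses that $\I=(G,k)$ is a \yes-instance.

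There is no deep obstacle here — the lemma really is an immediate consequence of \cref{obs:L} (and the explicit construction of $L$, via \cref{obs:c4k1}). The one point I would make sure to state carefully is the cycle classification in the first paragraph: that a cycle of $G'$ meeting edges of both $G$ and $L$ must run through both $u$ and $v$. This is exactly what turns ``$S^*$ disconnects $x$ and $y$'' (rather than merely ``$S^*$ is a feedback vertex set of $L$'') into the property needed to kill the ``mixed'' cycles in the forward direction.
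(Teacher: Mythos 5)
Your proof is correct and is exactly the argument the paper intends: the paper states this observation as an immediate consequence of \cref{obs:L} without writing out a proof, and your write-up supplies the standard gadget argument (cycle classification via the cut $\{u,v\}$, the size-four separating feedback vertex set for the forward direction, and the two disjoint $C_4*K_1$s via \cref{obs:c4k1} for the lower bound in the backward direction). No discrepancies to report.
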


\noindent
A \emph{2-factor} in a graph~$G$ is a spanning 2-regular subgraph.
We represent a 2-factor by its components~$Q=\{Q_1,\dots,Q_q\}$,
where~$Q_i$ is a cycle for every~$i\in\set{q}$.
Every 4-regular graph admits a 2-factor computable in polynomial time~\cite{Mulder92,FleischnerS03}.
We first compute a 2-factor~$Q=\{Q_1,\dots,Q_q\}$ of~$G$,
and then iteratively 
make~$L$-insertions to merge cycles from~$Q$
until~$|Q|=1$.
\begin{construction}
 \label{constr:2factormerging}
 Let~$G$ be a connected planar 4-regular graph, 
 and let~$Q$ be a 2-factor of~$G$.
 If~$|Q|>1$,
 then there are two distinct cycles~$Q_i,Q_j\in Q$
 with two adjacent vertices~$u,v$ with~$u\in V(Q_i)$ and~$v\in V(Q_j)$.
 We distinguish two cases
 (see~\cref{fig:2factor} for an illustration).
 \begin{figure}[t]
  \centering
  \begin{tikzpicture}
   \def\xr{1}
   \def\yr{1}
   \def\xsh{2.625}
   \tikzpramble{}
   
   \newcommand{\exmex}{%
    \node (u) at (0,0)[xnode,label=90:{$u$}]{};
    \node (v) at (\xsh*\xr,0)[xnode,label=90:{$v$}]{};
    \node (u1) at (-1*\xr,0.5*\yr)[]{};
    \node (u2) at (-1*\xr,0*\yr)[]{};
    \node (u3) at (0*\xr,-1*\yr)[]{};
    \node (v1) at (\xsh*\xr+1*\xr,0.5*\yr)[]{};
    \node (v2) at (\xsh*\xr+1*\xr,0*\yr)[]{};
    \node (v3) at (\xsh*\xr,-1*\yr)[]{};
    \draw[xedge] (u) to (v);
    \foreach\x in {1,...,3}{\draw[xedge] (u) to (u\x);\draw[xedge] (v) to (v\x);}
   }
   \newcommand{\exmexx}{%
    \node (u) at (0,0)[xnode,label=90:{$u$}]{};
    \node (v) at (\xsh*\xr,0)[xnode,label=90:{$v$}]{};
    \node (u1) at (-1*\xr,0.5*\yr)[]{};
    \node (u2) at (-1*\xr,0*\yr)[]{};
    \node (u3) at (0*\xr,-1*\yr)[]{};
    \node (v1) at (\xsh*\xr+1*\xr,0.5*\yr)[]{};
    \node (v2) at (\xsh*\xr+2*\xr,-0.6*\yr)[]{};
    \node (v3) at (\xsh*\xr,-1.5*\yr)[]{};
    \draw[xedge] (u) to (v);
    \foreach\x in {1,...,3}{\draw[xedge] (u) to (u\x);}
    \foreach\x in {1,3}{\draw[xedge] (v) to (v\x);}
    \draw[xedge] (v) to [out=15,in=90](v2);
   }
   
   \begin{scope}[]
    \node at (-1.25*\xr,0.75*\yr)[]{(a)};
    \exmex{}
    \draw[xpathx] (u1) to (u) to (u3);
    \draw[xpathy] (v2) to (v) to (v3);
   \end{scope}
   
   \begin{scope}[yshift=-2.25*\yr cm]
    
    \node at (\xsh*0.5,0.75*\yr)[rotate=-90]{$\leadsto$};
    \exmex{}
    \begin{scope}[scale=0.75]
    \theL{a}{b}{1}{-0.75}
    
    \draw[xpath] (u1) to (u) to (v) to (v2);
    \draw[xpath] (ybx) to (v3);
    \draw[xpath] (xax) to (u3);
    \theLpath{a}{b}
    \end{scope}
   \end{scope}
   
   \begin{scope}[xshift=7*\xr cm]
    \node at (-1.25*\xr,0.75*\yr)[]{(b)};
    \exmexx{}
    \draw[xpathx] (u2) to (u) to (u3);
    \draw[xpathy] (v1) to (v) to [out=15,in=90](v2);
   \end{scope}
   
   \begin{scope}[xshift=7*\xr cm,yshift=-2.25*\yr cm]
    \node at (\xsh*0.5,0.75*\yr)[rotate=-90]{$\leadsto$};
    \exmexx{}
    \begin{scope}[scale=0.75]
    \theL{a}{b}{1}{-0.75}
    \theLpath{a}{b}
    \end{scope}
    
    \begin{scope}[scale=0.625,shift={(5.1*\xr,-1.5*\yr)},rotate=30]
    \theL{c}{d}{0}{0}
    \theLpath{c}{d}
    \end{scope}
    \draw[xpath] (u2) to (u) to (v) to (v1);
    \draw[xpath] (ybx) to (xcx);
    \draw[xpath] (xax) to (u3);
    \draw[xpath] (ydx) to (v2);
   \end{scope}
  \end{tikzpicture}
  \caption{Illustration to~\cref{constr:2factormerging} of (a) Case 1 and (b) Case 2.
  Indicated are
  two different 2-factor components in each upper part (magenta and cyan)
  and the ``merged'' cycle,
  being part of the new 2-factor with one less component,
  in the lower part (blue).}
  \label{fig:2factor}
 \end{figure}
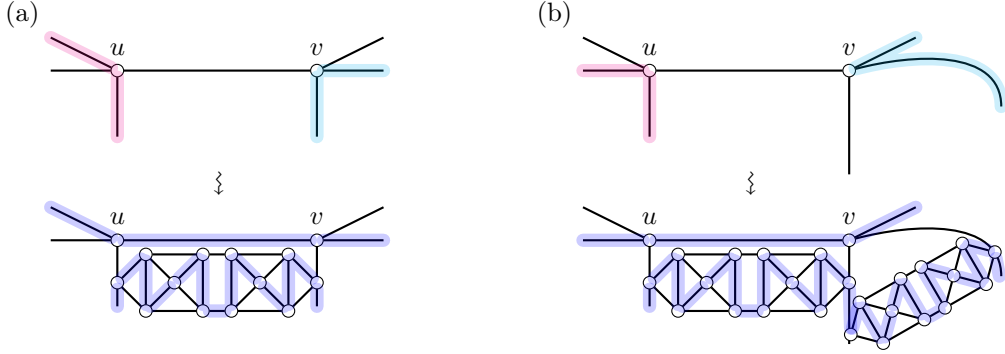

 \begin{description}
  \item[Case 1:] There are two edges~$e\in E(Q_i)$ and~$e'\in E(Q_j)$
 with~$u\in e$ and~$v\in e'$
 sharing the same face.
 Then subdivide~$e$ (denote the obtained vertex~$z$),
 subdivide~$e'$ (denote the obtained vertex~$z'$),
 and then make an~$L$-insertion at~$z,z'$
 (see~\cref{fig:2factor}(a)).
 Since~$L$ admits a Hamiltonian path starting at~$x$ and ending at~$y$
 (\cref{obs:L}, see~\cref{fig:gtb}(b)),
 we can construct a new 2-factor merging~$Q_i$ and~$Q_j$ using the newly added vertices and the edge~$\{u,v\}$.
  \item[Case 2:] 
 There are no two edges~$e\in E(Q_i)$ and~$e'\in E(Q_j)$
 with~$u\in e$ and~$v\in e'$
 sharing the same face.
 Since every vertex is of degree four,
 there is an edge~$e\in E(Q_i)$ with~$u\in e$ that shares a face with an edge~$\tilde{e}\notin E(Q_i)$ with~$u\in \tilde{e}$ sharing a face with an edge
 $e'\in E(Q_j)$
 with~$v\in e'$.
 Subdivide edge~$e$ (denote the obtained vertex~$z$),
 the edge~$\tilde{e}$ twice (denote the obtained vertices~$\tilde{z}_1,\tilde{z}_2$),
 and the edge~$e'$ (denote the obtained vertex~$z'$).
 Now,
 make two~$L$-insertions,
 one at~$z,\tilde{z}_1$ and one at~$\tilde{z}_2,z'$
 (see~\cref{fig:2factor}(b)).
 Again (\cref{obs:L}, see~\cref{fig:gtb}(b)),
 we can construct a new 2-factor merging~$Q_i$ and~$Q_j$ using the newly added vertices and the edge~$\{u,v\}$.\cqed
 \end{description}
\end{construction}

\noindent
We get the following.

\begin{lemma}
 Let~$\I=(G,k)$ be an instance of~\fvsTsc{}
 with~$G$ being connected 4-regular planar
 and let~$Q$ be a 2-factor of~$G$ with~$|Q|\geq 2$.
 Let~$G'$ and~$Q'$ be the graph and the 2-factor obtained from~\cref{constr:2factormerging},
 respectively,
 and let~$k'\ceq k+4$ (in Case~1)
 or~$k'\ceq k+8$ (in Case~2).
 Then~$\I$ is a \yes-instance
 if and only if
 $(G',k')$ is a \yes-instance of~\fvsTsc{}.
 Moreover,
 $G'$ is connected planar 4-regular
 and~$|Q'|=|Q|-1$.
\end{lemma}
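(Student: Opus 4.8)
The plan is to verify the three claimed properties of $(G',k')$ separately, treating Case~1 and Case~2 of \cref{constr:2factormerging} in parallel but noting that Case~2 is just Case~1 applied twice. First I would establish equivalence. In Case~1 the graph $G'$ is obtained from $G$ by two subdivisions followed by a single $L$-insertion at the two new degree-two vertices $z,z'$. Subdividing an edge never changes the answer to \fvsTsc{} (a subdivision vertex lies on exactly the cycles its edge did, and has degree two, so it is never needed in and never blocks a feedback vertex set; formally, $(G,k)$ and $(\text{subdivide},k)$ are equivalent). Then \cref{lem:Linsertion} gives that the single $L$-insertion changes the parameter by exactly $+4$, so $\I$ is a \yes-instance iff $(G',k+4)$ is. In Case~2 we perform four subdivisions and \emph{two} $L$-insertions, so applying \cref{lem:Linsertion} twice yields the parameter shift $+8$; the subdivisions again contribute nothing. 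This settles the ``iff'' statement with $k'=k+4$ (Case~1) or $k'=k+8$ (Case~2).

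Next I would check that $G'$ is connected, planar, and $4$-regular. Planarity: subdivisions preserve planarity, and by \cref{obs:L} the graph $L$ is planar; moreover the $L$-insertion is performed at two vertices $z,z'$ that, by the case hypotheses, lie on a common face of the current plane embedding (in Case~1 directly; in Case~2 after the first insertion the new vertices still share a face by the ``two faces meeting at $u$'' argument in the construction), so the copy of $L$ can be drawn inside that face without crossings. For $4$-regularity: every vertex of $G$ had degree $4$ and is untouched; each subdivision vertex has degree $2$ but is immediately absorbed as the endpoint $x$ or $y$ of an $L$-insertion, and in $L$ the vertices $x$ and $y$ have degree $2$ (they are each adjacent only to two degree-three vertices of a $C_4*K_1$), so after identification $z$ (resp.\ $z'$) gets its original two incident subdivided half-edges plus two edges into $L$, for total degree $4$; all internal vertices of $L$ already have degree $4$ except $x,y$, which are the identified ones. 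Connectivity is immediate since $G$ is connected and we only add connected pieces attached to existing vertices.

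Finally, $|Q'|=|Q|-1$. Here the key input is that $L$ admits a Hamiltonian $x$-$y$ path (\cref{obs:L}, illustrated in \cref{fig:gtb}(b)). In Case~1, the cycle $Q_i$ passes through $z$ via the two half-edges of the subdivided $e$, and $Q_j$ through $z'$ similarly; we reroute by deleting the two original edges $e,e'$ from the $Q$-traversal at $z$ and $z'$, inserting instead the Hamiltonian $x$-$y$ path of the new $L$ at $z$ and the edge $\{u,v\}$, thereby splicing $Q_i$ and $Q_j$ into one cycle while leaving all other components of $Q$ untouched — the new edge set is still $2$-regular and spanning (it spans $V(L)$ precisely because the inserted path is Hamiltonian in $L$), so $Q'$ is a genuine $2$-factor with one fewer component. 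In Case~2 the same splicing is done through the chain $z,\tilde z_1,\tilde z_2,z'$ using two Hamiltonian $x$-$y$ paths (one per inserted $L$) together with the edge $\{u,v\}$ and the subdivided pieces of $\tilde e$; again exactly $Q_i$ and $Q_j$ merge. The main obstacle — really the only place that needs care — is bookkeeping this rerouting: one must confirm that the described walk is a simple cycle (no repeated vertices) and that it covers \emph{all} of the newly added vertices, which is exactly why the Hamiltonicity of the $x$-$y$ path in $L$, rather than mere connectivity, is essential; the rest is routine verification against \cref{fig:2factor}.
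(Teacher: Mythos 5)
Your proof is correct and follows the same route the paper intends: the lemma is stated there without an explicit proof, as an immediate consequence of \cref{constr:2factormerging} together with \cref{lem:Linsertion}, and your verification (subdivision-invariance of \fvsTsc{}, one or two applications of the $L$-insertion observation for the $+4$/$+8$ shift, the degree and planarity bookkeeping, and splicing $Q_i$ and $Q_j$ through the Hamiltonian $x$-$y$ paths of $L$ plus the edge $\{u,v\}$) is exactly the argument being left implicit. No gaps; the only loose phrasing is calling Case~2 ``Case~1 applied twice,'' which you correct yourself when you handle its routing through $\tilde{e}$ separately.
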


\noindent
We are set to prove 
the main result of this section.
\begin{proof}[Proof of~\cref{thm:fvs:4regplanaHam}]
 Let~$\I=(G,k)$ be an instance of \fvsTsc{}
 on connected 4-regular planar graphs.
 Compute a 2-factor~$Q$ of~$G$ in polynomial time.
 Next,
 iteratively apply~\cref{constr:2factormerging} 
 (at most~$n/3$ times~\cite{FleischnerS03})
 to obtain an equivalent instance~$\I'=(G',k',C')$
 in polynomial time,
 where
 $G'$ is 4-regular planar Hamiltonian graph
 with Hamiltonian cycle~$C'$.
 \lqed
\end{proof}
\subsection{5-regular planar Hamiltonian}
\label{ssec:5regplanarham}

In this section,
we prove that~\fvsTsc{}
is also \NP-hard on 5-regular planar Hamiltonian graphs with provided Hamiltonian cycle.
To this end, 
we start from a 4-regular planar Hamiltonian graph
and then make it 5-regular by a~$D$-insertion at every second edge of a Hamiltonian cycle.
Hence,
we need to
ensure the 4-regular Hamiltonian graph to have an even number of vertices.
To this end,
we take two disjoint copies of the input graph
and connect them via two~$L$-insertions
(see~\cref{fig:4regeven} for an illustration).
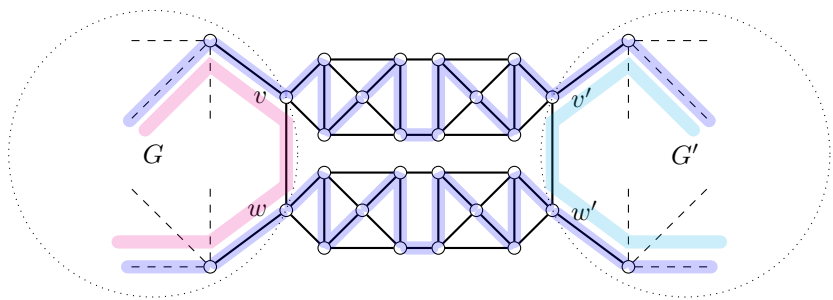
\begin{figure}[t]
  \centering
  \begin{tikzpicture}
    \def\xr{1}
    \def\yr{1}
    \tikzpramble{};
    
    \theL{a1}{a2}{0}{0.75}
    \theL{b1}{b2}{0}{-0.75}
    
    \node at (xa1x)[label=180:{$v$}]{};
    \node at (ya2x)[label=0:{$v'$}]{};
    \node at (xb1x)[label=180:{$w$}]{};
    \node at (yb2x)[label=0:{$w'$}]{}; 
    
    \node (a1) at (-2*\xr,1.5*\yr)[xnode]{};
    \node (a2) at (-2*\xr,-1.5*\yr)[xnode]{};
    \draw[xedge] (a1) to (xa1x) to (xb1x) to (a2);
    \node (b1) at (3.5*\xr,1.5*\yr)[xnode]{};
    \node (b2) at (3.5*\xr,-1.5*\yr)[xnode]{};
    \draw[xedge] (b1) to (ya2x) to (yb2x) to (b2);
    
    \node[left =of a1] (a11)[]{};
    \node[below left=of a1] (a12)[]{};
    \node[below =of a1] (a13)[]{};
    \foreach\x in{1,2,3}{\draw[-,dashed] (a1) to (a1\x);}
    
    \node[left =of a2] (a21)[]{};
    \node[above left=of a2] (a22)[]{};
    \node[above =of a2] (a23)[]{};
    \foreach\x in{1,2,3}{\draw[-,dashed] (a2) to (a2\x);}
    
    \node[right =of b1] (b11)[]{};
    \node[below right=of b1] (b12)[]{};
    \node[below =of b1] (b13)[]{};
    \foreach\x in{1,2,3}{\draw[-,dashed] (b1) to (b1\x);}
    
    \node[right =of b2] (b21)[]{};
    \node[above right=of b2] (b22)[]{};
    \node[above =of b2] (b23)[]{};
    \foreach\x in{1,2,3}{\draw[-,dashed] (b2) to (b2\x);}
    
    \draw[xpathx] ($(a12) + (\teps,0)$) to ($(a1)+(0,-\teps)$) to ($(xa1x)+(0,-\teps)$) to ($(xb1x)+(0,\teps)$) to ($(a2)+(0,\teps)$) to ($(a21)+(0,\teps)$);
    \draw[xpathy] ($(b12) + (-\teps,0)$) to ($(b1)+(0,-\teps)$) to ($(ya2x)+(0,-\teps)$) to ($(yb2x)+(0,\teps)$) to ($(b2)+(0,\teps)$) to ($(b21)+(0,\teps)$);
    \theLpath{a1}{a2}
    \theLpath{b1}{b2}
    \draw[xpath] (a12) to (a1) to (xa1x);
    \draw[xpath] (a21) to (a2) to (xb1x);
    \draw[xpath] (b12) to (b1) to (ya2x);
    \draw[xpath] (b21) to (b2) to (yb2x);
    
    \draw[dotted] (-2.75*\xr,0*\yr) circle (1.9cm);
    \draw[dotted] (4.25*\xr,0*\yr) circle (1.9cm);
    \node at (-2.75*\xr,0*\yr)[]{$G$};
    \node at (4.25*\xr,0*\yr)[]{$G'$};

  \end{tikzpicture}
  \caption{Illustration to the proof of~\cref{lem:4regeven}.
  Hamiltonian cycles~$C$ and~$C'$ of~$G$ and~$G'$ are depicted on the left and right in magenta and cyan, respectively.
  Hamiltonian cycle~$C^*$ obtained after the~$L$-insertions at~$v,v'$ and~$w,w'$ is depicted in blue.
  }
  \label{fig:4regeven}
\end{figure}

\begin{lemma}%
 \label{lem:4regeven}
 \fvsTsc{} is \NP-hard on 4-regular planar Hamiltonian graphs with an even number of vertices,
 even if an Hamiltonian cycle is provided.
\end{lemma}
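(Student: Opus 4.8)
The plan is to reduce from \fvsTsc{} on 4-regular planar Hamiltonian graphs with a provided Hamiltonian cycle, which is \NP-hard by \cref{thm:fvs:4regplanaHam}. Given such an instance $\I = (G,k,C)$ where $G$ has $n$ vertices, if $n$ is already even we are done, so assume $n$ is odd. The construction takes two disjoint copies $G$ and $G'$ of the input graph (with Hamiltonian cycles $C$ and $C'$), picks an edge $\{v,v_1\}$ on $C$ and an edge $\{w,w_1\}$ on $C$ in the first copy, together with the corresponding edges $\{v',v_1'\}$ on $C'$ and $\{w',w_1'\}$ on $C'$ in the second copy, chosen so that $v$ and $w$ lie on a common face of (the embedding of) $G$ — this is possible because any two edges of a Hamiltonian cycle bound faces, or more simply one can just route through the outer face after re-embedding; I would fix this detail by choosing $v,w$ to be consecutive along $C$ so the relevant edges trivially share a face. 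One then subdivides these four designated edges and performs an $L$-insertion at the two new subdivision vertices arising in ``position $v,v'$'' and another $L$-insertion at the two new vertices in ``position $w,w'$'', exactly as drawn in \cref{fig:4regeven}. Set $k^* \ceq 2k + 8$ (four per $L$-insertion, by \cref{lem:Linsertion}).

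The key steps, in order, are: (i) verify the resulting graph $G^*$ is 4-regular — the only vertices whose degree is touched are the subdivided ones, and each subdivision vertex has its degree restored from $3$ to $4$ by being identified with the degree-one endpoint $x$ (resp.\ $y$) of the inserted copy of $L$ (recall \cref{obs:L} and the definition of $L$: both $x$ and $y$ have degree two in $L$, so after identification the subdivision vertex gets its two original incident half-edges plus two from $L$, total four); (ii) verify $G^*$ is planar — each $L$ is planar and gets glued in along a face, so planarity is preserved; (iii) verify $G^*$ has an even number of vertices — it has $2n$ vertices from the two copies, plus four subdivision vertices, plus $4 \cdot |V(L)|$ from the four inserted copies, all even; (iv) exhibit a Hamiltonian cycle $C^*$ of $G^*$ — traverse $C$ in the first copy until reaching the edge subdivided near $v$, detour through the first inserted $L$ using its Hamiltonian $x$-$y$ path (\cref{obs:L}), emerge into the second copy, traverse $C'$ until the edge subdivided near $v'$ is reached, and route back through the second inserted $L$ near $w,w'$; concretely this is the blue path in \cref{fig:4regeven}, and one checks it visits every vertex of both copies, both subdivision vertices, and all vertices of both $L$-copies exactly once; (v) correctness of the reduction: $\I$ is a \yes-instance iff $(G^*, k^*)$ is, which follows by applying \cref{lem:Linsertion} four times (each $L$-insertion adds exactly $4$ to the required solution size and preserves equivalence) together with the obvious fact that a \yes-instance for a disjoint union of two copies of $(G,k)$ needs exactly $2k$ vertices. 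Finally, the whole construction runs in polynomial time, and $C^*$ is produced explicitly, so the hardness holds even with a Hamiltonian cycle provided.

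The main obstacle I anticipate is step (iv): carefully describing $C^*$ and checking it is genuinely a single Hamiltonian cycle rather than two disjoint cycles or a non-spanning cycle. The two $L$-insertions must be ``wired'' so that the Hamiltonian paths of $G$ and $G'$ get spliced together into one closed walk — using only one $L$-insertion would leave $G$ and $G'$ in separate cycles, which is precisely why two are needed (one to go from copy $1$ to copy $2$, one to come back), mirroring the two-insertion device already used in Case~2 of \cref{constr:2factormerging}. I would handle this by explicitly writing the cyclic sequence of segments: $\bigl[$ portion of $C$ from $w_1$ to $v$ $\bigr]$, $L$-path near $v,v'$, $\bigl[$ portion of $C'$ from $v_1'$ to $w'$ $\bigr]$, $L$-path near $w,w'$, back to $w_1$ — and noting each half-edge at the subdivision vertices is used exactly once. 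The remaining steps are routine degree-counting and appeals to already-proven lemmas.
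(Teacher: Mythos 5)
Your proposal follows essentially the same route as the paper: take two disjoint copies of the input graph to force an even vertex count, splice their Hamiltonian cycles together with two $L$-insertions, and set the budget to $2k+8$ via \cref{lem:Linsertion}; your explicit four-segment description of $C^*$ is exactly the blue cycle of \cref{fig:4regeven}, and your degree count at the subdivision vertices is the right check for 4-regularity.

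The one place you deviate is in choosing \emph{two distinct} edges of $C$ (near $v$ and near $w$) and subdividing each once, whereas the paper doubly subdivides a \emph{single} edge $e$ of $C$ and its copy $e'$, obtaining $v,w$ on $e$ and $v',w'$ on $e'$. This difference matters only for planarity, and it is precisely where your argument is shaky: your fix of taking $v,w$ consecutive along $C$ does not guarantee that the two cycle edges at their common vertex bound a common face (in a 4-regular embedding the two $C$-edges at a vertex may be separated in the rotation by the other two edges), and re-embedding can place only one prescribed face as the outer face, so with two unrelated edges you cannot in general get both attachment points onto the outer boundary of each copy. The paper's single-edge variant avoids all of this: any one edge can be put on the outer face of each copy, and then $v,w$ (resp.\ $v',w'$) automatically lie on the outer boundary, so both $L$-gadgets can be drawn in the region between the two embeddings without crossings. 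Switching to the doubly-subdivided single edge repairs your proof with no other changes.
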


\begin{proof}
  Let~$\I=(G,k,C)$ be an instance of \fvsTsc{} on 4-regular planar Hamiltonian graphs with Hamiltonian cycle~$C$.
  If~$V(G)$ is even,
  then we are done.
  Otherwise,
  do the following
  (see~\cref{fig:4regeven} for an illustration).

  Let~$G^*$ denote the graph obtained from taking~$G$ 
  and a disjoint copy~$G'$ of~$G$.
  Note that~$V(G^*)$ is even,
  and~$(G^*,2k)$ is equivalent to~$\I$.
  Let~$C'$ denote the copy of~$C$ in~$G'$,
  and let for some edge~$e\in C$ edge~$e'\in C'$ denote the copy of~$e$.
  We know that we can subdivide~$e,e'$ without changing the feedback vertex set,
  so doubly subdivide edge~$e$ (call the vertices~$v,w$)
  and edge~$e'$ (call the vertices~$v',w'$).
  Next,
  make an~$L$-insertion at~$v,v'$ and one at~$w,w'$.
  Note that we added~$24$ vertices,
  and thus~$V(G^*)$ is even.
  The instance~$(G^*,2k+8)$ is equivalent to~$\I$ (\cref{lem:Linsertion}).
  Now,
  observe that by the choices of~$e,e'$,
  we can merge~$C$ and~$C'$ to~$C^*$.
  Hence,
  the instance~$(G^*,2k+8,C^*)$ is equivalent to~$\I$.
  Note that~$G^*$ remains planar:
  we can disjointly embed~$G$ and~$G'$ such that edges~$e$ and~$e'$ of~$C$ and~$C'$
  are on the outer face~\cite{Diestel10,balakrishnan2012textbook}.
  \lqed
\end{proof}

\noindent
Using~\cref{lem:4regeven},
we will prove next the following main result of this section.

\begin{theorem}%
 \label{thm:fvs:5regplanham}
 \fvsTsc{} is \NP-hard on 5-regular planar Hamiltonian graphs
 even if a Hamiltonian cycle is given.
\end{theorem}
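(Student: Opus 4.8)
The plan is to start from the \NP-hardness of \fvsTsc{} on 4-regular planar Hamiltonian graphs with an even number of vertices and with a provided Hamiltonian cycle (\cref{lem:4regeven}), and to make the graph 5-regular by a sequence of $D$-insertions placed along the given Hamiltonian cycle. Concretely, let $\I=(G,k,C)$ be such an instance, write $V(G)=\{c_0,\dots,c_{2m-1}\}$ in the cyclic order in which $C$ visits them, so that $C$ uses the edges $\{c_i,c_{i+1\bmod 2m}\}$. For each even index $i\in\{0,2,4,\dots,2m-2\}$, subdivide the cycle-edge $\{c_i,c_{i+1}\}$ once (call the new vertex $u_i$) and perform a $D$-insertion at the two endpoints of the two resulting edges — more precisely, the cleanest route is to first note that a $D$-insertion at two vertices $u,v$ of a graph raises the degree of $u$ and of $v$ by exactly one (since $x$ is only adjacent to $x'$ and $y$ is only adjacent to $y'$ in $D$), while every internal vertex of the inserted copy of $D$ has degree exactly five. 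So, to raise the degree of the original vertices $c_i$ from four to five in a balanced way, I would pair up the $2m$ vertices consecutively along $C$ and perform one $D$-insertion per pair, identifying $x$ with one member of the pair and $y$ with the other; after the insertion every $c_i$ has degree five, and by construction of $D$ every newly added vertex also has degree five.

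The key properties of $D$ that make this work are collected in the ``graph tool box'': $D$ is planar, it admits a Hamiltonian $x$–$y$ path, and — the analogue of \cref{obs:R} and \cref{obs:L} — it has a minimum feedback vertex set of a fixed size $t$ (from \cref{fig:gtb}(c), $t=3$), none of which can be forced to include $x$ or $y$, so that a $D$-insertion at $u,v$ together with the shift $k\mapsto k+t$ preserves equivalence of \fvsTsc{}-instances. I would first state and prove this as a lemma (the argument mirrors \cref{obs:R}: exhibit the planar embedding, the Hamiltonian path, and a size-$t$ feedback vertex set from \cref{fig:gtb}(c); then argue a matching lower bound by observing that the $C_6$-plus-$K_3$-plus-$z$ core forces at least $t$ deletions and that $x,y$ are redundant because they are pendant-type attachments reachable only through $x',y'$), and then record the instance-level ``$D$-insertion adds $t$ to the budget'' observation exactly as \cref{obs:RinsPlus3} and \cref{lem:Linsertion} were recorded. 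Planarity is maintained throughout because each $D$-insertion is done in the face incident to the (subdivided) cycle-edge it replaces, and a planar $D$ can be embedded inside that face with $x,y$ on its boundary.

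The remaining point — and the one requiring the most care — is producing the Hamiltonian cycle of the final graph $G^*$: this is why we need the even number of vertices and why the insertions are placed along $C$ rather than arbitrarily. Since $D$ has a Hamiltonian $x$–$y$ path, each $D$-insertion on a pair of consecutive cycle vertices $(c_i,c_{i+1})$ lets us reroute the arc of $C$ between $c_i$ and $c_{i+1}$ through the entire inserted copy of $D$; doing this for all $m$ disjoint consecutive pairs simultaneously turns $C$ into a Hamiltonian cycle $C^*$ of $G^*$ that visits every original and every newly inserted vertex exactly once. Because the pairs $(c_0,c_1),(c_2,c_3),\dots$ partition $V(G)$ (here is where parity is used) and the corresponding arcs of $C$ are internally disjoint, the rerouting is consistent and $C^*$ is well defined; I would include a figure analogous to \cref{fig:4regeven} to make the merging transparent. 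Finally, with $\I'=(G^*,\,k+m\cdot t,\,C^*)$ we have a polynomial-time reduction to \fvsTsc{} on a 5-regular planar Hamiltonian graph with a provided Hamiltonian cycle, and equivalence follows by $m$ applications of the $D$-insertion observation, completing the proof. The main obstacle is thus not the complexity bookkeeping but checking the two structural claims about $D$ — that internal vertices have degree five and that no minimum feedback vertex set is forced through $x$ or $y$ — together with verifying that the simultaneous rerouting along $C$ really yields a single Hamiltonian cycle.
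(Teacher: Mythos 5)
Your overall strategy is the same as the paper's: start from \cref{lem:4regeven}, pair the vertices $v_1,\dots,v_n$ consecutively along the given Hamiltonian cycle, perform one $D$-insertion per pair to raise every degree from four to five, and reroute the cycle through each inserted copy of $D$ via its Hamiltonian $x$--$y$ path. That part is fine, including your (correct) retreat from the initial ``subdivide first'' variant, which would have created degree-two vertices you could not repair.

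There is, however, a concrete error that breaks the reduction as you state it: the minimum feedback vertex set of $D$ has size \emph{six}, not three. \cref{fig:gtb}(c) highlights six orange vertices, and the paper's \cref{obs:fvs:5regplanham} proves that $D$ admits no feedback vertex set of size at most five, yet one of size six containing $x'$ or $y'$ and none of size six containing $x$ or $y$. With your budget $k' = k + m\cdot t$ and $t=3$, a \yes-instance $(G,k)$ maps to a \no-instance $(G^*,k+3m)$, since each inserted $D$ already forces six deletions; the correct shift is $k' = k + 6\cdot(n/2) = k+3n$. More importantly, the lower bound $t=6$ is the technically hardest part of the paper's argument, and it is exactly the step you dispose of in one clause (``the $C_6$-plus-$K_3$-plus-$z$ core forces at least $t$ deletions''). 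The paper needs a genuine case analysis over $|F\cap\{x',y'\}|\in\{0,1,2\}$, in each case counting the number of edges a candidate feedback vertex set can remove versus the number it must remove, using the adjacency structure among the degree-five vertices of the respective subgraphs $D_{xy}$, $D_x$, $D'$. Without that analysis (or some substitute for it) the equivalence of the instances is not established, so the proof is incomplete even after correcting the value of $t$.
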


\noindent
We will perform a series of~$D$-insertion
(see~\cref{fig:dinsertion} for an illustration).
\begin{figure}[t]
  \centering
  \begin{tikzpicture}
   \def\xr{1}
   \def\yr{1}
   \tikzpramble{}
   
   \begin{scope}[scale=0.75]\theD{1}{0}{0}\end{scope}
   
   \begin{scope}[scale=0.75]\theD{2}{5}{0}\end{scope}
   
   \begin{scope}[scale=0.75]\theD{3}{12}{0}\end{scope}

   \node at (x1x1)[label=-90:{$v_1$}]{};
   \node at (y1x1)[label=-90:{$v_2$}]{};
   \draw[xedge] (x1x1) to (y1x1);
   \node at (x2x1)[label=-90:{$v_3$}]{};
   \draw[xedge] (x2x1) to (y1x1);
   \draw[xpath] (x2x1) to (y1x1);
   \node at (y2x1)[label=-90:{$v_4$}]{};
   \draw[xedge] (x2x1) to (y2x1);

   \node (i1) at (6*\xr,-0.75*0.5*\yr)[xnode]{};
   \draw[xedge] (y2x1) to (i1);
   \draw[xpath] (y2x1) to (i1);
   \node (i2) at (6.375*\xr,-0.75*0.5*\yr)[]{$\cdots$};
   \node (i2) at (6.375*\xr,-0.75*0.5*\yr-0.35*\yr)[]{$\cdots$};
   \node (i2) at (6.75*\xr,-0.75*0.5*\yr)[xnode]{};
   \draw[xedge] (x3x1) to (i2);
   \draw[xpath] (x3x1) to (i2);

   \node at (x3x1)[label=-90:{$v_{n-1}$}]{};
   \node at (y3x1)[label=90:{$v_{n}$}]{};
   \draw[xedge] (x3x1) to (y3x1);

   \node[below of=x1x1](a)[inner sep=0pt]{};
   \node[below of=y3x1](b)[inner sep=0]{};
   
   \def\sh{1.25}
   \def\shx{1.0}
   \draw[xedge,rounded corners] (x1x1) to ($(x1x1)+(0,-\sh*\yr)$) to ($(y3x1)+(0,-\sh*\yr)$)  to (y3x1);
   \draw[xpath,rounded corners] (x1x1) to ($(x1x1)+(0,-\sh*\yr)$) to ($(y3x1)+(0,-\sh*\yr)$)  to (y3x1);
   \draw[xpathx,rounded corners] 
   ($(i2)+(0,-\teps)$) to ($(x3x1)+(0,-\teps)$) to ($(x3x1)+(\teps,0)$) to 
   ($(y3x1)+(-\teps,0)$) to  ($(y3x1)+(-\teps,-\teps)$)
   to ($(y3x1)+(-\teps,-\shx*\yr)$)
   to ($(x1x1)+(\teps,-\shx*\yr)$)
   to ($(x1x1)+(\teps,-\teps)$) to ($(x1x1)+(\teps,0)$) to ($(y1x1)+(-\teps,0)$)  to ($(y1x1)+(0,-\teps)$)
   to ($(x2x1)+(0,-\teps)$) to ($(x2x1)+(\teps,0)$) to ($(y2x1)+(-\teps,0)$) to ($(y2x1)+(0,-\teps)$) to ($(i1)+(0,-\teps)$);
   ;

   \theDpath{1}
   \theDpath{2}
   \theDpath{3}
  \end{tikzpicture}
  \caption{Illustration to the proof of~\cref{thm:fvs:5regplanham}.
  The magenta path depicts the Hamiltonian cycle before the $D$-insertions,
  and the blue path depicts the Hamiltonian cycle after the~$D$-insertions.}
  \label{fig:dinsertion}
 \end{figure}
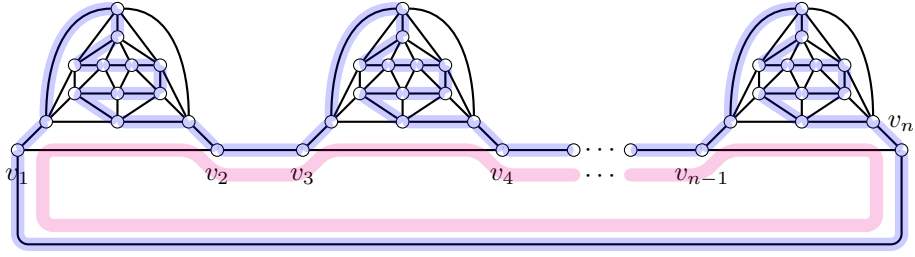
Hence,
we discuss the following in advance.

\begin{lemma}%
 \label{obs:fvs:5regplanham}
 Graph~$D$ is planar,
 admits a Hamiltonian~$x$-$y$~path,
 and has no feedback vertex set of size at most~five
 yet one of size~six containing~$x'$ or~$y'$
 but none of size~six containing~$x$ or~$y$.
\end{lemma}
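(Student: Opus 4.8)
The plan follows the pattern of \cref{obs:R}. Planarity of $D$, a Hamiltonian $x$–$y$ path, and a feedback vertex set of size six containing $x'$ and $y'$ are all exhibited by \cref{fig:gtb}(c); concretely, one checks that removing $\{x',y',v_1,v_2,c_3,c_4\}$ from $D$ leaves the path $z$–$c_2$–$c_1$–$c_0$–$c_5$–$v_3$ together with the two isolated degree-one vertices $x$ and $y$, hence a forest. Since $x$ and $y$ have degree one in $D$ they lie on no cycle, so for any feedback vertex set $S$ with $x\in S$ (or $y\in S$) the set $S\setminus\{x\}$ (resp.\ $S\setminus\{y\}$) is again a feedback vertex set; hence ``no feedback vertex set of size six contains $x$ or $y$'' follows once we show that \emph{every} feedback vertex set of $D$ has size at least six.

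To prove the latter, let $S$ be a feedback vertex set of $D$; as just noted we may assume $x,y\notin S$, so $S$ is a feedback vertex set of $D''\ceq D-\{x,y\}$, which has $12$ vertices, $29$ edges, and maximum degree five. The main tool is the elementary observation that if $S'$ is a feedback vertex set of a graph $H$ with $n$ vertices and $m$ edges, then $H-S'$ is a forest, so $\sum_{v\in S'}\deg_H(v)=m-e(H-S')+e_H(S')\geq m-n+|S'|+1$; in particular $(\Delta-1)\,|S'|\geq m-n+1$ for $\Delta$ the maximum degree of $H$, and when this inequality is tight every vertex of $H$ of degree $\Delta$ must lie in $S'$ while $e_H(S')=0$ — which is contradictory as soon as two vertices of degree $\Delta$ are adjacent or their neighbourhoods jointly cover all remaining vertices.

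Then I would case-distinguish on $t\ceq|S\cap\{v_1,v_2,v_3\}|$: since $S$ hits the triangle $v_1v_2v_3$ we have $t\geq 1$, and since $D$ admits the automorphism fixing $v_1,c_0,c_3,z$ and swapping $v_2\leftrightarrow v_3$, $c_1\leftrightarrow c_5$, $c_2\leftrightarrow c_4$, $x\leftrightarrow y$, $x'\leftrightarrow y'$, it suffices up to this automorphism to handle $S\cap\{v_1,v_2,v_3\}\in\bigl\{\{v_1,v_2,v_3\},\{v_1,v_2\},\{v_2,v_3\},\{v_1\},\{v_2\}\bigr\}$. In each case one picks out a forced subset $F\subseteq S$ and applies the tool to $D''-F$: for $F=\{v_1,v_2,v_3\}$, the graph $D''-F$ (the $6$-cycle $c_0\dots c_5$ plus $z,x',y'$; $9$ vertices, $17$ edges, degree five only at $z$) has every feedback vertex set of size $\geq 3$, so $|S|\geq 6$; for $F=\{v_1,v_2\}$, the graph $D''-F$ ($10$ vertices, $20$ edges, degree five exactly at the adjacent pair $z,c_4$) needs $\geq 4$ deletions, and the same holds for $F=\{v_2,v_3\}$ ($10$ vertices, $20$ edges, degree five at $z$ and $c_0$, whose neighbourhoods cover the rest), so $|S|\geq 6$; for $F=\{v_1\}$, the triangle $v_2v_3c_3$ forces $c_3\in S$, and $D''-\{v_1,c_3\}$ ($10$ vertices, $19$ edges, maximum degree four) needs $\geq 4$ deletions already by the bound, so $|S|\geq 6$; and for $F=\{v_2\}$, the triangle $v_1v_3c_5$ forces $c_5\in S$, and $D''-\{v_2,c_5\}$ ($10$ vertices, $19$ edges, degree five only at $z$, whose only degree-four non-neighbours are the adjacent pair $c_0,c_1$) needs $\geq 4$ deletions, so $|S|\geq 6$. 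As these cases are exhaustive, $|S|\geq 6$. The routine part is the edge/degree bookkeeping for these five residual graphs together with the checks in \cref{fig:gtb}(c); the main (though shallow) obstacle is that in three of the cases the counting bound falls short by exactly one and has to be sharpened by inspecting the maximum-degree vertices of the residual graph.
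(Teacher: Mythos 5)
Your proof is correct, and I checked the bookkeeping: the map fixing $v_1,c_0,c_3,z$ and swapping $v_2\leftrightarrow v_3$, $c_1\leftrightarrow c_5$, $c_2\leftrightarrow c_4$, $x'\leftrightarrow y'$, $x\leftrightarrow y$ is indeed an automorphism of $D$; the five residual graphs have the vertex/edge counts and maximum-degree structure you claim; and $D-\{x',y',v_1,v_2,c_3,c_4\}$ is the stated path plus two isolated vertices. Your underlying tool coincides with the paper's: a feedback vertex set $S$ of an $n$-vertex, $m$-edge graph must be incident with at least $m-(n-|S|-1)$ edges, so its degree sum minus the number of edges it spans is at least $m-n+|S|+1$, which one then sharpens by inspecting the maximum-degree vertices and their adjacencies. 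What differs is the decomposition. The paper works in $D'=D-\{x,y\}$ and splits on $|F\cap\{x',y'\}|\in\{2,1,0\}$, applying the count to $D_{xy}$, $D_x$ or $D'$ and arguing that too few pairwise non-adjacent degree-five vertices exist; its hardest case requires scanning all independent triples of degree-five vertices of $D'$. You instead split on $S\cap\{v_1,v_2,v_3\}$, use the reflection automorphism to cut eight possibilities to five, and in the two singleton cases exploit the triangles $v_2v_3c_3$ and $v_1v_3c_5$ to force one more vertex into $S$ before counting; this keeps each residual check very local (one or two maximum-degree vertices per case) at the price of five cases instead of three. One wording slip in your general tool: tightness of $(\Delta-1)|S'|\geq m-n+1$ forces every vertex \emph{of $S'$} to have degree $\Delta$ and $S'$ to be independent, not that every degree-$\Delta$ vertex of $H$ lies in $S'$; your concrete applications use the correct form (the required degree sum is unattainable unless $S'$ contains the listed degree-five vertices and avoids the listed adjacencies), so this does not affect the argument.
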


\begin{proof}
 \begin{figure}[t]
  \centering
  \begin{tikzpicture}
   \def\xr{1}
   \def\yr{1}
   \tikzpramble{}
   
   \theD{a}{0}{0}

   \node at (aax1)[xnode,fill=red]{};
   \foreach\x in {xax1,cax1,cax2,cax3}{\draw[xedge,color=red] (\x) -- (aax1);}
   \draw[xedge,color=red] (aax1) to [out=90,in=180](aax2);
   \node at (aax3)[xnode,fill=red]{};
   \foreach\x in {yax1,cax1,cax5,cax6}{\draw[xedge,color=red] (\x) -- (aax3);}
   \draw[xedge,color=red] (aax3) to [out=90,in=0](aax2);
   
   \node at (bax1)[xhili]{};
   \node at (cax4)[xhili]{};

   \node at (aax2)[label=90:{$a_0$}]{};
   \node at (cax4)[label={[xshift=5.5pt,yshift=-6.5pt]90:{$a_1$}}]{};
   \node at (cax3)[label=90:{$a_2$}]{};
   \node at (bax2)[label={[xshift=-1pt,yshift=1pt]-90:{$a_3$}}]{};
   \node at (bax3)[label={[xshift=2pt,yshift=1pt]-90:{$a_4$}}]{};
   \node at (cax5)[label=90:{$a_5$}]{};
   \node at (cax2)[label=-90:{$a_6$}]{};
   \node at (bax1)[label={[xshift=6pt,yshift=5pt]-90:{$a_7$}}]{};
   \node at (cax6)[label=-90:{$a_8$}]{};
   \node at (cax1)[label=-90:{$a_9$}]{};
   
   \node at (-1.75*\xr,2.4*\yr)[]{(a)};

   \theD{b}{4.5}{0}
   \node at (abx1)[xnode,fill=red]{};
   \foreach\x in {xbx1,cbx1,cbx2,cbx3}{\draw[xedge,color=red] (\x) -- (abx1);}
   \draw[xedge,color=red] (abx1) to [out=90,in=180](abx2);
   \node at (abx3)[xnode,fill=blue]{};

   \node at (cbx5)[xhili]{};
   \node at (bbx1)[xhili]{};
   \node at (cbx3)[xxhili]{};
   \node at (cbx4)[xxhili]{};

    \node at (-1.75*\xr+4.5*\xr,2.4*\yr)[]{(b)};
   
   \theD{c}{9}{0}
   \node at (acx1)[xnode,fill=blue]{};
   \node at (acx3)[xnode,fill=blue]{};
   
   \node at (ccx6)[xhili]{};
   \node at (bcx2)[xxhili]{};
   \node at (bcx3)[xxhili]{};
   \node at (acx2)[xhili]{};
   \node at (ccx2)[xhili]{};
   \node at (ccx4)[xxhili]{};

   \node at (-1.75*\xr+9*\xr,2.4*\yr)[]{(c)};
  \end{tikzpicture}
  \caption{Illustration to the proof of~\cref{obs:fvs:5regplanham},
  where (a), (b), and (c),
  correspond to Cases 1, 2, and 3,
  respectively.
  Red-marked vertices and edges correspond to deletions,
  and a blue-marked vertex is prohibited from deletion.
  Vertices highlighted by a circle (orange)
  form a maximum set of pairwise non-adjacent vertices of degree five,
  and vertices highlighted by a rectangle (cyan) are the remaining vertices of degree four.}
  \label{fig:Dsix}
 \end{figure}
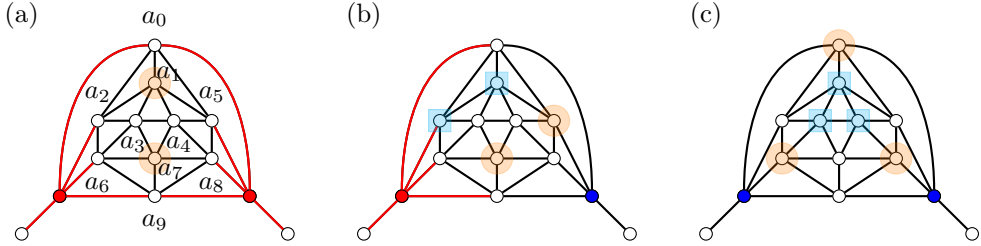
 That~$D$ is planar,
 admits and Hamiltonian~$x$-$y$~path,
 and has
 a feedback vertex set of size six containing~$x'$ or~$y'$ 
 is depicted in~\cref{fig:gtb}.
 We assume that
 the vertices of~$D$ is the set~$\{x,x',y,y'\}\cup A$, 
 where~$A\ceq \bigcup_{i=0}^9\{a_i\}$,
 defined as in~\cref{fig:Dsix}(a).
 Let~$D'\ceq D-\{x,y\}$,
 $D_x\ceq D'-\{x'\}$,
 $D_y\ceq D'-\{y'\}$,
 and~$D_{xy}\ceq D'-\{x',y'\}$.
 Suppose towards a contradiction that~$D$ admits a feedback vertex set~$F$ of size at most five. %
 We distinguish three cases
 according to the deletion of~$x'$ or~$y'$
 (let~$F_A\ceq F\cap A$):
 \begin{description}
  \item[Case 1:] $|F\cap\{x',y'\}|=2$.
  Since~$|E(D_{xy})|=21$ and~$|V(D_{xy}-T)|=7$ for every~$T\in\binom{A}{3}$,
  $F_A$ must delete at least 15 edges.
  Note that~$a_1$ and~$a_7$ are the only non-adjacent vertices of degree five in~$D_{xy}$.
  Thus~$F_A$ deletes at most~$14$ edges,
  a contradiction.
  \item[Case 2:] $|F\cap\{x',y'\}|=1$. 
  By symmetry,
  let~$F\cap\{x',y'\}=\{x'\}$.
  Since~$|E(D_x)|=25$,
  $|V(D_{x}-T)|=7$ for every~$T\in\binom{A}{4}$,
  $F_A$ must delete at least~$19$ edges.
  Observe that~$S\ceq \{a_1,a_3,a_4,a_5,a_8\}$ are the only vertices of degree five in~$D_x$.
  Moreover,
  observe that every subset~$S'\subseteq S$ with~$|S'|\geq 3$ 
  is not an independent set in~$D_x$.
  Thus,
  at there at most two non-adjacent vertices of degree five in~$D_x$,
  and hence,
  $F_A$ deletes at most 18 edges,
  a contradiction.
  \item[Case 3:] $|F\cap\{x',y'\}|=0$.
  Since~$|E(D')|=29$,
  $|V(D'-T)|=7$ for every~$T\in\binom{A}{5}$,
  $F_A$ must delete at least~$23$ edges.
  This means that there must be at least three pairwise non-adjacent vertices of degree five in~$F_A$.
  Observe that there is no set~$T\in\binom{A}{4}$ of four pairwise non-adjacent vertices of degree five in~$D'$.
  Observe that for every~$T\in\binom{A}{3}$ of pairwise non-adjacent vertices of degree five in~$D'$,
  there are no two non-adjacent vertices of degree four from~$A\setminus T$ in~$D'-T$.
  It follows that~$F_A$ deletes at most~$22$ edges,
  a contradiction.
 \end{description}
 We conclude that~$D$ admits no feedback vertex set of size at most five.
 \lqed
\end{proof}

\noindent
We are set to prove 
\cref{thm:fvs:5regplanham}.

\begin{proof}[Proof of~\cref{thm:fvs:5regplanham}]
 Let~$\I=(G,k,C)$ be an instance of
 \fvsTsc{} with a 4-regular planar Hamiltonian graph~$G$ with Hamiltonian cycle~$C$
 and an even number of vertices.
 Let~$v_1,v_2,\dots,v_n$ be an ordering of the vertices induced by~$C$.
 Make a $D$-insertion at~$v_{2i-1},v_{2i}$ for every~$i\in\set{n/2}$.

 Let~$G'$ denote the obtained graph.
 Note that~$G'$ is planar and 5-regular.
 Moreover,
 $G'$ is Hamiltonian: extend~$C$ to~$C'$ by adding the Hamiltonian paths for each~$D$ inserted as depicted in~\cref{fig:gtb}.
 Let~$\I'=(G',k',C')$ be the obtained instance,
 where~$k'\ceq k+3n$.
 Since each~$D$-insertion needs six additional feedback deletions 
 (\cref{obs:fvs:5regplanham}),
 and since we can assume each to disconnect~$D-\{x,y\}$ from either~$x$ or~$y$,
 the correctness follows.
 \lqed
\end{proof}

\section{Regular Hamiltonian Graphs}
\label{sec:reg}

In this section,
we prove that \fvsTsc{} remains \NP-hard
on $p$-regular Hamiltonian graphs for \emph{every}~$p\geq 4$.

\begin{theorem}%
 \label{thm:fvs:kreg}
 For every~$p\geq 4$,
 \fvsTsc{} on $p$-regular Hamiltonian graphs is \NP-hard,
 even if a Hamiltonian cycle is provided.
\end{theorem}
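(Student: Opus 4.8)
The plan is to treat $p=4$ as already done (it is exactly \cref{thm:fvs:4regplanaHam}) and to obtain every $p\ge 5$ by induction on $p$: the inductive step transforms a $q$-regular Hamiltonian instance with an even number of vertices and a given Hamiltonian cycle (for $q\ge 4$) into a $(q+1)$-regular one with the same properties, using $Y_q$-insertions, and we start the induction from \cref{lem:4regeven}. The first thing I would do is record the $Y_p$-analogue of \cref{obs:R}: $Y_p$ admits a Hamiltonian $x$-$y$ path (explicitly $x,x',a_1,\dots,a_p,b_p,\dots,b_1,y',y$, using one matching edge), every vertex of $Y_p$ other than $x,y$ has degree $p+1$, its minimum feedback vertex set has size exactly $2p-2$, it has a minimum feedback vertex set avoiding $\{x,y\}$ that separates $x$ from $y$ (namely $\{x',y'\}$ together with $p-2$ vertices of each $K_p$, chosen so that at most one matching edge survives, which makes $x$ and $y$ isolated), and no feedback vertex set of size at most $2p-2$ contains $x$ or $y$. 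The lower bound is the routine clique count: a feedback vertex set $F$ may leave at most two vertices in each $K_p$, so $|F\cap V(A)|,|F\cap V(B)|\ge p-2$; and if exactly two vertices of $V(A)$ survive they form a triangle with $x'$, forcing $x'\in F$ (symmetrically for $y'$), which in every case raises $|F|$ to at least $2p-2$. Running the same count in $Y_p-x$, $Y_p-y$, and $Y_p-\{x,y\}$ (where $x'$, resp.\ $y'$, still dominates the corresponding clique) yields the clause about feedback vertex sets containing $x$ or $y$.

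From this lemma the insertion bookkeeping is exactly as in \cref{obs:RinsPlus3}: if $G'$ arises from $G$ by a $Y_p$-insertion at $u,v$, then $(G,k)$ is a \yes-instance of \fvsTsc{} iff $(G',k+2p-2)$ is. For the forward direction, add to a feedback vertex set of $G$ the separating minimum feedback vertex set of $Y_p$ above; any cycle in the union of the two resulting forests would have to traverse a $u$–$v$ path inside $Y_p$ minus that set, of which there is none. For the backward direction, observe that the vertex set of the inserted copy induces in $G'$ exactly $Y_p$ (possibly plus the edge $uv$), so the restriction of any feedback vertex set $U'$ of $G'$ to that vertex set is a feedback vertex set of $Y_p$; by the lemma it has at least $2p-2$ vertices, and at least $2p-2$ of them lie outside $V(G)$ irrespective of whether $u$ or $v$ belongs to it (each of $u,v$ inside it pushes the size up by one). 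Hence $U'\cap V(G)$ is a feedback vertex set of $G$ of size at most $|U'|-(2p-2)$.

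For the inductive step, take a $q$-regular Hamiltonian graph $G$ with $q\ge4$, an even number $n$ of vertices, and Hamiltonian cycle $C=v_1v_2\cdots v_nv_1$, and perform a $Y_q$-insertion at $v_{2i-1},v_{2i}$ for every $i\in\set{n/2}$; these insertions use disjoint fresh vertices and the pairs partition $V(G)$. Every original vertex then gains exactly one edge (ending with degree $q+1$), and every fresh vertex already has degree $q+1$, so $G'$ is $(q+1)$-regular. It is Hamiltonian: the edge $\{v_{2i-1},v_{2i}\}\in E(C)$ is rerouted through the $x$-$y$ Hamiltonian path of the $i$-th copy of $Y_q$ while the edges $\{v_{2i},v_{2i+1}\}$ stay on the cycle, giving an explicit Hamiltonian cycle $C'$ of $G'$. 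Each insertion adds an even number ($2q+2$) of vertices, so $G'$ still has an even number of vertices, and the insertion equivalence (applied $n/2$ times) shows $(G,k,C)$ and $(G',\,k+\tfrac n2(2q-2),\,C')$ are equivalent; the whole step is polynomial. Iterating this from $q=4$ (the instances of \cref{lem:4regeven}) up to $q=p-1$ proves the theorem for all $p\ge5$, and \cref{thm:fvs:4regplanaHam} settles $p=4$.

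The only genuinely delicate point I expect is the feedback-vertex-set lemma for $Y_p$, specifically the clause that no feedback vertex set of size $2p-2$ contains $x$ or $y$ — it is precisely what the backward direction of the insertion equivalence needs — but the argument is the short clique-domination count sketched above and contains no surprises; everything else is parity and Hamiltonian-cycle bookkeeping.
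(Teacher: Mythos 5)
Your proposal is correct and follows essentially the same route as the paper: $Y_p$-insertions along the perfect matching induced by the given Hamiltonian cycle, the same bookkeeping ($k'=k+\tfrac{n}{2}(2p-2)$), and the same induction on the degree starting from the 4-regular case. If anything, your version of the gadget lemma is slightly more complete than the paper's \cref{obs:y:fvs}, since you state explicitly the separation of $x$ from $y$ and the clause that no minimum feedback vertex set contains $x$ or $y$, both of which the insertion-equivalence argument actually relies on.
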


\noindent
We have seen that~\fvsTsc{} is \NP-hard on 5-regular graphs,
even if a Hamiltonian cycle~$C$ is provided.
Every 5-regular graph has an even number of vertices.
Thus,
we find a perfect matching~$M$ on~$C$.
We will make a~$Y_p$-insertion on every pair of vertices from~$M$.

\begin{lemma}%
 \label{obs:y:fvs}
 Graph~$Y_p$ is planar,
 admits 
 a Hamiltonian path with endpoints~$x$ and~$y$,
 and
 admits no feedback vertex set of size at most~$2(p+1)-5$
 yet one of size~$2(p+1)-4$ containing~$x'$ or~$y'$.
\end{lemma}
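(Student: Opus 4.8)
The plan is to dispatch the three assertions in turn, following the pattern of \cref{obs:R,obs:L}. Planarity, the Hamiltonian $x$–$y$ path, and a feedback vertex set of size $2(p+1)-4$ containing $x'$ (or, symmetrically, $y'$) are all witnessed by \cref{fig:gtb}(d). For the Hamiltonian path one traverses $x,x'$, then all of $V(A)$ in an arbitrary order (it is a clique), crosses to $B$ along the matching edge incident to the last visited vertex of $A$, traverses all of $V(B)$, and finishes with $y',y$. For the feedback-vertex-set upper bound, fix two vertices $a_1,a_2\in V(A)$ and let $b_1\in V(B)$ be the matching partner of $a_1$; then $F\ceq\{x'\}\cup(V(A)\setminus\{a_1,a_2\})\cup(V(B)\setminus\{b_1\})$ has size $1+(p-2)+(p-1)=2(p+1)-4$, and $Y_p-F$ is the path $a_2,a_1,b_1,y',y$ plus the isolated vertex $x$, hence a forest; the mirror-image choice produces one containing $y'$.

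The substance is the lower bound, i.e.\ that $Y_p$ has no feedback vertex set of size at most $2(p+1)-5=2p-3$. Let $F$ be any feedback vertex set. Since $x$ and $y$ have degree one in $Y_p$ they lie on no cycle, so $F\setminus\{x,y\}$ is again a feedback vertex set; hence it suffices to bound $|F|$ under the assumption $F\cap\{x,y\}=\emptyset$. As $Y_p-F$ is acyclic, so is its induced subgraph on $V(A)\setminus F$, which is the complete graph on $p-|F\cap V(A)|$ vertices; since $K_m$ is acyclic only for $m\le 2$, this forces $|F\cap V(A)|\ge p-2$, and symmetrically $|F\cap V(B)|\ge p-2$.

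The crucial step is to improve this from $2p-4$ to $2p-2$ by exploiting $x'$ and $y'$. Suppose $x'\notin F$. If two vertices $a_i,a_j\in V(A)$ survived in $Y_p-F$, then $\{a_i,a_j,x'\}$ would induce a triangle, because $a_ia_j$ is an edge ($A$ is a clique) and $x'$ is adjacent to every vertex of $V(A)$ — contradicting acyclicity. Hence at most one vertex of $A$ survives, i.e.\ $|F\cap V(A)|\ge p-1$. Putting both cases together, $|F\cap V(A)|+[x'\in F]\ge p-1$, and symmetrically $|F\cap V(B)|+[y'\in F]\ge p-1$. Since $V(A)$, $V(B)$, $\{x'\}$, $\{y'\}$ are pairwise disjoint, $|F|\ge\bigl(|F\cap V(A)|+[x'\in F]\bigr)+\bigl(|F\cap V(B)|+[y'\in F]\bigr)\ge 2(p-1)=2(p+1)-4$, as required.

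The only genuine obstacle is spotting the triangle argument of the third paragraph: without invoking $x'$ and $y'$ one only gets the weaker bound $2p-4$, which is off by exactly two. Everything else is routine; note in particular that the lower-bound argument never refers to the specific matching between $A$ and $B$, so it is insensitive to how the vertices are labelled, and the upper-bound construction only needs one matching edge to be retained.
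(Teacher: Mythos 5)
Your proof is correct and follows essentially the same route as the paper's: the lower bound comes from the fact that $V(A)\cup\{x'\}$ and $V(B)\cup\{y'\}$ span two vertex-disjoint copies of $K_{p+1}$, each of which forces $p-1$ deletions (your two-step argument --- $p-2$ deletions forced by the clique $A$ alone, plus one more via the triangle through $x'$ --- is just an unpacked version of that single observation), and the upper bound is an explicit witness. One point in your favour: your witness $\{x'\}\cup(V(A)\setminus\{a_1,a_2\})\cup(V(B)\setminus\{b_1\})$ is verifiably acyclic (a path plus an isolated vertex), whereas the construction literally written in the paper keeps five vertices across the two cliques, hence deletes only $2(p+1)-5$ vertices and leaves the $4$-cycle through the two retained matching edges together with a triangle at $y'$; the intended set is evidently the one you wrote down. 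Finally, like the paper, you delegate planarity to \cref{fig:gtb}(d); note that $Y_p$ contains $K_p$ and so is not planar for $p\geq 5$, so this part of the statement cannot be saved by any argument --- fortunately it is never used in the proof of \cref{thm:fvs:kreg}.
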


\begin{proof}
 That graph~$Y_p$ is planar and admits 
 a Hamiltonian $x$-$y$~path
 is shown in~\cref{fig:gtb}(d).
 For a feedback vertex set,
 in each~$K_{p+1}$ we need to delete~$p-1$ vertices.
 Thus, 
 we need at least~$2(p-1)=2(p+1)-4$ vertices in a feedback vertex set.
 For the second claim,
 observe the following.
 Delete all vertices in~$K_{p+1}$ except for
 two vertices~$v,w$ different from~$x'$,
 and delete all vertices in the other~$K_{p+1}$ different from~$y'$ and the two neighbors~$v',w'$ of~$v,w$ in the second~$K_{p+1}$.
 It is not difficult to see that 
 after the prescribed deletions,
 a~$2K_2+2K_1$ 
 (the disjoint union of two~$K_2$s and two~$K_1$s) 
 remains.
 \lqed
\end{proof}

\noindent
From~\cref{obs:y:fvs} we immediately get the following.

\begin{observation}
 \label{lem:Ykinsert}
 Let~$\I=(G,k)$ be an instance of~\fvsTsc{}
 and let~$v,w\in V(G)$ be two distinct vertices.
 Let~$G'$ be the graph obtained from~$G$ by the~$Y_p$-insertion at~$v,w$,
 and let~$k'\ceq k+2(p+1)-4$.
 Then,
 $\I$ is a \yes-instance
 if and only if
 $(G',k')$ is a \yes-instance of~\fvsTsc{}.
\end{observation}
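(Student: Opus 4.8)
The plan is the usual gadget-insertion argument, built around the fact that in $G'$ the pair $\{v,w\}$ (identified with the pair $\{x,y\}$ of the inserted copy of $Y_p$) separates the \emph{private} vertices $P\ceq V(Y_p)\setminus\{x,y\}$ of that copy from $V(G)\setminus\{v,w\}$, together with the fact that $x$ and $y$ have degree one in $Y_p$, with $N_{Y_p}(x)=\{x'\}$ and $N_{Y_p}(y)=\{y'\}$. Write $\ell\ceq 2(p+1)-4$, so that $k'=k+\ell$.

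For the forward implication, suppose $G$ has a feedback vertex set $U$ with $|U|\le k$. By~\cref{obs:y:fvs} the copy of $Y_p$ has a feedback vertex set of size $\ell$ containing $x'$ or $y'$; discarding $x$ and $y$ from it (degree-one vertices lie in no cycle) yields $F\subseteq P$ with $|F|\le\ell$ and, say by symmetry, $x'\in F$. Since $x'$ is the unique $Y_p$-neighbour of $x$, the vertex $x$ is isolated in $Y_p-F$, so $Y_p-F$ has no $x$--$y$ path. I would then verify that $U\cup F$ is a feedback vertex set of $G'$: a cycle of $G'-(U\cup F)$ contained in $V(G)$ is a cycle of $G$ and is hit by $U$; one contained in $V(Y_p)$ is hit by $F$ (if it uses the edge $\{v,w\}$ — possible only when $\{v,w\}\in E(G)$ — it contains an $x$--$y$ path in $Y_p$ whose first edge is $\{x,x'\}$, hence it meets $F$); and any remaining cycle would have interior vertices both in $V(G)\setminus\{v,w\}$ and in $P$, hence would have to pass through both $v$ and $w$ (in particular $v,w\notin U$) and would contain a $v$--$w$ path with interior in $P$, i.e.\ an $x$--$y$ path in $Y_p-F$ — impossible. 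Thus $(G',k')$ is a \yes-instance.

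For the converse, suppose $G'$ has a feedback vertex set $U'$ with $|U'|\le k'$. The subgraph of $G'$ induced by $P$ is exactly $Y_p$ with its two degree-one vertices $x,y$ removed, so $U'\cap P$ is a feedback vertex set of $Y_p-\{x,y\}$; since $x,y$ lie in no cycle of $Y_p$, every feedback vertex set of $Y_p-\{x,y\}$ is also one of $Y_p$, and therefore $|U'\cap P|\ge\ell$ by~\cref{obs:y:fvs}. Consequently $|U'\cap V(G)|=|U'|-|U'\cap P|\le k'-\ell=k$. Finally, every cycle of $G$ is a cycle of $G'$ and hence is met by $U'$ within $V(G)$, so $U'\cap V(G)$ is a feedback vertex set of $G$ of size at most $k$, whence $\I$ is a \yes-instance.

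The argument is essentially bookkeeping once the separator picture is in place; the only point needing a little care — the ``main obstacle'', such as it is — is the lower bound $|U'\cap P|\ge\ell$ in the converse, where one must ensure the solution cannot economise by deleting the identification vertices $v=x$ or $w=y$ ``for free''. This is precisely why we use that $x,y$ are pendant in $Y_p$: deleting them destroys no cycle, so restricting $U'$ to $P$ loses nothing. The case distinction isolating the ``crossing'' cycles in the forward direction is then routine.
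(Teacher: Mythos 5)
Your proof is correct and follows exactly the argument the paper leaves implicit: the paper states this as an Observation that follows ``immediately'' from \cref{obs:y:fvs}, via the same separator picture (the pair $\{v,w\}=\{x,y\}$ cuts the private gadget vertices off from the rest) and the same budget accounting. Your careful handling of the crossing cycles and of the lower bound $|U'\cap P|\geq 2(p+1)-4$ is a faithful, correctly worked-out elaboration of that intended argument.
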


\noindent
We are set to prove 
\cref{thm:fvs:kreg}.

\begin{proof}[Proof of~\cref{thm:fvs:kreg}]
 Let~$\I=(G,k,C)$ be an instance of~\fvsTsc{}
 where~$G$ is a $p$-regular Hamiltonian graph with Hamiltonian cycle~$C$
 with an even number~$n$ of vertices.
 Let~$v_1,\dots,v_n$ be the order in which~$C$ meets the vertices of~$G$.
 Then,
 for every~$i\in\set{n/2}$,
 make a~$Y_p$-insertion at~$v_{2i-1},v_{2i}$.
 \begin{figure}[t]
  \centering
  \begin{tikzpicture}
    \def\xr{0.9}
    \def\yr{0.9}
    \tikzpramble{}

    \theYk{a}{0}{0}
    \theYkpath{a}
    \node at (xax)[label=180:{$v_1$}]{};
    \node at (yax)[label=-90:{$v_2$}]{};

    \theYk{b}{5}{0}
    \theYkpath{b}
    \node at (xbx)[label=-90:{$v_3$}]{};
    \node at (ybx)[label=-90:{$v_4$}]{};
    
    \theYk{c}{11}{0}
    \theYkpath{c}
    \node at (xcx)[label=-90:{$v_{n-1}$}]{};
    \node at (ycx)[label=0:{$v_{n}$}]{};
    \draw[xedge] (xax) to (yax) to (xbx)  to (ybx)  to node[midway,fill=white]{$\cdots$}(xcx) to (ycx);
    \def\sh{1.25}
    \def\shx{1.0}
    \draw[xedge,rounded corners] (xax) to ($(xax)+(0,-\sh*\yr)$) to ($(ycx)+(0,-\sh*\yr)$)  to (ycx);
    \draw[xpath,rounded corners] (xax) to ($(xax)+(0,-\sh*\yr)$) to ($(ycx)+(0,-\sh*\yr)$)  to (ycx);
    \draw[xpath,rounded corners] (yax) to (xbx);
    \draw[xpath,rounded corners] (ybx) to node[midway,fill=white]{$\cdots$}(xcx);
    
    \draw[xpathx,rounded corners] 
   ($(ybx)+(0,-\teps)$) to node[midway,fill=white]{$\cdots$}($(xcx)+(0,-\teps)$) to ($(xcx)+(\teps,0)$) to 
   ($(ycx)+(-\teps,0)$) to  ($(ycx)+(-\teps,-\teps)$)
   to ($(ycx)+(-\teps,-\shx*\yr)$)
   to ($(xax)+(\teps,-\shx*\yr)$)
   to ($(xax)+(\teps,-\teps)$) to ($(xax)+(\teps,0)$) to ($(yax)+(-\teps,0)$)  to ($(yax)+(0,-\teps)$)
   to ($(xbx)+(0,-\teps)$) to ($(xbx)+(\teps,0)$) to ($(ybx)+(-\teps,0)$) to ($(ybx)+(0,-\teps)$);
  \end{tikzpicture}
  \caption{Illustration to the proof of~\cref{thm:fvs:kreg}.
  The magenta path depicts the Hamiltonian cycle before the $Y_p$-insertions,
  and the blue path depicts the Hamiltonian cycle after the~$Y_p$-insertions.}
  \label{fig:Ykinsertion}
 \end{figure}
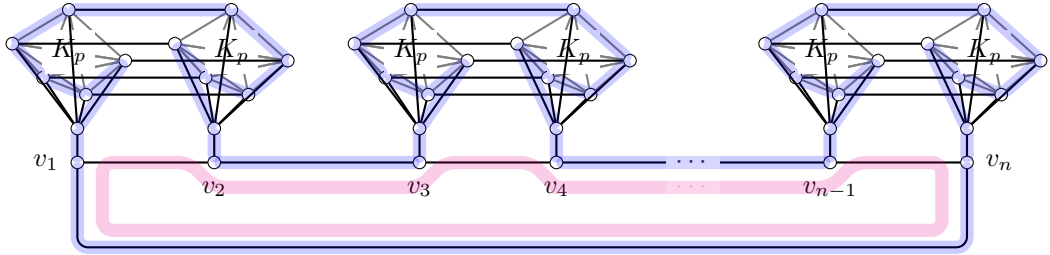
 Extend~$C$ to~$C'$ through every inserted~$Y_p$ as depicted in~\cref{fig:Ykinsertion}.
 Set~$k'\ceq k+\frac{n}{2}(2(p+1)-4)$.
 Due to~\cref{lem:Ykinsert},
 the instance~$(G',k',C')$ is equivalent with~$\I$.
 Moreover,
 $G'$ is $(p+1)$-regular.
 \lqed
\end{proof}
\section{Ordered and Connected Hamiltonian Graphs}
\label{sec:ordcon}

For every cycle,
and for every three vertices on it,
we can shift the start of the cycle and its orientation
to encounter the three vertices in any order.
Thus,
we get the following.

\begin{observation}
 \label{fact:ham3ho}
 Every Hamiltonian graph is 3-Hamiltonian-ordered.
\end{observation}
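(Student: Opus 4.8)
The plan is to prove \cref{fact:ham3ho} directly from the definition of $3$-Hamiltonian-ordered, using the fact that a Hamiltonian graph comes equipped with at least one Hamiltonian cycle. Recall that a graph $G$ is $3$-Hamiltonian-ordered if for \emph{every} ordered triple $(v_1,v_2,v_3)$ of distinct vertices there is a Hamiltonian cycle visiting them in exactly that cyclic order. So, fix an arbitrary triple of distinct vertices $(v_1,v_2,v_3)$ of $G$; I must exhibit a Hamiltonian cycle meeting them in this order.

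First I would take any Hamiltonian cycle $C$ of $G$, which exists by hypothesis. Since $C$ is a cycle on all of $V(G)$, it in particular passes through $v_1, v_2, v_3$; reading $C$ around from $v_1$ in one of its two directions, the three vertices appear in some cyclic order, either $(v_1,v_2,v_3)$ or $(v_1,v_3,v_2)$. In the first case $C$ already works. In the second case, traverse $C$ in the opposite direction: a cycle and its reverse are the same subgraph but encounter any three vertices in the opposite cyclic order, so the reversed traversal encounters them as $(v_1,v_2,v_3)$. Either way we obtain a Hamiltonian cycle (namely $C$ itself, as an unoriented subgraph) witnessing the required order, so $G$ is $3$-Hamiltonian-ordered.

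I do not expect any real obstacle here — this is the ``\emph{Clearly}''-style observation flagged in the paragraph preceding the statement, and the only subtlety worth stating cleanly is that a cyclic order on three points has exactly two possibilities, distinguished precisely by orientation, so one Hamiltonian cycle together with its reversal already covers both. The one point to be careful about in the write-up is the bookkeeping of ``cyclic order'' versus ``linear order'': the definition in the Preliminaries speaks of a cycle that ``encounters the vertices $v_1,\dots,v_p$ in this order,'' which for $p=3$ is naturally read cyclically (there is no distinguished start vertex on a bare cycle), so the two-case argument above is exactly what is needed and nothing more.
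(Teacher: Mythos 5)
Your argument is correct and is essentially the paper's own justification: the paper disposes of this observation in the sentence preceding it by noting that shifting the starting vertex and the orientation of a single Hamiltonian cycle realizes any order of three given vertices, which is exactly your two-case analysis. No further comment is needed.
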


\noindent
We start from here and prove inductively the following.

\begin{theorem}%
 \label{thm:fvs:pho}
 For every $p\geq 3$,
 \fvsTsc{} on $p$-Hamil\-to\-ni\-an-ordered graphs
 is \NP-hard,
 even if a Hamiltonian cycle is provided.
\end{theorem}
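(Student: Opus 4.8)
The plan is to prove the statement by induction on $p$. For the base case $p=3$, \cref{fact:ham3ho} says that every Hamiltonian graph is $3$-Hamiltonian-ordered, so the claim follows immediately from any of our earlier hardness results in which a Hamiltonian cycle is part of the input, for instance \cref{thm:fvs:4regplanaHam} (or \cref{thm:fvs:5regplanham}): a $4$-regular planar Hamiltonian graph together with its Hamiltonian cycle is in particular a $3$-Hamiltonian-ordered instance of \fvsTsc.

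For the inductive step I would assume \fvsTsc{} is \NP-hard on $p$-Hamiltonian-ordered graphs with a provided Hamiltonian cycle, take such an instance $(G,k,C)$ with $C=v_1v_2\cdots v_nv_1$, and build an equivalent instance $(G',k',C')$ in which $G'$ is $(p+1)$-Hamiltonian-ordered. The idea is to attach to $G$ a highly flexible ``routing'' structure whose only coupling to $G$ is through a small, well-understood interface (e.g.\ a perfect matching between designated terminal vertices of the new structure and $V(G)$), and to insert copies of $Y_p$ (or a similarly behaved toolbox gadget) on the interface so that the added part has a fixed, choice-free feedback-vertex-set cost. The feedback-vertex-set equivalence then follows exactly as in \cref{obs:RinsPlus3}\crefpairconjunction\cref{lem:Ykinsert}: $k'$ equals $k$ plus the total (prescribed) cost of the added gadgets, since each gadget forces a fixed number of deletions; and since each gadget admits a Hamiltonian $x$--$y$ path, we extend $C$ to a Hamiltonian cycle $C'$ of $G'$ by rerouting it through the inserted gadgets and the routing structure, in the style of the proofs of \cref{thm:fvs:5regplanham} and \cref{thm:fvs:kreg}.

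The heart of the argument — and the step I expect to be the main obstacle — is verifying that $G'$ is $(p+1)$-Hamiltonian-ordered. Given an arbitrary $(p+1)$-tuple $(w_1,\dots,w_{p+1})$ of distinct vertices of $G'$, I would distinguish cases by how many of the $w_i$ lie in the original graph $G$ and how many lie in the attached routing structure. The routing structure is chosen so that its own terminal and internal vertices can be traversed in any prescribed cyclic order; the delicate point is that $G$ is only $p$-Hamiltonian-ordered, so at most $p$ of the $w_i$ that lie in $G$ can be controlled directly. The key observation should be that whenever $q\le p+1$ of the $w_i$ lie in $G$, at least one of them can be ``handed off'' to the flexible part: applying $p$-Hamiltonian-orderedness of $G$ to a suitable $p$-subtuple gives a Hamiltonian cycle of $G$ realizing the desired relative order; cutting it yields a Hamiltonian path of $G$ whose endpoints are the two vertices matched to the two ``junction'' terminals; and closing up through the routing structure lets it absorb the remaining $w_i$ in the required order.

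The real work, and what I expect to dictate the precise design of the attached gadget, is the bookkeeping needed to choose the cut edge (equivalently, the pair of junction vertices) so that the junction terminals do not collide — or collide only ``harmlessly'' — with those $w_i$ that must appear in a prescribed order inside the routing structure, and to guarantee that the structure has enough routing freedom (more than a bare clique on the terminals) to handle every residual configuration; the cases where $w_i$ equal junction terminals, or are pinned to the ``wrong'' end of a required sub-path, are exactly where a naive construction fails and where extra gadgetry is needed.
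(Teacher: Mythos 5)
Your base case matches the paper's (\cref{fact:ham3ho} plus any earlier hardness result with a provided Hamiltonian cycle). The inductive step, however, has a genuine gap: the entire burden of the proof is to exhibit a concrete construction and to verify that the resulting graph $G'$ is $(p+1)$-Hamiltonian-ordered, and your proposal does neither --- the ``routing structure'' is described only in outline, and you yourself flag the verification as ``the main obstacle'' and concede that ``a naive construction fails.'' Moreover, the sparse gadget-based attachment you sketch (a matching interface decorated with $Y_p$-insertions) cannot work as described. The critical case is a $(p+1)$-tuple $(w_1,\dots,w_{p+1})$ lying \emph{entirely} inside $V(G)$: the $p$-Hamiltonian-orderedness of $G$ lets you control the relative order of only $p$ of these vertices, and there is no vertex available to ``hand off'' to the flexible part. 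To fix the position of the $(p+1)$-st vertex, the Hamiltonian cycle of $G'$ would have to leave $G$ and re-enter it between prescribed vertices, which requires an interface so dense that essentially every vertex of $G$ has many neighbours in the attached part --- at which point you are no longer performing a gadget insertion with a ``fixed, choice-free feedback-vertex-set cost'' in the style of \cref{obs:RinsPlus3} or \cref{lem:Ykinsert}, and the equivalence argument must change as well. (Note also that a $p$-ordered graph is $(p-1)$-connected, so any attachment through a bounded-size interface caps the achievable $p$.)

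The paper sidesteps all of this bookkeeping: \cref{constr:fvs:pho} joins $G$ to a clique $H=K_{3n}$ (every vertex of $G$ adjacent to every vertex of $H$) together with two further vertices $x,y$ adjacent to all of $V(H)$, and then the $(p+1)$-Hamiltonian-orderedness of $G'$ follows in one line from the Ore-type degree-sum criterion of Ng and Schelp (\cref{fact:degkho}), since every pair of non-adjacent vertices of $G'$ has degree sum at least $6n$ while $|V(G')|+2(p+1)-6\leq 6n-2$. The feedback-vertex-set equivalence is then immediate because any feedback vertex set of $G'$ must contain all but two vertices of $V(H)\cup\{x,y\}$, forcing $k'=k+3n$. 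If you want to salvage your approach, the missing ingredient is precisely such a sufficient condition (or an outright complete join), not a cleverer sparse gadget.
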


\begin{construction}
 \label{constr:fvs:pho}
 Let~$\I=(G,k,C)$ with~$G=(V,E)$ be an input instance of~\fvsTsc{} with Hamiltonian cycle~$C$ of~$G$.
 Let~$n\ceq |V|$.
 We construct an instance~$\I'\ceq (G',k',C')$ with~$k'\ceq 3n+k$ as follows.
 Let~$H\ceq K_{3n}$.
 Construct~$G'\ceq G*H$.
 Moreover,
 add two new vertices~$x,y$ to~$G'$
 and make~$x$ adjacent to all vertices in~$V(H)\cup\{y\}$ and
 $y$ adjacent to all vertices in~$V(H)\cup\{x\}$.
 Extend~$C$ through~$V(H)\cup\{x,y\}$ to obtain~$C'$.
 \cqed
\end{construction}

\begin{observation}%
 \label{obs:fvs:pho}
 Let~$\I'=(G',k',C')$ be the instance
 obtained from an input instance~$\I=(G,k,C)$ of~\fvsTsc{} 
 using~\cref{constr:fvs:pho}.
 If~$G$ is $p$-Hamiltonian-ordered for~$3\leq p\leq n$,
 then~$G'$ is~$(p+1)$-Hamiltonian-ordered.
\end{observation}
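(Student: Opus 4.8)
The plan is to prove directly that $G'$ is $(p+1)$-Hamiltonian-ordered: for an arbitrary sequence $(v_1,\dots,v_{p+1})$ of distinct vertices of $G'$ I build a Hamiltonian cycle of $G'$ meeting them in this cyclic order. First I would isolate the two structural features produced by~\cref{constr:fvs:pho}: the set $V(H)\cup\{x,y\}$ induces a \emph{complete} graph on $3n+2$ vertices (since $H=K_{3n}$, both $x$ and $y$ are adjacent to all of $V(H)$, and $x$ is adjacent to $y$), whereas every vertex of $V(H)$ is adjacent to every vertex of $V(G)$, while neither $x$ nor $y$ has a neighbour in $V(G)$. From this I extract an assembly principle: given vertex-disjoint paths $R_1,\dots,R_t$ ($t\ge 1$) in $G$ covering $V(G)$, and a partition of $V(H)\cup\{x,y\}$ into nonempty \emph{segments} $S_1,\dots,S_t$, each linearly ordered so as to begin and end with a vertex of $V(H)$ (so $x,y$ may sit anywhere inside a segment but not at its two ends), concatenating $R_1,S_1,R_2,S_2,\dots,R_t,S_t$ and closing up yields a Hamiltonian cycle of $G'$, and it meets a vertex set in exactly the cyclic order read along the paths and segments in this arrangement. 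Since $|V(H)\cup\{x,y\}|=3n+2$ is comfortably larger than everything else ($t\le p\le n$, and at most $p+1\le n+1$ of the $v_i$ lie outside $V(G)$; recall $n\ge p\ge 3$), we can always make every $S_i$ nonempty with $V(H)$-ends and with $x,y$ flanked by $V(H)$-vertices, and we can place the specified vertices of $V(H)\cup\{x,y\}$ inside the $S_i$ in any prescribed relative order (padding with unspecified $V(H)$-vertices wherever a boundary with $V(G)$ or with $x,y$ would otherwise be violated).

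It then remains to choose the $R_i$ and the distribution of vertices into the $S_i$. Let $A$ be the set of $v_i$ lying in $V(G)$, appearing as $a_1,\dots,a_m$ in the given order. If $m=0$, take a Hamiltonian path $R_1$ of $G$ (which exists since $p$-Hamiltonian-orderedness implies $G$ is Hamiltonian) and let the single segment $S_1:=V(H)\cup\{x,y\}$ run through $v_1,\dots,v_{p+1}$ in order. If $1\le m\le p$, pad $A$ by $p-m$ arbitrary further vertices of $V(G)$ (possible as $n\ge p$), apply the $p$-Hamiltonian-orderedness of $G$ to obtain a Hamiltonian cycle $C_G$ of $G$ meeting $A$ in the prescribed cyclic order, and cut $C_G$ into $t$ arcs, one arc per maximal block of consecutive $V(G)$-vertices of $(v_1,\dots,v_{p+1})$, cutting just before the first $a$ of each block; then the $i$-th arc $R_i$ contains precisely the $a$'s of the $i$-th block in the right order, the non-$V(G)$ vertices the prescribed order places between block $i$ and block $i+1$ go into $S_i$ in that order, and the remaining unspecified $V(H)$-vertices are spread over the $S_i$ to satisfy the nonemptiness and end conditions.

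The delicate case, which I expect to be the main obstacle, is $m=p+1$, i.e.\ all of $v_1,\dots,v_{p+1}$ lie in $V(G)$: here $p$-Hamiltonian-orderedness handles only $p$ of them, so a single $V(G)$-path need not realise the required order. I would apply $p$-Hamiltonian-orderedness to $v_1,\dots,v_p$ to get a Hamiltonian cycle $C_G$ of $G$ meeting them in order, and locate $v_{p+1}$ on $C_G$, say between $v_j$ and $v_{j+1}$ cyclically. If $j=p$, then $C_G$ already meets $v_1,\dots,v_{p+1}$ in cyclic order and one arc plus the single segment $S_1=V(H)\cup\{x,y\}$ finishes. Otherwise $1\le j\le p-1$, and the key move is that the Hamiltonian cycle of $G'$ may traverse $V(G)$ in \emph{three} subpaths and reorder them through the clique: cut $C_G$ at the three pairwise distinct edges entering $v_{p+1}$, $v_{j+1}$, and $v_1$, obtaining nonempty arcs $R_1\ni v_1,\dots,v_j$, $R_2\ni v_{j+1},\dots,v_p$, and $R_3\ni v_{p+1}$, and reassemble in the order $R_1,S_1,R_2,S_2,R_3,S_3$ with the $S_i$ absorbing $V(H)\cup\{x,y\}$ as above; this produces the cyclic order $v_1,\dots,v_j,v_{j+1},\dots,v_p,v_{p+1}$, as required. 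In every case we obtain a Hamiltonian cycle of $G'$ meeting $v_1,\dots,v_{p+1}$ in the prescribed cyclic order, so $G'$ is $(p+1)$-Hamiltonian-ordered. The only real work beyond this outline is the bookkeeping in the second paragraph — checking that the prescribed interleaving of $V(G)$-blocks with non-$V(G)$ vertices can always be matched by a choice of arcs and segments — but the slack afforded by $|H|=3n$ makes this routine.
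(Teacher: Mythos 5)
Your proof is correct, but it takes a genuinely different route from the paper. The paper disposes of this observation in two lines by invoking the Ng--Schelp sufficient condition (\cref{fact:degkho}): every non-adjacent pair $v,w$ in $G'$ satisfies $\deg_{G'}(v)+\deg_{G'}(w)\geq 6n\geq (4n+2)+2(p+1)-6$, since every vertex of $V(G)$ gains $3n$ neighbours from the join with $H$ and $x,y$ have degree $3n+1$; note that this argument does not even use the hypothesis that $G$ is $p$-Hamiltonian-ordered. You instead build the required Hamiltonian cycle explicitly, exploiting that $V(H)\cup\{x,y\}$ is a clique joined completely to $V(G)$ (except for $x,y$), splitting on the number $m$ of prescribed vertices lying in $V(G)$, and genuinely using the $p$-Hamiltonian-orderedness of $G$. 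Your case analysis is complete and sound: the assembly principle is valid, the block/arc matching for $1\leq m\leq p$ works because cutting $C_G$ just before the first prescribed vertex of each block preserves both the internal and the cyclic order, and the counting has ample slack ($t\leq(p+1)/2$ segments need at most $p+1\leq n+1$ padding vertices out of at least $2n-1$ unspecified ones). The three-arc rerouting in the case $m=p+1$ is the one non-routine step and is handled correctly: the edges entering $v_1$, $v_{j+1}$, $v_{p+1}$ are pairwise distinct, and reassembling the arcs through the clique realises the order $v_1,\dots,v_p,v_{p+1}$. What each approach buys: the paper's is far shorter but rests entirely on the imported degree condition, whereas yours is self-contained and elementary, makes transparent why the clique of size $3n$ gives enough room to reorder, and would survive even if $H$ were much smaller than $K_{3n}$ (where the degree condition could fail), at the cost of the bookkeeping you acknowledge.
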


\noindent
Our proof of~\cref{obs:fvs:pho} employs the following.

\begin{fact}[\cite{NgS97}]
 \label{fact:degkho}
 Let~$G=(V,E)$ be a graph with~$|V|\geq 3$ and let~$p\in\set[3]{n}$.
 If $\deg(v)+\deg(w)\geq |V|+2p-6$ for every non-adjacent~$v,w\in V$,
 then~$G$ is~$p$-Hamiltonian-ordered.
\end{fact}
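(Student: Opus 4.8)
The plan is to establish this as an Ore-type theorem via the classical longest-cycle (rotation–extension) method, modified so that the prescribed cyclic order of the $p$ distinguished vertices is never disturbed. A useful sanity check guides the whole argument: for $p=3$ the hypothesis reads $\deg(v)+\deg(w)\geq |V|$, which is exactly Ore's condition, and since any Hamiltonian cycle, read in its two directions, realises both cyclic orders of any three vertices, the case $p=3$ coincides with Ore's theorem together with~\cref{fact:ham3ho}. The real content is therefore to show that each distinguished vertex beyond the first three costs exactly $2$ in the degree sum, i.e.\ that the order constraint blocks at most $2(p-3)$ of the moves available in the unordered argument.

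Fix an arbitrary sequence $v_1,\dots,v_p$ of distinct vertices; I must produce a Hamiltonian cycle meeting them in this cyclic order. Call a cycle \emph{order-respecting} if it contains $v_1,\dots,v_p$ and encounters them in the prescribed cyclic order. First I would note that the hypothesis forces high connectivity (a standard consequence of such degree-sum bounds), which both guarantees that at least one order-respecting cycle exists and ensures that every component of $G$ minus a cycle has a neighbour on that cycle. Among all order-respecting cycles let $C$ be one of maximum length, fix an orientation, and write $x^{+}$ for the successor of $x$ along $C$. The goal is $V(C)=V$; assume not and fix a component $H$ of $G-V(C)$ with attachment set $X\ceq N(V(H))\cap V(C)$ and successor set $X^{+}\ceq\{x^{+}:x\in X\}$.

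Two order-preserving moves drive the argument. \emph{Insertion:} if some $u\in V(H)$ is adjacent to both ends of an edge $\{x,x^{+}\}$ of $C$, then replacing that edge by the path $x\,u\,x^{+}$ produces a strictly longer order-respecting cycle, contradicting maximality. \emph{Crossing:} the standard lemmas give that $X^{+}$ is independent and disjoint from $N(V(H))$, so for any $u\in V(H)$ and any $w\in X^{+}$ the pair $u,w$ is non-adjacent; in the purely unordered setting these facts yield the bound $\deg(u)+\deg(w)\leq |V|-1$, which already contradicts Ore. In the ordered setting the only loss is that a crossing which would reverse an arc of $C$ containing a distinguished vertex cannot be used, so a bounded number $b$ of vertices of $X^{+}$ are permitted to be neighbours of $u$ or $w$ after all; this weakens the bound to $\deg(u)+\deg(w)\leq |V|-1+b$.

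The crux — and the step I expect to be the main obstacle — is to prove $b\leq 2(p-3)$. The idea is that an arc of $C$ can be reversed by a crossing without changing the cyclic order precisely when its interior contains no distinguished vertex, so the ``dangerous'' positions are confined to the two edges of $C$ incident to each $v_i$; a careful case analysis (according to whether each $v_i$ lies in $X$, in $X^{+}$, or in the interior of an arc, and which direction a reversal would flip) should show that three of the distinguished vertices can always be oriented away ``for free'' and the remaining $p-3$ contribute at most $2$ apiece. Granting $b\leq 2(p-3)$ gives $\deg(u)+\deg(w)\leq |V|-1+2(p-3)=|V|+2p-7<|V|+2p-6$, contradicting the hypothesis. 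Hence $G-V(C)$ is empty, $C$ is a Hamiltonian cycle meeting $v_1,\dots,v_p$ in order, and since the sequence was arbitrary, $G$ is $p$-Hamiltonian-ordered.
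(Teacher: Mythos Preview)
The paper does not prove this statement: it is quoted from~\cite{NgS97} and used as a black box. So there is no proof in the paper to compare against; you are effectively trying to reconstruct the Ng--Schultz argument.

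Your framework (take a longest order-respecting cycle and run a P\'osa/Ore rotation--extension argument) is the natural one, but the sketch has two real gaps.

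\emph{Existence of an initial order-respecting cycle.} You assert that high connectivity ``guarantees that at least one order-respecting cycle exists''. The degree-sum hypothesis does force roughly $(2p-4)$-connectivity, but connectivity alone does not give a cycle through $p$ prescribed vertices in a prescribed cyclic order; that is essentially what $p$-orderedness \emph{means}, and the converse of ``$p$-ordered $\Rightarrow (p-1)$-connected'' fails. A clean way to get off the ground is induction on~$p$: the hypothesis for~$p$ implies the hypothesis for~$p-1$, so inductively there is already a Hamiltonian cycle with $v_1,\dots,v_{p-1}$ in order, and the remaining task is to relocate~$v_p$.

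\emph{The bound $b\le 2(p-3)$.} Here the combinatorics is off. You write that a crossing is safe ``precisely when its interior contains no distinguished vertex''; in fact, reversing an arc containing \emph{at most one} of the $v_i$ still preserves their cyclic order, while two or more may destroy it. More importantly, for a fixed anchor~$y$ (and $w=y^{+}$), the crossing with~$x$ reverses the entire forward arc $[y^{+},x]$. As~$x$ runs around~$C$, this arc eventually swallows all but one of the $v_i$'s, so the number of forbidden~$x$ is not bounded by any function of~$p$ alone---it can be almost all of $|V(C)|$. Your picture that ``dangerous positions are confined to the two edges of~$C$ incident to each~$v_i$'' does not match what the crossing actually does, and the claim that three distinguished vertices come ``for free'' is unsupported. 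Bringing the loss down to exactly $2(p-3)$ requires a substantially more delicate argument---for instance, choosing both the anchor and the direction of the crossing depending on where the $v_i$ sit, or, in the inductive set-up above, analysing how to push a single distinguished vertex across one arc boundary at a cost of exactly~$+2$ in the degree sum.
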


\begin{proof}[Proof of~\cref{obs:fvs:pho}]
 Note that the only two non-adjacent vertices~$v,w$ are either 
 (i) both from~$G$ or
 (ii) one from~$G$ and one from~$x,y$.
 In either case,
 we have that~$\deg_{G'}(v)+\deg_{G'}(w)\geq 6n\geq (4n+2)+2(p+1)-6$.
 Thus,
 by~\cref{fact:degkho},
 $G'$ is~$(p+1)$-Hamiltonian-ordered.
 \lqed
\end{proof}

\begin{lemma}%
 \label{lem:fvs:pho}
 Let~$\I'=(G',k',C')$ be the instance
 obtained from an input instance~$\I=(G,k,C)$ of~\fvsTsc{} 
 using~\cref{constr:fvs:pho}.
 Then,
 $\I$ is a \yes-instance 
 if and only if
 $\I'$ is a \yes-instance.
\end{lemma}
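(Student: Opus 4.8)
The plan is to establish the equivalence $\I \Leftrightarrow \I'$ by a standard ``clique blow-up'' argument, handling the two directions separately; the Hamiltonian cycle $C'$ is irrelevant here, since a yes/no answer depends only on whether $G'$ has a feedback vertex set of size at most $k'$.

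For ($\Rightarrow$): given a feedback vertex set $U$ of $G$ with $|U|\le k$, I would take $U'\ceq U\cup V(H)$, so that $|U'|\le k+|V(H)|=k+3n=k'$. Then $G'-U'$ is the disjoint union of $G-U$ (a forest) with the single edge $\{x,y\}$, because $x$ and $y$ have no neighbour in $V(G)$ and all of their remaining neighbours lie in $V(H)$, which has been deleted; hence $G'-U'$ is acyclic. I would flag one subtlety here: one must delete \emph{all} of $H$ and cannot afford to keep a vertex $h\in V(H)$, as $\{h,x,y\}$ would then form a triangle.

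For ($\Leftarrow$): the structural observation driving the argument is that $W\ceq V(H)\cup\{x,y\}$ induces a complete graph $K_{3n+2}$ in $G'$ (all pairs inside $H$ are adjacent, and $x,y$ are each adjacent to all of $V(H)$ and to each other). Given a feedback vertex set $U'$ of $G'$ with $|U'|\le k'$, the graph $G'[W\setminus U']$ is complete and must be acyclic, so at most two of its vertices survive, i.e.\ $|U'\cap W|\ge 3n$. Since $W\cap V(G)=\emptyset$, this yields $|U'\cap V(G)|\le k'-3n=k$. I then set $U\ceq U'\cap V(G)$ and use that $G'[V(G)]=G$ (again because $x,y$ have no neighbour in $V(G)$), so that $G'[V(G)\setminus U']=G-U$; being an induced subgraph of the acyclic graph $G'-U'$, it is acyclic, and thus $U$ is a feedback vertex set of $G$ of size at most $k$.

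I do not expect a genuine obstacle: once the clique on $W$ is identified, the bookkeeping is routine. The only points needing care are the two already noted --- deleting all of $H$ rather than all but one vertex in the forward direction, and verifying $G'[V(G)]=G$ in the backward direction.
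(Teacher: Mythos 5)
Your proof is correct and follows essentially the same route as the paper's: in the forward direction both delete $F\cup V(H)$, and in the backward direction both use that $V(H)\cup\{x,y\}$ induces a clique on $3n+2$ vertices, forcing at least $3n$ deletions there and leaving at most $k$ deletions inside $G$. You merely spell out the details (the surviving edge $\{x,y\}$, the fact that $G'[V(G)]=G$) that the paper leaves implicit.
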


\begin{proof}
 \RD{}
 Let~$F\subseteq V(G)$ be a size-$k$ feedback vertex set of~$G$.
 Then,
 $F'\ceq F\cup V(H)$ is a feedback vertex set of~$G'$ of size~$k+3n=k'$.
 
 \LD{}
 Let~$F'\subseteq V(G')$ be a vertex set of~$G'$ of size~$k'$.
 By construction,
 we know that~$|F'\cap (V(H)\cup\{x,y\})|\geq 3n$.
 Let~$F\ceq F'\cap V(G)$.
 We know that~$|F|\leq k$,
 and~$G-F$ is acyclic.
 \lqed
\end{proof}

\noindent
We are set to prove
\cref{thm:fvs:pho}.

\begin{proof}[Proof of~\cref{thm:fvs:pho}]
 We know that \fvsTsc{} is~\NP-hard on Hamiltonian graphs and thus on~$3$-Hamiltonian graphs (\cref{fact:ham3ho})
 with a provided Hamiltonian cycle.
 We apply~\cref{constr:fvs:pho} to obtain from an
 instance~$\I$ with a~$p$-Hamiltonian-ordered graph and a Hamiltonian cycle
 an equivalent (\cref{lem:fvs:pho}) instance with
 a~$(p+1)$-Hamiltonian-ordered graph (\cref{obs:fvs:pho})
 and a Hamiltonian cycle.
 The statement finally follows by induction.
 \lqed
\end{proof}
Recall that
every~$p$-Hamiltonian-ordered graph
is trivially also~$p$-ordered Hamiltonian.
In addition,
the following holds true.

\begin{fact}[\cite{NgS97}]
 If a graph is $p$-ordered for any~$p\geq 3$,
 then it is $(p-1)$-connected.
\end{fact}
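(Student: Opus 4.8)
The plan is to argue by contradiction: suppose $G$ is $p$-ordered but not $(p-1)$-connected, so there is a separator $S\subseteq V(G)$ with $|S|\leq p-2$ such that $G-S$ has at least two components $A$ and $B$. The idea is to pick a $p$-tuple of vertices whose required cyclic order simply cannot be realised by any cycle, because a cycle must cross the separator $S$ an even number of times, and each ``excursion'' of the cycle into $A$ or into $B$ uses up at least one vertex of $S$ at each end.

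First I would handle the trivial degenerate cases: if $|V(G)|<p$ there is nothing to prove in the relevant sense, and if $G$ is disconnected the claim is immediate, so assume $G$ is connected and has at least $p$ vertices. Then I would fix a minimum separator $S$ with $s\ceq|S|\leq p-2$, fix two distinct components $A,B$ of $G-S$, and choose vertices $a_1,\dots,a_t\in A$ and $b_1,\dots,b_t\in B$ alternating in the tuple, where $t$ is as large as the sizes of $A$ and $B$ allow but at least $\lceil p/2\rceil$; I then pad the tuple up to length $p$ with vertices of $S$ placed so that consecutive $A$- and $B$-vertices in the tuple are \emph{not} separated by an $S$-vertex. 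Concretely, I would arrange the tuple so that it contains at least $\lceil (p - s)/2\rceil \geq \lceil p/2 \rceil - \lfloor s/2\rfloor$ ``switches'' between $A$ and $B$ that are not cushioned by an element of $S$. Any cycle visiting the tuple in this order must, between two such consecutive vertices (one in $A$, the next in $B$), pass through $S$; and distinct such transitions must use internally disjoint subpaths of the cycle, hence distinct vertices of $S$. Counting the transitions and comparing with $|S|=s\leq p-2$ yields the contradiction.

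The main obstacle I anticipate is the bookkeeping in the padding step: one has to be careful that $A$ and $B$ are large enough to supply the required number of vertices (if one of them is tiny, the separator is correspondingly tiny and the counting still works, but this needs a short separate check), and one has to phrase the ``a cycle switching between $A$ and $B$ must spend a fresh vertex of $S$ each time'' argument cleanly — this is where one invokes that the two arcs of the cycle between consecutive tuple-vertices are internally vertex-disjoint and that every $A$-$B$ walk meets $S$. Once that lemma about transitions is isolated, the rest is a pigeonhole count: at least $\lceil p/2\rceil$-ish forced transitions versus at most $p-2$ available separator vertices, giving the contradiction for all $p\geq 3$. Since this fact is quoted from~\cite{NgS97} and only used as a black box in the excerpt, I would in practice just cite it, but the sketch above is the standard proof.
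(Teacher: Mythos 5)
First, note that the paper itself does not prove this statement: it is quoted from \cite{NgS97} and used as a black box, so there is no in-paper argument to compare against. Your overall strategy --- take a separator~$S$ of size~$s\leq p-2$, two components~$A,B$ of~$G-S$, and a $p$-tuple engineered to force many crossings of~$S$ --- is indeed the standard route, and the lemma you isolate is correct: consecutive tuple vertices lying in different components of~$G-S$ force an internal vertex of~$S$ on the corresponding arc of the cycle, and distinct arcs are internally disjoint, hence yield distinct vertices of~$S$.

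However, the final pigeonhole as you state it does not close. You compare the number of uncushioned $A$--$B$ transitions, bounded below by roughly $\lceil (p-s)/2\rceil$ (or at best $p-s-1$ if the non-separator vertices form one alternating block), against $|S|=s\leq p-2$. For $s$ close to $p-2$ this inequality fails: for $s=p-2$ you force only one or two transitions against $p-2$ available separator vertices, and even for $p=3$, $s=1$ you get ``$1>1$'', which is false; so as written the count never yields the contradiction for separators near the critical size. The missing observation is that the separator vertices you place \emph{into} the tuple are themselves visited by the cycle at their designated positions and therefore cannot reappear as internal vertices of the transition arcs. Counting them, the cycle must contain $s+(\text{number of uncushioned transitions})$ distinct vertices of~$S$, so a \emph{single} uncushioned transition already contradicts $|S|=s$. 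This also dissolves your worry about small components and removes all bookkeeping: put all of~$S$ into the tuple, one vertex of~$A$ and one vertex of~$B$ adjacent to each other in the cyclic order, and arbitrary filler elsewhere (possible since $n\geq p\geq s+2$); the single $A$--$B$ arc needs an internal vertex of~$S$, but every vertex of~$S$ is already spoken for elsewhere on the cycle. No alternation, padding arithmetic, or case distinction on $|A|,|B|$ is needed.
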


\noindent
Hence,
we get the following.

\begin{corollary}
 \label{cor:ordconhard}
 \fvsTsc{} is \NP-hard on~$p$-ordered and $(p-1)$-connected Hamiltonian graphs, 
 $p\geq 3$,
 even if a Hamiltonian cycle is provided.
\end{corollary}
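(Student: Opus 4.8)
The plan is to obtain \cref{cor:ordconhard} as a direct consequence of \cref{thm:fvs:pho} together with the two structural facts about graph classes that the paper has already assembled, so that no new gadget or reduction is needed. First I would recall that by \cref{thm:fvs:pho}, for every fixed $p\geq 3$ the problem \fvsTsc{} is \NP-hard on the class of $p$-Hamiltonian-ordered graphs, and moreover the \NP-hardness instances produced there come with an explicitly given Hamiltonian cycle $C'$. This is the only ``hard'' ingredient, and it is imported wholesale.

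The remaining work is purely inclusion-chasing. The plan is to observe that every $p$-Hamiltonian-ordered graph is $p$-ordered Hamiltonian (this is exactly \cref{fact:phopoh}), and that every $p$-ordered graph with $p\geq 3$ is $(p-1)$-connected (the final \cref{fact} in the excerpt, from~\cite{NgS97}). Hence the class of $p$-Hamiltonian-ordered graphs is contained in the class of $p$-ordered Hamiltonian graphs, which in turn is contained in the class of $(p-1)$-connected Hamiltonian graphs. Since \NP-hardness of a problem on a graph class $\mathcal{C}$ immediately transfers to any class $\mathcal{C}'\supseteq\mathcal{C}$ (the same instances witness hardness), the \NP-hardness statement of \cref{thm:fvs:pho} lifts to both larger classes. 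The Hamiltonian-cycle-is-provided refinement also transfers verbatim, because the witnessing instances are literally the same graphs and the cycle $C'$ is still a valid Hamiltonian cycle of them.

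Concretely, the proof I would write is essentially one paragraph: fix $p\geq 3$; by \cref{thm:fvs:pho} \fvsTsc{} is \NP-hard on $p$-Hamiltonian-ordered graphs even with a provided Hamiltonian cycle; by \cref{fact:phopoh} every such graph is $p$-ordered Hamiltonian, and by \cref{fact}~(\cite{NgS97}) every $p$-ordered graph is $(p-1)$-connected, so every $p$-Hamiltonian-ordered graph is both $p$-ordered Hamiltonian and $(p-1)$-connected Hamiltonian; therefore the same family of hard instances shows \fvsTsc{} is \NP-hard on $p$-ordered Hamiltonian graphs and on $(p-1)$-connected Hamiltonian graphs, again with a provided Hamiltonian cycle.

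There is essentially no obstacle here — the statement is a corollary in the genuine sense, and the only thing to be careful about is making sure the quantifier over $p$ is handled cleanly (the hardness is ``for every fixed $p\geq 3$'', uniformly, not ``for some $p$'') and that one does not accidentally claim membership in a \emph{smaller} class. If anything, the mild subtlety worth a sentence is that ``$(p-1)$-connected'' should be read with the convention that a graph on few vertices may need the connectivity parameter to be capped at $|V|-1$; but since \cref{constr:fvs:pho} blows the instance up to $\Theta(n)$ vertices with a $K_{3n}$ joined in, the produced graphs are large and highly connected, so this degenerate case never arises and no extra care is required.
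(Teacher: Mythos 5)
Your proposal is correct and matches the paper's own reasoning exactly: the corollary is derived by importing the hardness of \cref{thm:fvs:pho} and transferring it along the class inclusions given by \cref{fact:phopoh} and the cited fact from~\cite{NgS97} that $p$-ordered implies $(p-1)$-connected. Nothing further is needed.
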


\section{Conclusion}

\fvsTsc{}
remains \NP-hard on quite restricted cases 
even if the graph is additionally Hamiltonian and
a Hamiltonian cycle is provided.
Which problems are
\NP-hard on Hamiltonian graphs 
and
become non-trivially tractable if a Hamiltonian cycle is provided?
Which problems
become non-trivially tractable 
on~$p$-Hamiltonian-ordered graphs?
As to the class of $p$-Hamil\-tonian-ordered graphs,
we are not aware of a computational complexity study
on this class next to ours.
Further,
it is interesting to study
\fvsTsc{} on the intersections of the classes of
regular graphs,
planar graphs,
$p$-Hamiltonian-ordered,
and
$p$-ordered Hamiltonian graphs.

{
\begingroup
  \renewcommand{\url}[1]{\href{#1}{$\ExternalLink$}}
  \newcommand*{\doi}[1]{\href{http://dx.doi.org/#1}{$\ExternalLink$}}
  \bibliography{fvs-ham-arxiv}
\endgroup
}

\end{document}